\theoremstyle{plain}
\newtheorem{thm}{Theorem}
\newtheorem{lem}{Lemma}
\newtheorem{clm}{Claim}
\newtheorem{cor}{Corollary}
\theoremstyle{definition}
\newtheorem{Def}{Definition}
\newcommand{\cfin}{c^{\mathrm{f}}}
\newcommand{\bfin}{b^{\mathrm{f}}}
\algrenewcommand{\algorithmiccomment}[1]{\texttt{\textbackslash\textbackslash\ #1}}
\author{Riccardo Colini-Baldeschi\thanks{
		Sapienza University of Rome, Via ariosto 25,
		I-00185 Rome, Italy,\newline
		Email: \texttt{riccardo.colini@dis.uniroma1.it}
		}
        \and 
        \hspace{-1em}Monika Henzinger\thanks{
        University of Vienna, Universit\"atsstra\ss e 10/9,
        A-1090 Vienna, Austria,\newline
        Email: \texttt{monika.henzinger@univie.ac.at}}
        \and 
        \hspace{-1em}Stefano Leonardi\thanks{
        Sapienza University of Rome, Via ariosto 25,
        I-00185 Rome, Italy,\newline 
		Email: \texttt{stefano.leonardi@dis.uniroma1.it}
        }
        \and
        \hspace{-1em}Martin Starnberger\thanks{
        University of Vienna, Universit\"atsstra\ss e 10/9,
        A-1090 Vienna, Austria,\newline
        Email: \texttt{martin.starnberger@univie.ac.at}}
}
\title{On Multiple Round Sponsored Search Auctions with Budgets}
\begin{document}

\maketitle
\thispagestyle{empty}

\begin{abstract}
In a sponsored search auction the advertisement slots on a search result page  are generally
ordered by click-through rate. Bidders have a valuation, which is usually assumed to be linear in the click-through rate, a budget constraint, and receive at most one slot per search result page (round).
We study multi-round sponsored search auctions, where the different rounds are linked
through the budget constraints of the bidders and the valuation of a bidder for all rounds is the sum
of the valuations for the individual rounds. 
All mechanisms published so far
either study one-round sponsored search auctions~\cite{unit-demand,Ashlagi09positionauctions} or
the setting where every round has only one slot and all slots have the same click-through rate, which is identical to a multi-item auction~\cite{DLNcorrected}.

This paper contains the following three results: 
(1)~We give the first mechanism for the multi-round sponsored search problem where different slots have different click-through rates.
Our mechanism is incentive compatible in expectation,
individually rational in expectation, Pareto optimal in expectation, and also
ex-post Pareto optimal for each realized outcome.
(2)~Additionally we study the combinatorial setting, where each bidder
is only interested in a subset of the rounds.
We give a deterministic, incentive compatible, individually rational, and Pareto optimal mechanism for 
the setting where all slots have the same click-through rate.
(3)~We present an impossibility result for auctions where bidders have diminishing marginal valuations.
Specifically, we show that even for the multi-unit (one slot per round) setting there is no incentive compatible, individually rational, and Pareto optimal mechanism for private diminishing marginal valuations and public budgets.
\end{abstract}

\newpage
\setcounter{page}{1}

\section{Introduction}
In {\em sponsored search} (or {\em adword}) auctions advertisers bid on keywords. The  valuation of these
bidders for the  keywords is used in the generalized second price auction, which is utilized by firms such as Google, Yahoo, and Microsoft~\cite{Edelman05internetadvertising}.  The  valuation of the bidders is private  knowledge.  
Slots for ads are usually ordered  by their public click-through rate (CTR), which is decreasing by the position of the slots on a search result page (round). Each bidder is assigned  at most one slot in a round
and his valuation for a slot is assumed to depend linearly on the CTR. 
Moreover, valuations are assumed to be additive, i.e., the total valuation of a bidder is equal to the sum of his valuations for all the slots that are assigned to him.

A further key ingredient of an adword  auction is that  bidders specify a budget on the payment charged for the ads, linking the different rounds. The deterministic Vickrey auction~\cite{vickrey-61} was designed to maximize social welfare in settings where all items are different and bidders have arbitrary valuation functions. However, 
deterministic and incentive compatible auctions that maximize social welfare do not exist for the multi-unit\footnote{In the {\em multi-unit} case multiple {\em homogeneous} items are sold.} case for general valuations if there are private budget constraints.  This was shown in the seminal paper by Dobzinski, Lavi, and Nisan~\cite{DBLP:conf/focs/DobzinskiLN08,DLNcorrected}. They considered the multi-unit case with additive valuations, which in the sponsored search setting corresponds to slots with identical CTR, each round with only one slot for sale. They gave an incentive compatible auction  based on Ausubel's  ascending clinching auction \cite{Ausubel04} that produces a Pareto optimal allocation if budgets are {\em public}. 
They also showed that this assumption is strictly needed, i.e., that no deterministic mechanism for {\em private} budgets exists if we insist on incentive compatibility and on obtaining an allocation that achieves some form of efficiency.  Pareto optimality is a basic notion for the efficiency of an allocation. It seems to be the least one should aim for. If an allocation was not Pareto optimal then bidders could trade amongst themselves and improve their utilities, thus increasing efficiency.

We show that the multi-unit auction of \cite{DBLP:conf/focs/DobzinskiLN08,DLNcorrected} can be adapted to deal with {\em heterogeneous} items as in the real-life setting of adword auctions, thus, allowing us to study adword auctions with multiple slots, multiple rounds, and budgets. To the best of our knowledge we are the first to do so.
We specifically model the case of several slots with different CTR available for each round and a bound on the number of slots (usually one) that can be allocated to a bidder per round.
Since the impossibility result of~\cite{DBLP:conf/focs/DobzinskiLN08,DLNcorrected} for multi-unit auctions applies also to the adword setting, 
achieving deterministic, incentive compatible, individually rational, and Pareto optimal auctions is only possible if budgets are public. 
In this paper, we first provide a deterministic, incentive compatible, individually rational, and Pareto optimal auction that provides a fractional allocation for the case of one round where slots are divisible.  We then show how to  probabilistically round  this fractional allocation for the divisible case to an integer allocation for the indivisible case with multiple rounds (i.e., the adword setting) and get an auction that is incentive compatible in expectation, individually rational in expectation, and Pareto optimal.  

Furthermore, we address the more general case of combinatorial auctions with  bidders having a  non-zero identical  valuation only on a {\em subset} of the rounds. This case can arise, for example, if each round represents a different keyword.  Valuations are additive and each bidder is assigned at most one slot for a given round, but we restrict the  model by allowing only identical slots  for each round, i.e., we require that all slots have the same CTR. 
This setting extends the model  considered by \cite{FLSS11}
to multiple slots. We  present  a variation of the clinching auction that is deterministic,  incentive compatible, individually rational, and Pareto optimal.

Finally, we also study non-additive valuations, namely valuations with diminishing marginal valuations. 
Diminishing marginal valuation functions are widely used to model auction settings with marginal utilities being positive non-increasing functions of the number of items already 
allocated to the bidders.  We  show that even in the multi-unit (one slot per round) case there is
no  incentive compatible, individually rational, and Pareto optimal auction for private diminishing marginal valuations
and public budgets.

{\bf Technical contributions.} We briefly describe our technical contributions.

(1) The clinching auction of~\cite{DBLP:conf/focs/DobzinskiLN08,DLNcorrected} repeatedly increases the price for an item and checks whether the demand still allows to sell all 
the remaining unsold items. We extend this to adword auctions with {\em divisible} slots as follows: We define a price ``per capacity'' and give different weights to different slots depending on their CTRs.
To perform the check whether all remaining unsold items can still be sold we solve suitable linear programs. 
To show the Pareto optimality of the modified auction we give a novel characterization of Pareto optimality for adword auctions. We then show that if the outcome of the auction 
did not fulfill this new characterization then one  of the linear programs solved by the auction would not have computed an optimal solution. Since this is not possible, 
it follows that the outcome is Pareto optimal.

(2) After solving the divisible case we round the outcome for the divisible case to the indivisible case. This requires a novel swapping algorithm that guarantees that  each bidder receives at most one slot of each round.

(3) The single-valued combinatorial auction of \cite{FLSS11} solves the combinatorial auction problem over rounds, where each round has only {\em one} slot. We extend their techniques to the multi-slot per round setting as follows:
(a) We extend their B-matchings based approach by giving capacities, equal to the number of unsold slots, to nodes that represent keywords.
(b) We extend the concept of trading paths, which in turn allows us to give a characterization of Pareto optimality for the multi-slot case. 
While in the single-slot case it is sufficient to restrict the attention to simple trading paths, in our case there might be trading options where the same bidder or item can appear many times along the same path. The crucial insight is that there always exists a simple trading path whenever there exists a non-simple one.  

(4) For the impossibility proof for diminishing marginal valuations we consider a simple scenario of two bidders and two one-slot rounds, i.e., this is a two-unit setting. 
We show that under certain conditions on the marginal valuations and the budgets, any incentive compatible, individually rational, and Pareto optimal auction must assign both rounds to one of the two bidders.  However, we also present an example where the other bidder will instead receive one  of the rounds if he overbids  on the second marginal valuation.
 
{\bf Further related work.}
Ascending clinching auctions are used in the FCC spectrum auctions, see~\cite{Milgrom98putingauction,ausubel-milgrom,Ausubel04}.
For a motivation of our adword auctions see~\cite{Nisan09} on Google's auction for TV ads.
Fiat et al.~\cite{FLSS11} studied an extension of the multi-unit case of~\cite{DBLP:conf/focs/DobzinskiLN08} to a combinatorial setting where items are distinct and different bidders may be interested in different items.  The auction presented in \cite{FLSS11} is incentive compatible and Pareto optimal for additive valuations and single-valued bidders (i.e., every bidder does not distinguish between the keywords in his interest set). This result is possible only if the sets of interest are public. 
Bhattacharya et al.~\cite{DBLP:journals/corr/abs-0904-3501} showed that for one infinitely divisible item, a bidder cannot improve his utility by underreporting his budget. This leads to a randomized incentive compatible in expectation algorithm for one infinitely divisible item with both private valuations and budgets.
Aggarwal et al.~\cite{unit-demand} and Ashlagi et al.~\cite{Ashlagi09positionauctions} studied {\em envy-free} outcomes that are bidder optimal, respectively Pareto optimal in an one-round adword auction. In this setting they give (under certain conditions in~\cite{unit-demand}) an incentive compatible auction with both private valuations and budgets. Recently, Lavi and May~\cite{LM11} and D\"utting et al.~\cite{DHS11} showed impossibility results for
public budgets in various settings with heterogeneous items.

\pagebreak
In Section \ref{sec:def} we give notation and preliminary definitions.  In Section \ref{sec:fractional} we present the clinching auction for divisible non-identical slots.  
In Section \ref{sec:swapping} we show how to round a solution for the divisible case to the multi-round indivisible case. In Section \ref{sec:combinatorial} we study the single-valued combinatorial case with multiple identical slots,  and in Section \ref{sec:impossibility} we show the impossibility result for the case of diminishing marginal valuation functions. All omitted proofs are in the appendix.

\section{Problem statement and definitions}
\label{sec:def}

We have $n$ bidders and $m$ slots.
We call the set of  bidders $I:=\{1,\dots,n\}$ and the set of  slots $J:=\{1,\dots,m\}$.  Each bidder $i\in I$ has a private valuation $v_i$, a public budget $b_i$, and a public demand constraint $\kappa_i$, which is a positive integer. Each slot $j\in J$ has a quality $\alpha_j$. 
The bidders and the slots are ordered such that $v_i\geq v_{i'}$ if $i>i'$ and $\alpha_j\geq \alpha_{j'}$ if $j>j'$, where ties are  broken in some arbitrary but fixed order. 

\noindent 
{\bf The divisible case.} The goal  is to assign each bidder $i$ a fraction $x_{i,j} \geq 0$ of each slot $j$ and charge him a payment $p_i$. A matrix $X = (x_{i,j})_{(i,j)\in I\times J}$ and a payment vector $p$ are called an {\em allocation} $(X,p)$. 
We call $c_i = \sum_{j\in J} \alpha_j x_{i,j}$ the {\em weighted capacity} allocated to bidder $i$. 
An allocation  is {\em legal} if it fulfills the following conditions:
(1) $\sum_{j\in J}x_{i,j}\leq\kappa_i\ \forall i\in I$,
(2) $\sum_{i\in I}x_{i,j}=1\ \forall j\in J$, and
(3) $b_i \geq p_i \ \forall i \in I$.

\noindent 
{\bf The indivisible case.} We additionally have  a set $R$ of {\em rounds} or {\em keywords}, where $|R|$ is public. The goal is to assign each slot $j\in J$ of round $r\in R$ to one bidder $i\in I$ while obeying the demand constraint of each bidder in each round.  A three-dimensional matrix $X = (x_{i,j,r})_{(i,j,r)\in I\times J\times R}$ with  $x_{i,j,r}=1$ if slot $j$ is assigned to bidder $i$ in round $r$, and $x_{i,j,r}=0$ otherwise, and a payment vector $p$ form an allocation $(X,p)$. We call $c_i = \sum_{j\in J} \frac{\alpha_{j}}{|R|} (\sum_{r\in R}x_{i,j,r})$ the {\em weighted capacity} allocated to bidder~$i$.  An allocation is {\em legal} if it fulfills the following conditions:
(1) $\sum_{j\in J}x_{i,j,r}\leq \kappa_i\ \forall i\in I, \forall r\in R$,
(2)  $\sum_{i\in I}x_{i,j,r}=1\ \forall j\in J, \forall r \in R$, and
(3) $b_i \geq p_i \ \forall i \in I$.

\noindent 
{\bf The combinatorial indivisible case.}  In the combinatorial case not all rounds respectively keywords are identical. 
Every bidder $i\in I$ has a publicly known set of interest $S_i \subseteq R$.  
This corresponds to valuation $v_i$ for all keywords in $S_i$ and valuation $0$ for all other keywords. 
We model this case by imposing  $x_{i,j,r}=0\ \forall r\notin S_i$.

\noindent 
{\bf Properties of the auctions.}   The {\em utility} of bidder $i$ for the legal allocation $(X,p)$ is defined by $u_i = c_i v_i - p_i$.  The allocation must obey the following conditions: 
({\em Bidder rationality}) $u_i\geq 0$ for all bidders $i\in I$,
({\em Auctioneer rationality}) the utility of the auctioneer fulfills $\sum_{i\in I} p_i \geq 0$, and
({\em No positive transfer}) $p_i\geq 0$ for all bidders $i\in I$.
An allocation that is both bidder rational and auctioneer rational is called {\em individually rational.}
An allocation $(X,p)$ is {\em Pareto optimal} if there is no other allocation $(X',p')$ such that 
(1) the utility of none bidder in $(X,p)$ is less than his utility in $(X',p')$, 
(2) the utility of the auctioneer in $(X,p)$ is no less than his utility in $(X',p')$, and
(3) at least one bidder or the auctioneer is better off in $(X',p')$ compared with $(X,p)$.
An auction is {\em incentive compatible} if it is a dominant strategy for all bidders to reveal their true valuation. An auction is said to
be {\em Pareto optimal} if the allocation it produces is Pareto optimal. 
A randomized auction is {\em Pareto optimal in expectation} if the above conditions hold in expectation.

We show that our randomized mechanism for indivisible slots is Pareto optimal in expectation and that each realized outcome is Pareto optimal. Note that neither of these conditions implies each other. Let us assume that we have two bidders, a single indivisible item, and a uniformly distributed random variable $Y\sim\mathcal{U}(0,1)$. Consider first the case that bidder~1 has valuation $v_1=1$ and budget $b_1=1$, bidder~2 has valuation $v_2=2$ and budget $b_2=1$, and we have a value $\tilde{y}\in (0,1)$. If we sell the item to the bidder~2 for price $p_2=1$ (and $p_1=0$) for every realization $y$ of $Y$ with $y\neq \tilde{y}$ the outcome is Pareto optimal in expectation. However, only if we sell the item to bidder~2 also for $y=\tilde{y}$ every possibly realized outcome is Pareto optimal. Hence, Pareto optimality in expectation does not imply that each realized outcome is Pareto optimal. Next consider the case that bidder~1 has valuation $v_1=1$ and budget $b_1=1$, and bidder~2 has valuation $v_2=2$ and budget $b_2=0.5$. If we sell the item to bidder~1 for price $p_1=1$ (and $p_2=0$) for every realization $y\in (0,1)$ each realized outcome is Pareto optimal because $v_1>b_2$. However, we could select the bidder who gets the item with probability one half, and both bidders have to pay $p_1=p_2=0.5$ independent of the assignment. Hence, the outcome is not Pareto optimal in expectation, and therefore, Pareto optimality in expectation is not implied if every realized outcome is Pareto optimal. 

\section{Deterministic clinching auction for the divisible case}
\label{sec:fractional}

\subsection{Characterization of Pareto optimality}
\label{sec:paretochar}

Given a legal allocation $(X,p)$, a {\em swap} between two bidders $i$ and $i'$ is a fractional exchange of slots, i.e., if there are slots $j$ and $j'$ and a constant $\tau > 0$ with $x_{i,j} \geq \tau$ and $x_{i',j'} \geq \tau$ then
a swap between $i$ and $i'$ gives a new legal $(X',p)$ with $x'_{i,j} = x_{i,j} - \tau$, 
$x'_{i',j'} = x_{i',j'} - \tau$, $x'_{i,j'} = x_{i,j'} + \tau$, and $x'_{i',j} = x_{i',j} + \tau$. If $\alpha_j < \alpha_{j'}$ then the swap increases $i$'s weighted capacity.
We assume throughout this section that the number of slots $m$ fulfills $m=\sum_{i\in I} \kappa_i$.
To characterize Pareto optimal allocations we first define for each bidder $i$ the set $N_i$ of bidders such that
for every bidder $i'$ in $N_i$ ~there exists a swap between $i$ and $i'$ that increases $i$'s weighted capacity.
Given a legal allocation $(X,p)$ we use $h'(i) := \max \{ j\in J| x_{i,j}> 0\}$ for the slot with the highest quality that is assigned to bidder $i$ and $l(i) := \min \{j\in J| x_{i,j}> 0\}$ for the slot with the lowest quality that is assigned to bidder $i$. To consider the case of slots with equal $\alpha$-value we define $h(i):=\min\{ j\in J| \alpha_j=\alpha_{h'(i)}\}$.
Now, $N_i=\{ a \in I|  h(a) > l(i)\}.$
To model sequences of swaps we define furthermore
$N^k_i=N_i$ for $k = 1$ and $N^k_i= \bigcup_{a \in N^{k-1}_i}N_a$ for $k > 1$.
Since we have only $n$ bidders, $\bigcup_{k=1}^n N^k_i=\bigcup_{k=1}^{n'} N^k_i$ for all $n'\geq n$. We define $\tilde{N}_i:=\bigcup_{k=1}^n N^k_i \setminus \{i\}$ and $\tilde{v}_i=\min_{a\in\tilde{N}_i} (v_a)$ if $\tilde{N}_i\neq \emptyset$ and $\tilde{v_i}=\infty$ else. 
Given a legal allocation $(X,p)$ we use  $B:=\{ i\in I| b_i>p_i\}$ to denote the set of  bidders who have a positive remaining budget.

\begin{thm}\label{lem:char1}
If $\tilde{v_i}\geq v_i\ \forall i\in B$ then the respective legal allocation $(X,p)$ is Pareto optimal.
\end{thm}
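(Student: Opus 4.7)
The plan is to prove the contrapositive: assume $(X,p)$ is not Pareto optimal and exhibit some $i\in B$ with $\tilde v_i<v_i$, contradicting the hypothesis. Let $(X',p')$ be a legal allocation that Pareto-dominates $(X,p)$. Writing $\Delta c_i:=c'_i-c_i$ and $\Delta p_i:=p'_i-p_i$, the identity $\sum_i u_i+\sum_i p_i=\sum_i v_i c_i$ together with Pareto dominance immediately gives the strict welfare increase $\sum_{i\in I}v_i\Delta c_i>0$.

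The key step is to extract from this global inequality a single combinatorial witness. Because $m=\sum_i\kappa_i$ forces $\sum_j x_{i,j}=\sum_j x'_{i,j}=\kappa_i$ for every bidder, the matrix $\delta:=X'-X$ has vanishing row and column sums and therefore admits a decomposition into elementary fractional swaps, which I would arrange so that each is valid at the initial allocation $X$. Grouping these swaps into oriented chains $i_0\to i_1\to\cdots\to i_k$ with $i_{j+1}\in N_{i_j}$ -- where $i_0$ is a net gainer of capacity along the chain, $i_k$ is a net loser, and intermediate bidders are net-neutral along the chain -- the welfare change decomposes as a sum over chains of a nonnegative multiple of $v_{i_0^C}-v_{i_k^C}$. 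Since the total is strictly positive, some chain $C^*$ satisfies $v_{i_0^{C^*}}>v_{i_k^{C^*}}$.

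It then remains to show $i_0^{C^*}\in B$. Considering $C^*$ in isolation, the IR bound at the sink gives $\Delta p_{i_k^{C^*}}\leq v_{i_k^{C^*}}\Delta c_{i_k^{C^*}}<0$; the intermediate bidders have zero net capacity change along the chain and so their IR bound permits $\Delta p=0$; auctioneer rationality therefore forces $\Delta p_{i_0^{C^*}}\geq -\Delta p_{i_k^{C^*}}>0$. Combined with the budget constraint $\Delta p_{i_0^{C^*}}\leq b_{i_0^{C^*}}-p_{i_0^{C^*}}$, this yields $p_{i_0^{C^*}}<b_{i_0^{C^*}}$, i.e.\ $i_0^{C^*}\in B$. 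Since $i_k^{C^*}\in\tilde N_{i_0^{C^*}}$ by construction of the chain, I conclude $\tilde v_{i_0^{C^*}}\leq v_{i_k^{C^*}}<v_{i_0^{C^*}}$, contradicting the hypothesis.

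The main obstacle is making the single-chain reduction airtight: a generic Pareto-dominating $(X',p')$ typically involves many overlapping chains with globally-balanced payments, so the isolated per-chain payment analysis above does not automatically carry over from the multi-chain case. A clean route is to establish, via a small perturbation in the direction of the productive chain $C^*$, that the existence of any Pareto improvement already implies the existence of a Pareto improvement supported on a single chain, to which the payment analysis applies directly. A related subtlety is arranging the elementary-swap decomposition so that every swap is valid at the fixed allocation $X$ rather than at some intermediate allocation produced by earlier swaps, which is needed so that the invariant $i_{j+1}\in N_{i_j}$ refers to the $N$-relation of the allocation $(X,p)$ named in the theorem.
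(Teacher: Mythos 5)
Your high-level plan is sound and the first step is correct: Pareto dominance plus the identity $\sum_i u_i+\sum_i p_i=\sum_i v_ic_i$ does give $\sum_i v_i\Delta c_i>0$, and a decomposition of $X'-X$ into capacity-transfer chains would then yield a chain whose source has strictly larger valuation than its sink. But the proof has a genuine gap exactly where you flag one, and the fix you sketch does not close it. To conclude $i_0^{C^*}\in B$ you analyze the chain ``in isolation,'' assuming the sink's payment drop must be absorbed by the source's payment increase; in a general Pareto-dominating $(X',p')$ the payments are only globally balanced, and the source of the productive chain may simultaneously be a sink or intermediate node of other chains, may have zero or negative net payment change, and may have exhausted its budget. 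Your proposed repair --- perturb $(X,p)$ in the direction of $C^*$ to manufacture a single-chain Pareto improvement --- is circular: for the perturbed allocation to be legal the source must be able to pay $v_{i_k}\epsilon$ more, i.e.\ $p_{i_0}<b_{i_0}$, which is precisely the membership $i_0\in B$ you are trying to derive. A secondary unproven assertion is that the elementary-swap decomposition of $X'-X$ can be organized into chains satisfying $i_{j+1}\in N_{i_j}$ with respect to the \emph{fixed} allocation $X$; this is plausible (the holder in $X$ of any slot that $i_j$ gains has a high slot better than $i_j$'s low slot) but the chaining from net gainers to net losers through net-neutral intermediaries needs an actual argument.

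For comparison, the paper avoids the single-chain isolation entirely: it normalizes to $\sum_ip'_i=\sum_ip_i$, partitions bidders by the sign of their \emph{net} capacity change $q_i$ and by membership in $B$ into $I^-$, $B^+$, $C^+$, and proves three global inequalities whose conjunction forces the total utility change to be zero. The combinatorial content (your chain decomposition) appears there as a decomposition of the bipartite graph of $X$ and $X'$ into alternating cycles, used only to prove the aggregate bound $\sum_{b\in I^-\cap\tilde N_{r(l)}}(-q_b)\geq\sum_{t\geq l}q_{r(t)}$, i.e.\ that gainers in $B^+$ can only draw capacity from bidders in their $\tilde N$-sets; an ordering of $B^+$ by $l(\cdot)$ and a backwards induction then convert this into the valuation inequality. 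If you want to salvage your route, you essentially have to reprove this aggregate accounting, at which point you have reproduced the paper's argument; the direct (non-contrapositive) global bookkeeping is what makes the budget constraint usable without ever isolating a chain.
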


We say that a legal allocation $(X,p)$ contains a {\em trading swap sequence} (for short {\em trading swap}) if there exists a legal allocation $(X', p')$ and two bidders $u, w\in I$ such that
(1) bidder $w \in \tilde N_u$,
(2) for all $i \in I \setminus \{u,w\}$ 
it holds that $\sum_{j \in J} \alpha_j x_{i,j} = \sum_{j \in J} \alpha_j  x'_{i,j}$ and $p_i = p'_i$,
(3) $\delta := \sum_{j \in J} \alpha_j (x'_{u,j} - x_{u,j} )= \sum_{j \in J} \alpha_j (x_{w,j} - x'_{w,j}) > 0$,
(4) $v_u \delta > p'_u - p_u = p_w - p'_w = v_w \delta$, and
(5) $b_u \geq p'_u$.
We say that the allocation $(X',p')$ {\em results from the trading swap.}

\begin{thm}\label{lem:char2}
(a) Given a legal allocation $(X,p)$ such that $\exists u \in B:\ \tilde{v}_u<v_u$ then there exists a trading swap in $(X,p)$. (b) If there exists a trading swap in $(X,p)$ then the allocation $(X',p')$ resulting from the trading swap is a legal allocation that is Pareto superior to $(X,p)$.

\end{thm}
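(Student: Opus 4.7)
The plan is to establish part~(b) first since the trading-swap data already packages everything needed, and then use that template to exhibit the required swap in part~(a).

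For~(b), the only bidders whose slot fractions change are $u$ and $w$, and by condition~(2) of the trading swap every other bidder keeps the same slot allocation and the same payment. Hence legality conditions~(1) and~(2) of Section~\ref{sec:def} reduce to the fact that an elementary slot swap preserves the per-bidder slot count and the per-slot total; the budget bound~(3) follows from~(5) for $u$, from $p'_w=p_w-v_w\delta\le p_w\le b_w$ for $w$, and trivially for everyone else. For Pareto superiority I would compute the utility changes term by term. Condition~(3) of the trading swap gives $\Delta c_u=\delta$, $\Delta c_w=-\delta$, and $\Delta c_i=0$ otherwise, so using~(4)
\[
\Delta u_u = v_u\delta-(p'_u-p_u) = (v_u-v_w)\delta > 0, \qquad \Delta u_w = -v_w\delta-(p'_w-p_w)=0,
\]
while every other bidder's utility and the auctioneer's revenue (change $v_w\delta-v_w\delta=0$) remain unchanged. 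No party loses and $u$ strictly gains, which is the definition of Pareto superior.

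For~(a), the hypothesis $v_w\le\tilde v_u<v_u$ for some $w\in\tilde N_u$, combined with $\tilde N_u=\bigcup_{k=1}^n N_u^k\setminus\{u\}$, yields a chain $u=i_0,i_1,\dots,i_k=w$ with $i_{j+1}\in N_{i_j}$ for every $j$; passing to a shortest such chain I may assume the $i_j$ are pairwise distinct. For each edge I choose the swap slots $s_j:=l(i_j)$ and $t_j:=h(i_{j+1})$, which by the definition of $N_{i_j}$ satisfy $\alpha_{t_j}>\alpha_{s_j}$, and for which $x_{i_j,s_j}$ and $x_{i_{j+1},t_j}$ are strictly positive. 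I then perform simultaneous swaps between each $i_j$ and $i_{j+1}$ of amount
\[
\tau_j \;:=\; \frac{\delta}{\alpha_{t_j}-\alpha_{s_j}}
\]
for a single common parameter $\delta>0$. The point of this calibration is telescoping: each intermediate bidder $i_j$ with $1\le j\le k-1$ loses exactly $\delta$ weighted capacity from swap $j-1$ and gains exactly $\delta$ from swap $j$, so his total capacity is preserved, while the endpoints satisfy $\Delta c_u=+\delta$ and $\Delta c_w=-\delta$. Setting $p'_u:=p_u+v_w\delta$, $p'_w:=p_w-v_w\delta$ and leaving all other payments unchanged, conditions~(1)--(4) of a trading swap follow immediately from $v_u>v_w$.

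The main obstacle is to choose $\delta>0$ small enough that every side condition holds simultaneously: $\tau_j\le x_{i_j,s_j}$ and $\tau_j\le x_{i_{j+1},t_j}$ for every $j$ (so that no slot fraction becomes negative, also accounting for the case where a single slot appears in two consecutive swaps of the chain), $p_u+v_w\delta\le b_u$ (condition~(5), possible because $u\in B$ gives $b_u-p_u>0$), and $p_w-v_w\delta\ge 0$ (the no-positive-transfer check on $w$ in the new allocation). Each of these is either vacuous when $v_w=0$ or defines a strictly positive upper bound on $\delta$, so taking $\delta$ less than the minimum of these bounds produces a valid trading swap and the existence claim of part~(a) follows.
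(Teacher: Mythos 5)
Your overall strategy coincides with the paper's. For part~(b) the paper simply observes that legality of $(X',p')$ is already built into the definition of a trading swap, so only the Pareto comparison remains; your utility arithmetic is exactly that comparison. (Note that condition~(2) preserves only the weighted capacities and payments of the other bidders, not their slot fractions as you state, but this does not affect the conclusion.) For part~(a) your construction is the paper's: a shortest chain $u=a_0,\dots,a_k=w$ to a minimizer of $v$ over $\tilde N_u$, swaps of $l(a_p)$ against $h(a_{p+1})$ calibrated by $\tau_{p+1}=\delta/(\alpha_{h(a_{p+1})}-\alpha_{l(a_p)})$ so that intermediate capacities telescope, and $\delta$ taken below the minimum of finitely many bounds.

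One step as you state it is wrong, though easily repaired. You impose $p_w-v_w\delta\ge 0$ as a constraint on $\delta$ and assert that each constraint is "either vacuous when $v_w=0$ or defines a strictly positive upper bound on $\delta$." For this particular constraint that is false: if $p_w=0$ (entirely possible for an arbitrary legal allocation) and $v_w>0$, it forces $\delta\le 0$, your minimum of bounds is non-positive, and the construction collapses. The repair is to delete the constraint: neither the definition of a legal allocation (which requires only $b_i\ge p'_i$) nor conditions (1)--(5) of a trading swap require $p'_w\ge 0$, and the paper's proof accordingly allows $p'_{a_k}=p_{a_k}-v_{a_k}\delta$ to become negative --- no positive transfer is a property demanded of the auction's output, not of the hypothetical allocations used in the Pareto comparison. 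With that constraint removed, the remaining bounds (the slot-fraction bounds and $(b_u-p_u)/v_w$ when $v_w>0$) are all strictly positive and your argument goes through as in the paper.
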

\begin{proof}
We know that there is a bidder $u \in B$ with $\tilde{v}_u<v_u$. Thus, we can select the smallest $k\in \{1,\dots, n\}$ for which there is a bidder $a_k\in N^k_u$ who has $v_{a_k} = \tilde{v}_u$. We define for all $p\in \{1,\dots,k-1\}$ the bidder $a_p$ such that $a_p\in N^p_u$ and $a_{p+1}\in N_{a_p}$ and set $a_0:= u$.
Since we selected the smallest $k$, we know that $a_p\neq a_{p'}$ if $p\neq p'$. The fact that $a_{p+1}\in N_{a_p}$ implies that $h({a_{p+1}})>l(a_p)$.
Hence, we could swap a fraction of size $\epsilon_{p+1}:=\min\{x_{a_{p},l(a_{p})},x_{a_{p+1},h({a_{p+1}})}\}$ of the slots $h({a_{p+1}})$ and $l({a_p})$ between the bidders $a_{p+1}$ and $a_{p}$ with $p\in\{ 0,\dots,k-1\}$.
Such a swap increases the weighted capacity that is assigned to bidder $a_p$ by $\delta_{p+1}:=\epsilon_{p+1}(\alpha_{h({a_{p+1}})}-\alpha_{l({a_{p}})})$, while the weighted capacity that is assigned to bidder $a_{p+1}$ is decreased by $\delta_{p+1}$.
We define $\delta:=\min (\{ \frac{b_{a_0}-p_{a_0}}{v_{a_k}}\}\cup\{\delta_p|p\in\{ 1,\dots,k\}\})$ and $\tau_{p+1}:=\frac{\delta}{\alpha_{h({a_{p+1}})}-\alpha_{l({a_{p}})}}\ \forall p\in\{ 0,\dots,k-1\}$ and define an allocation $(X',p')$ as follows: We set $x'_{a_p,h({a_{p+1}})}:=x_{a_p,h({a_{p+1}})}+\tau_{p+1}$ and $x'_{a_p,l({a_{p}})}:=x_{a_p,l({a_{p}})}-\tau_{p+1}$ for all $p\in\{ 0,\dots,k-1\}$, $x'_{a_{p},h({a_{p}})}:=x_{a_{p},h({a_{p}})}-\tau_{p}$ and $x'_{a_{p},l({a_{p-1}})}:=x_{a_{p},l({a_{p-1}})}+\tau_{p}$ for all $p\in\{ 1,\dots,k\}$, and $x'_{i,j} = x_{i,j}$ for all other $(i,j)\in I\times J$.
Moreover, we set $p'_{a_k}:=p_{a_k}- v_{a_k}\delta$, $p'_{a_0}:=p_{a_0}+ v_{a_k}\delta$, and $p'_i:=p_i$ for all other $i\in I$. Thus, with $w = a_k$ it follows that $(X',p')$ fulfills conditions (1)-(5) of a trading swap.

Next we show that $(X',p')$ is a legal allocation.
By the definition of $X'$ for all $i \in I$ it holds that $\sum_{j\in J} x'_{i,j} = \sum_{j\in J} x_{i,j} = \kappa_i$ as
whenever for some $\tau$ with $-1\leq \tau\leq 1$, $x_{i,j}'$ is set to $x_{i,j} + \tau$ for some $j \in J$, $x_{i,l}'$ is set to $x_{i,l} - \tau$ for some other $l \in J$.
Additionally for every $j \in J$ it holds that $\sum_{i\in I} x_{i,j}' = \sum_{i\in I} x_{i,j} = 1$ as whenever 
$x_{a_p,j}'$ is set to $x_{a_p,j} + \tau$ for some $\tau$ with $-1\leq\tau\leq 1$, either $x_{a_{p+1}, j}'$ is set to $x_{a_{p+1}, j} - \tau$, or $x_{a_{p-1}, j}'$ is set to $x_{a_{p-1}, j} - \tau$.
Finally, $p'_i \leq p_i \leq b_i$ for all $i \not= u$ and by our construction $p'_u\leq b_u$.
 This shows that conditions (1) - (3) of a legal allocation hold for $(X',p')$.

The proof of (b) follows directly from the definition of a trading swap since the utility of no bidder or the auctioneer is decreased, but the utility of bidder~$u$ is increased.
\end{proof}

\subsection{Multiple round auction for the divisible case}
\label{sec:auction} 

In this section, we describe our deterministic auction for divisible slots and show that it is Pareto optimal, individually rational, and incentive compatible. 
A formal description is given in the procedure \textsc{Auction}  and the procedure \textsc{Sell}.
We assume throughout this section that (1) each bidder $i\in I$ has a private valuation $v_i\in \mathbb{N}_{+}$\footnote{All the arguments go through if we simply assume that $v_i \in \mathbb{Q_{+}}\ \forall i\in I$ and
there exists a publicly known value $z\in \mathbb{R}_{+}$ such that $v_i \geq z$ for every bidder $i\in I$, and for all bidders $i$ and $i'$ either $v_i = v_{i'}$ or $|v_i - v_{i'}| \geq z$.}, a public budget constraint $b_i\in \mathbb{Q}_{\geq 1}$, and a public demand constraint $\kappa_i\in \mathbb{N}_{+}$, and (2) each slot $j\in J$ has a public quality $\alpha_j\in\mathbb{Q}_{+}$. From now on the order of the bidders is independent of their valuations and their budgets.

\begin{algorithm}
\caption{Clinching Auction for Divisible Slots.}
\label{alg:auction}
\begin{algorithmic}[1]
\Procedure{Auction}{$I, J, \alpha,\kappa, v, b$}

	\State $A \gets I;\ p_i \gets 0\ \forall i\in I$
	\State \Comment{$c_i$ is the minimal capacity achievable by agent~$i$}
	\State $c_i \gets \min_{J'\subseteq J, |J'|=\kappa_i} (\sum_{j\in J'} \alpha_j)\ \forall i\in I$
	\State \Comment{$e+c_i$ is an upper bound on the capacity achievable by agent~$i$}
	\State $e \gets \max_{J'\subseteq J, J''\subseteq J, |J'|=|J''|}( \sum_{j\in J'} \alpha_j -\sum_{j\in J''} \alpha_j )$
	\State $\pi\gets \frac{1}{\max\{ 1, e \}};\ \pi^{+}\gets \lfloor \pi\rfloor+1;\ d_i\gets\frac{b_i}{\pi}\ \forall i\in I$
	\While{$\sum_{i\in I} c_i < \sum_{j\in J} \alpha_j$}
	\State $E \gets \{i\in A| \pi^{+} > v_i\}$
	\For{$i\in E$}
	\State $(X,s)\gets \text{\textsc{Sell}}(I,J,\alpha,\kappa,v,c,d,i)$
	\State $(c_i,p_i,d_i)\gets(c_i+s,p_i+s\pi,0)$
	\EndFor
	\State $A\gets A \setminus E$
	\State{$d^{+}_i \gets \frac{b_i-p_i}{\pi^{+}}\ \forall i\in A$}
	\While{$\exists i\in A\ \text{with}\ d_i\neq d^{+}_i$}
	\State $i' \gets \min(\{ i\in A | d_i \neq d_i^{+}\} )$
	\For{$i\in A\setminus i'$}
	\State $(X,s)\gets\text{\textsc{Sell}}(I,J,\alpha,\kappa,v,c,d,i)$
	\State $p_i \gets \begin{cases}
	p_i+s\pi,\ &\text{if}\ d_i \neq d^{+}_i\\
	p_i+s\pi^{+},\ &\text{else}\end{cases}$
	\State $(c_i, d_i, d^{+}_i)\gets (c_i + s, d_i -s, \frac{b_i - p_i}{\pi^{+}})$ 
	\EndFor
	\State $(X,s)\gets\text{\textsc{Sell}}(I,J,\alpha,\kappa,v,c,d,i')$	
	\State $(c_{i'},p_{i'})\gets (c_{i'}+s,p_{i'}+s\pi)$
	\State $d^{+}_{i'}\gets \frac{b_{i'} - p_{i'}}{\pi^{+}};\ d_{i'} \gets d^{+}_{i'}$
	\EndWhile
	\State $\pi\gets\pi^{+};\ \pi^{+}\gets\pi^{+}+1$
	\EndWhile
	\State \textbf{return} $(X,p)$
\EndProcedure
	\end{algorithmic}
	\end{algorithm}

\begin{algorithm}
\caption{Determine the amount that bidder $i'$ clinches.}
\label{alg:sell}
\begin{algorithmic}[1]
\Procedure{Sell}{$I, J, \alpha, \kappa, v, c, d, i'$}
	\State compute an optimal solution of the following linear program\newline
	 \mbox{\hspace{1.5em}}that is a vertex of the polytope defined by its constraints:\newline
	$\begin{array}{lrlrcll}
	\quad&\multicolumn{2}{r}{\text{minimize}}&\gamma_{i'}&&&\\
	&\text{s.t.:}&\text{(a)}&\sum_{i\in I} x_{i,j} &\hspace{-10pt}=\hspace{-10pt}& 1&\forall j\in J\\
	&                      &\text{(b)}&\sum_{j\in J} x_{i,j} &\hspace{-10pt}=\hspace{-10pt}& \kappa_i&\forall i\in I\\
	&                      &\text{(c)}&\sum_{j\in J} x_{i,j}\ \alpha_j - \gamma_i &\hspace{-10pt}=\hspace{-10pt}& c_i&\forall i\in I\\
	&                      &\text{(d)}&\gamma_i           &\hspace{-10pt}\leq\hspace{-10pt}& d_i&\forall i\in I\\
	&                      &\text{(e)}&x_{i,j}           &\hspace{-10pt}\geq\hspace{-10pt}& 0&\forall i\in I, \forall j\in J\\
	&                      &\text{(f)}&\gamma_i           &\hspace{-10pt}\geq\hspace{-10pt}& 0&\forall i\in I\\
	\end{array}$
	\State \textbf{return} $(X,\gamma_{i'})$
\EndProcedure
	\end{algorithmic}
	\end{algorithm}

Before we start the auction, we add $\sum_{i\in I} \kappa_i - |J|$ dummy-slots with quality zero to the set of slots~$J$ if $\sum_{i\in I}\kappa_i>|J|$ and we remove the $|J|-\sum_{i\in I} \kappa_i$ slots with the lowest quality $\alpha_j$ out of the set of slots~$J$ if $\sum_{i\in I}\kappa_i<|J|$. 
Thus, the demand constraints $\sum_{j\in J} x_{i,j}\leq \kappa_i\ \forall i\in I$ and the condition $\sum_{i\in I} x_{i,j} = 1\ \forall j\in J$ imply that each legal allocation fulfills even $\sum_{j\in J} x_{i,j}= \kappa_i\ \forall i\in I$.
Furthermore, we have to assign to each bidder $i\in I$ an initial weighted capacity of $c_i = \min_{J'\subseteq J, |J'|=\kappa_i} (\sum_{j\in J'} \alpha_j)$ for a price of zero at the beginning of the auction. If some bidder~$i$ obtained less weighted capacity at termination, we could not assign all the slots fully to the bidders. 

The state of the auction is defined by the current price $\pi$, the weighted capacity $c_i$ that bidder $i\in I$ has clinched so far, and the payment $p_i$ that has been charged so far to bidder $i$.
Furthermore, we have the price of the next iteration $\pi^{+}$ that is $\pi^+= \pi +1$ with the exception of the  initial step when we set $\pi^+= \lfloor \pi\rfloor +1$. 
Note that the initial choice of~$\pi$ in line~7 guarantees that $\pi\leq 1$. Since we assume that $v_i\geq 1\ \forall i\in I$ and $b_i\geq 1\ \forall i\in I$, it follows that every agent can afford to buy at this price at least $e$ weighted capacity. The value $e$ is defined such that for all $i\in I$ the sum $e+c_i$ is an upper bound of the weighted capacity that can be acquired by bidder~$i$. Following the clinching auction in~\cite{DBLP:journals/corr/abs-0904-3501,DLNcorrected}
we do not increase the price that a bidder $i\in I$ has to pay from $\pi$ to $\pi^+$ for {\em all} bidders at the same time. Instead, we call \textsc{Sell} each time before we increase the price for a single bidder. We define the set of {\em active} bidders $A\subseteq I$ which are all those $i\in A$ with $\pi \leq v_i$, and the subset $E$ of $A$ of {\em exiting} bidders which are all those $i\in A$ with $\pi^+ > v_i$. If the price that bidder $i\in A$ has to pay for a unit is $\pi$ then his demand is $d_i=\frac{b_i-p_i}{\pi}$. If the price he has to pay was already increased to $\pi^+$ then his demand is $d_i=\frac{b_i-p_i}{\pi^+}$. In this case, the demand corresponds to $d_i^+$ that is always equal to $\frac{b_i-p_i}{\pi^+}$. 
Different from the auction in~\cite{DBLP:journals/corr/abs-0904-3501,DLNcorrected} a bidder with $d_i = d_i^+$
is also charged the increased price $\pi^+$ if he receives additional weighted capacity.
Since no bidder will ever pay more than his reported valuation and the demand is set so that
$b_i \geq p_i$, individual rationality follows.

Let us now detail the idea of the auction and explain why it is incentive compatible. The crucial point of the auction is that we do only sell the weighted capacity $s$ that we compute in \textsc{Sell} to the bidder~$i$ at a certain price $\pi$ or $\pi^+$ if we cannot sell $s$ to the other bidders. We check that condition by solving the linear program in \textsc{Sell}. For each iteration of the outer while-loop we first call \textsc{Sell} for each exiting bidder $i$ and sell him $s$ for price $\pi$. This is the last time when he can gain weighted capacity. Afterward, he is no longer an active bidder. Next, we call \textsc{Sell} for each of the remaining active bidders that are not exiting and sell them the respective $s$. Afterward, we increase the price of a single active bidder to $\pi^+$ and call \textsc{Sell} for every active bidder again. We continue until the price of each active bidder is increased to $\pi^+$. We can now set $\pi$ to $\pi^+$ and $\pi^+$ to $\pi^+ +1$. 

By the construction of the auction, each bidder $i$ pays never more than his reported valuation. If his reported valuation is $\tilde{v}_i$ and $\tilde{v}_i<v_i$, he becomes an inactive bidder at the price $\tilde{v}_i$. His utility cannot increase as he gets the same weighted capacity for each price $\pi<\tilde{v}_i$ and not more for price $\tilde{v}_i$. If his reported valuation is $\tilde{v}_i>v_i$, he gets the same weighted capacity for each price $\pi<v_i$ and his utility cannot increase as well. Thus, the auction is incentive compatible.

During the auction, procedure \textsc{Sell} guarantees us the existence of legal allocations. The procedure solves a linear program. As discussed above, the initial price is set so low that the first call to \textsc{Sell}
is guaranteed to have a feasible solution since each bidder $i\in I$ has enough budget to buy the weighted capacity $\gamma_i$ in every possible $(X,\gamma)$ with $\sum_{j\in J}x_{i,j}=\kappa_i$. 
Thus, there exists an optimal solution $(X,\gamma)$, where nothing is assigned to a certain bidder~$i'$ (i.e., $\gamma_{i'}=0$). 

If $(X,\gamma)$ is an optimal solution of the linear program in a call to \textsc{Sell} that minimizes $\gamma_{i'}$, then $c_{i'}$ is increased by $\gamma_{i'}$ after \textsc{Sell}, and thus, $(X,\tilde{\gamma})$ with $\tilde{\gamma}_i=\gamma_i$ for $i\neq i'$ and $\tilde{\gamma}_{i'}=0$ for $i=i'$ is a feasible solution of the linear program in the next call to \textsc{Sell}, which uses the new $c$-values. Since $\tilde{\gamma}_{i'}=0$, $(X,\tilde{\gamma})$ is a feasible solution in the next call even if the price for bidder~$i'$ was increased, and thus, his demand was decreased. A repeated application of this argument shows that
the final assignment $X$ and $\gamma=0$ is a feasible solution of the linear program in \textsc{Sell} at the conclusion of the auction. As $b_i \geq p_i$ for
every bidder $i$, it follows that the allocation $(X,p)$ computed by the auction is legal.

All the coefficients of the affine functions used in the constraints of the first linear program that gets solved during the auction are rational numbers and all the linear programs have feasible solutions. Thus, there exists an optimal solution that is a vertex of the polytope that is defined by the constraints of the respective linear program. Since that optimal solution lies on the intersection of the graphs of affine functions with rational coefficients it follows that the selected optimal solution $(X,\gamma)$ has only rational entries. The prices are rational numbers as well, and thus, $c_i$ and $d_i$ are rational numbers for all $i\in I$ in the next iteration. Hence, the allocation $(X,p)$ that is determined by the auction has only rational entries. We use that property in the next section.

\label{sec:paretoproof}

We show finally that the allocation $(X,p)$ our auction computes does not contain any trading swap, and thus, by Theorem~\ref{lem:char1} and~\ref{lem:char2} it is Pareto optimal. The proof shows that every trading swap in $(X,p)$ would lead to a superior solution to one of the linear programs solved by the mechanism. Since the mechanism found the optimal solution this leads to a contradiction.

\begin{thm}\label{lem:tp} 
The allocation $(X,p)$ produced by the auction does not contain any trading swap.
\end{thm}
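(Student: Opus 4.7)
The plan is a proof by contradiction. Assume the final allocation $(X,p)$ contains a trading swap witnessed by bidders $u$, $w$, chain $u=a_0,\dots,a_k=w$, and capacity transfer $\delta>0$. The goal is to identify one of the Sell linear programs solved during the auction whose computed minimum of $\gamma_{i'}$ can be strictly improved, contradicting its optimality. As a preliminary observation, since $w$ is an endpoint of the chain, $w$ holds a positive fraction of some slot in $X$, so $w$'s final weighted capacity is strictly positive; hence at least one Sell call of the auction with subject $i'=w$ returned a strictly positive value.

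Let the ``$w$-key call'' be the \emph{last} such Sell call, and denote by $\pi^\star$, $c_i^\star$, $d_i^\star$ the current price, committed capacities, and demands at that moment, and by $s^\star=\gamma_w^\star>0$ the optimum the auction returned there. Between the $w$-key call and the end of the auction $c_w$ does not change: later Sell calls with subject $w$ return $0$ by maximality of $w$-key, and once $w$ exits we have $d_w=0$ and hence $\gamma_w=0$ in every subsequent LP. Consequently $w$'s final weighted capacity equals $c_w^\star+s^\star$. I would then take the allocation $X'$ produced by applying the trading swap to the final $X$---so $u$ gains $\delta$ weighted capacity, $w$ loses $\delta$, and the intermediate bidders are unchanged---and propose $X'$ as an alternative solution to the $w$-key LP. Constraints (a), (b), and (e) are inherited from $X$ (using that one may rescale the swap so that $\delta$ is small enough to preserve $x'\ge 0$), while (c) determines $\gamma_i=\sum_j\alpha_j x'_{i,j}-c_i^\star$.

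The crucial feasibility check concerns (d) and (f). Constraint (f) holds since committed capacities are monotone, $c_i^\star\le c_i^{\mathrm{final}}$, and $\gamma_w=s^\star-\delta\ge 0$ whenever $\delta\le s^\star$. For (d) I would use that after the $w$-key call every unit of capacity is charged to any bidder $i$ at per-unit price at least $\pi^\star$, yielding $(c_i^{\mathrm{final}}-c_i^\star)\pi^\star\le p_i^{\mathrm{final}}-p_i^\star\le b_i-p_i^\star$, and therefore $d_i^\star\ge c_i^{\mathrm{final}}-c_i^\star$ for every $i$. This covers every bidder except $u$; for $u$ the same estimate provides additional slack $(b_u-p_u^{\mathrm{final}})/\pi^\star$, which is strictly positive because $u\in B$, so $\gamma_u=c_u^{\mathrm{final}}+\delta-c_u^\star\le d_u^\star$ for all sufficiently small $\delta$. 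The candidate is then LP-feasible and attains $\gamma_w=s^\star-\delta<s^\star$, contradicting the optimality of the value Sell returned. The main obstacle I expect is the bookkeeping in (d), since $d_i^\star$ might be defined in terms of $\pi$ or $\pi^+$ depending on which branch of the inner while loop produced the $w$-key call; what makes the monotone budget estimate apply uniformly is that every later Sell call charges each $i$ a per-unit price at least as large as the one used to define $d_i^\star$, and that the trading swap can be rescaled freely so that $\delta$ is below both $s^\star$ and the slack $(b_u-p_u^{\mathrm{final}})/\pi^\star$.
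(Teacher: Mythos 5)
Your proposal is correct and follows essentially the same route as the paper: derive a contradiction by exhibiting, for a suitably chosen call to \textsc{Sell} with subject $w$, a feasible solution of that linear program with a strictly smaller objective value, where feasibility of constraint (d) rests on exactly the budget-monotonicity estimate that the paper isolates as a separate claim (the paper fixes the scaling explicitly as $\delta'=\delta\pi/\pi^{+}$ and targets the call at which $c_w$ crosses $c_w''$, whereas you target the last call at which $w$ clinches and shrink $\delta$; both variants go through). One small repair: the existence of a \textsc{Sell} call returning a positive value for $w$ does not follow from $w$'s final weighted capacity being positive, since that capacity could simply equal the initial $c_w=\min_{J'\subseteq J,\,|J'|=\kappa_w}\sum_{j\in J'}\alpha_j$ assigned for free before the clock starts; it follows instead because legality of the post-swap allocation forces $\sum_{j\in J}\alpha_j x_{w,j}-\delta$ to be at least that initial value, so $w$'s final capacity strictly exceeds his initial one.
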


\section{Multiple round randomized auction for the indivisible case}
\label{sec:swap}
\label{sec:swapping}

We will now use the allocation computed by the deterministic auction for {\em divisible} slots 
to give a randomized auction for $|R|$ rounds and {\em indivisible} slots. The randomized auction has to assign to every slot $j\in J$ exactly one bidder $i\in I$ for each round $r\in R$. 

Given an input for the indivisible case we can use it as an input for the divisible case;
the set of bidders $I$ and their $\kappa_i$ and $b_i$, and the set of slots $J$ and their $\alpha_j$ stay unchanged. Based on the allocation $(X,p)$ for the divisible problem we construct a matrix $M'$ of size $|J| \times \lambda$, where $\lambda$ is the least common denominator of all the $x_{i,j}$ values and where each column of $M'$ corresponds to a legal assignment for the indivisible one-round case. We then pick $|R|$  times one column  uniformly at random from the columns of $M'$. The $r$-th such column gives the assignment of bidders to slots for round $r$.
All the columns together form the $|J| \times |R|$ matrix $N$. We show that the randomized 
auction is (a) individually rational in expectation, (b)   incentive compatible in expectation,
(c) Pareto optimal in expectation,  and (d)  each realized outcome is ex-post Pareto optimal.

We use the payment determined for the divisible problem as payment for the allocations in the indivisible case and show below that the expected weighted capacity of each bidder equals
his weighted capacity in the divisible problem. Thus, the expected utility of each bidder in the indivisible case is equal to his utility in the derived divisible case ({\em Utility equivalence}). It follows that the randomized auction is individually rational in expectation and incentive compatible in expectation.

\begin{lem}
\label{prp:randpo}
For every probability distribution over legal allocations in the indivisible case there exists a legal allocation
$(X,p)$ in the divisible case such that the utility of the bidders and the auctioneer equals their expected
utility using this probability distribution.
\end{lem}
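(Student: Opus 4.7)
The plan is to construct the divisible allocation $(X,p)$ by taking expectations of the indivisible allocation under the given distribution, and then verify that (i) the result is legal in the divisible sense, and (ii) the utilities match exactly (not just in expectation on the divisible side).

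Concretely, let $\mathcal{D}$ be the probability distribution over legal indivisible allocations $(X^{(t)}, p^{(t)})$, where $X^{(t)} = (x^{(t)}_{i,j,r})$. I would define
\[
x_{i,j} := \frac{1}{|R|}\, \mathbb{E}_{\mathcal{D}}\!\left[\sum_{r\in R} x^{(t)}_{i,j,r}\right] \qquad \text{and} \qquad p_i := \mathbb{E}_{\mathcal{D}}\!\left[p^{(t)}_i\right].
\]
The first step is to check the three conditions of a legal divisible allocation. For (1), by linearity of expectation $\sum_{j} x_{i,j} = \tfrac{1}{|R|}\mathbb{E}[\sum_{r}\sum_{j} x^{(t)}_{i,j,r}] \leq \tfrac{1}{|R|}\mathbb{E}[\sum_{r}\kappa_i] = \kappa_i$ since each indivisible allocation is legal. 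For (2), $\sum_{i} x_{i,j} = \tfrac{1}{|R|}\mathbb{E}[\sum_{r}\sum_{i} x^{(t)}_{i,j,r}] = \tfrac{1}{|R|}\mathbb{E}[|R|]=1$. For (3), $p_i=\mathbb{E}[p^{(t)}_i]\leq b_i$ holds since $p^{(t)}_i\leq b_i$ pointwise over the support.

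The second step is to match utilities. Writing the divisible weighted capacity as $c_i=\sum_{j}\alpha_j x_{i,j}$ and substituting the definition of $x_{i,j}$,
\[
c_i = \sum_{j}\alpha_j\cdot \frac{1}{|R|}\,\mathbb{E}\!\left[\sum_r x^{(t)}_{i,j,r}\right] = \mathbb{E}\!\left[\sum_{j}\frac{\alpha_j}{|R|}\sum_r x^{(t)}_{i,j,r}\right] = \mathbb{E}[c^{(t)}_i],
\]
again by linearity. Hence $u_i = c_i v_i - p_i = v_i\,\mathbb{E}[c^{(t)}_i]-\mathbb{E}[p^{(t)}_i] = \mathbb{E}[v_i c^{(t)}_i - p^{(t)}_i] = \mathbb{E}[u^{(t)}_i]$, and for the auctioneer $\sum_i p_i = \sum_i \mathbb{E}[p^{(t)}_i] = \mathbb{E}[\sum_i p^{(t)}_i]$.

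There is no real obstacle here; the proof is essentially a bookkeeping exercise built on linearity of expectation. The only subtle point worth stating clearly is that the factor $1/|R|$ in the definition of $c_i$ in the indivisible case is exactly what allows a per-round assignment to be averaged into a single divisible fractional assignment, and that legality conditions are preserved because they are linear inequalities (so preserved by expectation) or hold pointwise (so preserved trivially).
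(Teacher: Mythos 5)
Your proof is correct and follows essentially the same route as the paper: the paper first converts each realized indivisible allocation into a divisible one via $x_{i,j}=|\{r:n_{j,r}=i\}|/|R|$ and then takes the convex combination weighted by the distribution, which is exactly your single-step definition $x_{i,j}=\tfrac{1}{|R|}\mathbb{E}[\sum_r x^{(t)}_{i,j,r}]$, $p_i=\mathbb{E}[p^{(t)}_i]$. The legality and utility checks by linearity match the paper's argument.
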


By Lemma~\ref{prp:randpo} any probability distribution over
legal allocations in the indivisible case that is Pareto superior to the distribution generated by our auction would
lead to a legal Pareto superior allocation for the divisible case. 
This is not possible since the allocation computed by our auction for the
divisible case is Pareto optimal. This shows that the allocation for the indivisible case is Pareto optimal in expectation.  Moreover, every realized outcome for the indivisible case is ex-post Pareto optimal: if in the indivisible case there would exist a Pareto superior allocation to one of the allocations  that gets chosen with a positive probability, then a Pareto superior expected allocation would exist in the indivisible case.

We describe next how to convert the legal allocation of the derived divisible problem into a legal allocation of the indivisible problem such that utility equivalence is guaranteed. We first discretize the allocation $X$ as follows. Recall that all $x_{i,j}$ are rational numbers. Let $\lambda$ be  their least common denominator,
set $C=\{ 1,\dots, \lambda\}$ and set $y_{i,j}=\lambda x_{i,j}$. Since $\sum_{i\in I} x_{i,j} = 1$
and $\sum_{j\in J} x_{i,j} \leq \kappa_i$, we know that $\sum_{i\in I} y_{i,j} = \lambda$ and $\sum_{j\in J} y_{i,j}
\leq \lambda \kappa_i$. 
We construct a matrix $M$ of size $|J|\times |C|$ with values in $I$ by setting $y_{i,j}$ values of row $j$ to $i$. More formally, for each $j \in J$ and each $c \in C$ we set entry $m_{j,c}=v$ for the unique value $v$ with $\sum_{i=1}^{v-1} y_{i,j}<c$ and $\sum_{i=1}^{v} y_{i,j}\geq c$. As a result $|\{c\in C| m_{j,c} = i\}| = y_{i,j}\ \forall i\in I,\forall j\in J$. 
The demand constraints imply that there are at most $|C|\kappa_i$ entries in $M$ that have the value~$i\in I$. 
Next, we replace each bidder $i$ by $\kappa_i$ {\em pseudo-bidders} and translate $M$ into a matrix $\tilde M$ such that no pseudo-bidder has more than $|C|$ entries in $\tilde M$. Then we apply to $\tilde M$ a swapping algorithm that guarantees that (1) in each column in $\tilde M$ there is at most one entry for each pseudo-bidder, and (2)
for each $j\in J$ each value appears as often in row $j$ of $\tilde M$ as it does in $\tilde M'$. Thus, when we convert all the entries of the pseudo-bidders of a given bidder $i$ into entries for bidder $i$ we get a matrix $M'$ such that each bidder $i$ has at most $\kappa_i$ entries in each column and for all $i \in I$ and $j \in J$
it holds that $|\{c\in C| m'_{j,c} = i\}| = y_{i,j}$. Details are given in the appendix.
The columns of  matrix $M'$ are then used in the sampling step.

\begin{thm}\label{thm:swap}
Given a matrix $M$ of size $|J|\times |C|$ with entries valued in $I$  and where each value appears in at most $|C|$ entries, there exists a swapping algorithm that finds a matrix $M'$ with the same size and where (1) each value appears as often in row $j$ of $M'$ as it appears in row $j$ of $M$ and (2) each value appears in at most one entry of each column of $M'$.
\end{thm}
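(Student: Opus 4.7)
My approach is to recast the problem as an edge-coloring problem on a bipartite multigraph and invoke König's edge-coloring theorem. I construct a bipartite multigraph $H=(J\cup I, E)$ by adding, for every position $(j,c)\in J\times C$ with $m_{j,c}=i$, a single edge between $j$ and $i$. The degree of every row-vertex $j\in J$ in $H$ then equals $|C|$ (one edge per column slot of row $j$), while the degree of every value-vertex $i\in I$ equals the total number of occurrences of $i$ in $M$, which is at most $|C|$ by hypothesis. Hence the maximum degree of $H$ is $\Delta=|C|$.

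By König's edge-coloring theorem for bipartite multigraphs, the edges of $H$ admit a proper edge coloring with exactly $\Delta=|C|$ colors, which I identify with the column indices in $C$. Define $M'$ by setting $m'_{j,c}=i$ whenever the edge representing that particular occurrence of $i$ in row $j$ receives color $c$. Condition~(1) then holds because each row $j$ has exactly $|C|$ incident edges, receiving $|C|$ distinct colors, so the map $c\mapsto m'_{j,c}$ is a bijection between column positions of row $j$ and the multiset of values listed in row $j$ of $M$. Condition~(2) holds because the edges incident to any fixed value $i$ all receive distinct colors, so $i$ appears in at most one entry of each column of $M'$.

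To present this as an actual \emph{swapping algorithm} rather than a pure existence argument, I would use the constructive proof of König's theorem via alternating paths, translated into row-internal transpositions on $M$. Initialize $M'\gets M$. While some column $c$ contains a value $i$ at two rows $j_1,j_2$, pick any column $c'$ in which $i$ does not occur (such a $c'$ exists because $i$ has at most $|C|$ total occurrences). An alternating-path argument then produces a finite chain of exchanges of the form ``swap $m'_{j,c}$ with $m'_{j,c'}$'' that routes the excess copy of $i$ into a column that previously lacked it, while every intermediate swap is confined to a single row. Each such chain strictly decreases the total number of column-conflicts without altering any row's multiset, so after finitely many iterations $M'$ satisfies both (1) and (2).

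The main obstacle is the termination of the algorithm: naive greedy repair of a duplicated value in a column can immediately reintroduce a new conflict elsewhere and potentially cycle. What ensures progress is the alternating-path structure from König's proof, which guarantees that the swap chain closes off at a column where no new duplicate is created. Thus the real content of the argument is either to cite König's theorem as a black box after verifying $\Delta=|C|$, or, equivalently, to reprove it in the language of row-swaps on $M$ as sketched above.
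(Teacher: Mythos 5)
Your proof is correct, but it takes a genuinely different route from the paper. You model the problem as edge-coloring the bipartite multigraph on $J\cup I$ whose edges are the occurrences of values in rows, observe that every row-vertex has degree exactly $|C|$ and every value-vertex degree at most $|C|$, and invoke K\H{o}nig's edge-coloring theorem to obtain a proper $|C|$-coloring whose color classes are the columns of $M'$; conditions (1) and (2) then fall out immediately from properness at the row-vertices and value-vertices respectively. The paper instead argues from scratch: it defines a ``badness'' potential (total occurrences of a value minus the number of columns containing it, summed over values), locates a column $a$ with a duplicated value $i$ and a column $b$ missing $i$ (which exists for exactly the reason you give), and builds a swapping alternating path confined to those two columns along which it exchanges entries row by row, proving that each such pass strictly decreases the badness. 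That two-column alternating path is precisely the Kempe-chain step in the constructive proof of K\H{o}nig's theorem, so the underlying combinatorics coincides; what your version buys is brevity and an immediate correctness certificate from a classical theorem, while the paper's version buys a self-contained argument that translates directly into explicit pseudocode with a concrete termination measure. Your constructive sketch at the end is the only soft spot --- you assert rather than prove that the swap chain terminates without reintroducing conflicts --- but since you correctly flag this as the crux and offer the black-box citation as a complete alternative, the proof as a whole stands.
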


The allocation of the randomized auction for multiple rounds and indivisible slots is now constructed as follows. We assign randomly and with equal probability $\frac{1}{|C|}$ one of the columns of matrix $M'$ to the $r$-th column of matrix $N$ for each column respectively round $r\in R$. 
We set the random variable $z_{i,j,r}$ to 1 if bidder $i$ is assigned slot $j$ in round $r$, and to 0 otherwise.
The expected weighted capacity allocated to bidder $i\in I$ is thus
\begin{equation*}
\mathbf{E}(\sum_{j\in J} \frac{\alpha_{j}}{|R|} \sum_{r\in R} z_{i,j,r})  
= \sum_{j\in J} \frac{\alpha_{j}}{|R|} \sum_{r\in R}\mathbf{E} z_{i,j,r} 
=  \sum_{j\in J} \alpha_{j} \frac{y_{i,j}}{|C|} 
=  \sum_{j\in J} \alpha_j x_{i,j}.
\end{equation*}
This proves utility equivalence. Additionally, all of the slots are fully assigned to the bidders, and hence, the stated properties are fulfilled by the randomized auction.

\section{The single-valued combinatorial case with multiple slots}

\label{sec:combinatorial}

In this section we consider  single-valued combinatorial auctions with multiple identical slots in multiple rounds.  We interpret the different slots in a round as multiple instances of the same item. Every bidder $i\in I$ has valuation 
$v_i$ on all rounds of his preference set $S_i$.   All other rounds are valued zero.   The preference sets $S_i$  and the budgets $b_i$ are public knowledge.  
We further restrict to the case of at most one slot per round allocated to a single bidder, i.e., $\kappa_i=1$.  We also require that at least $m$ bidders are interested in each round. 

A feasible allocation  $(H,p)$ is characterized by a tuple  $H=(H_{1},H_{2},\ldots,H_{n})$ where $H_i\subseteq S_i$ represents the set of items that are allocated to bidder $i$,   and by a vector of payments $p=(p_{1},p_{2},\ldots,p_{n})$  with $p_{i}\leq b_i$ for all $i$  in $I$.   The utility of bidder $i$ is defined by $u_i:=v_i |H_i| - p_{i}$. The utility of the auctioneer is  $\sum_{i=1}^{n} p_{i}$. In the combinatorial case we base the allocation of the items in the clinching auction on $B$-matchings computed on a bipartite graph $G$ with the vertex set 
$I \cup R$ and the edge set  $\{(i, t)\in I\times R| t\in S_i\}.$ The $B$-matchings are subgraphs of $G$ with maximal weight that fulfill the degree constraints of the vertices that are defined in the constraint vector $B$.
We slightly abuse notation by using $H$ to denote both: the sets of the items that are allocated to the bidders, and the $B$-matching that describes the allocation  in graph  $G$.  

Pareto optimality has been related in previous work \cite{FLSS11,DBLP:conf/focs/DobzinskiLN08}  to the non-existence of trading options between bidders.  We need a new definition of a trading path because we consider multisets of items. 

\begin{Def}
A path  $\sigma = (a_1, t_1, a_2, t_2, \ldots, a_{j-1}, t_{j-1}, a_j)$ is an alternating path with respect to an assignment $H$ if $(a_i,t_i) \in H$, $t_i \in S_{i+1}$, and $t_{i} \not\in H_{i+1}$ for all $1 \leq i<j$. 
\end{Def}

\begin{Def}
A path $\sigma = (a_1, t_1, a_2, t_2, \ldots, a_{j-1}, t_{j-1}, a_j)$ is a \emph{trading path} with respect to allocation
$(H,p)$ if the following holds: (1)  $\sigma$ is an  alternating path in $H$,  (2)  the valuation  of bidder $a_j$ is strictly greater than the
valuation of bidder $a_{1}$ (i.e., $v_{a_j}>v_{a_{1}}$), (3) the remaining (unused) budget $b^*_{a_j}$ of bidder $a_j$ at the conclusion of the
auction is at least the valuation of bidder $a_{1}$ (i.e.,~$b^*_{a_j}\geq v_{a_{1}}$).
\end{Def}

Observe that the condition $t_{i} \not\in H_{i+1}$ is needed in this case since the slots of a round have to be assigned to different bidders.  This is not the case
in the definition of alternating paths given in~\cite{FLSS11}.  Moreover, we do not restrict in principle to simple alternating paths (without cycles) as in~\cite{FLSS11}. 
We call two alternating paths Pareto equivalent if they have the same start and end bidders and produce the same change in weighted capacity for all the bidders. We are able to prove the following lemma.
\begin{lem}\label{lem:simple-complex}
If there exists an alternating path $\sigma = (a_1, t_1, \ldots, t_{j-1}, a_j)$ that contains cycles there exists also a Pareto equivalent simple alternating path.
\end{lem}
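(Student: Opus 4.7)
The plan is to prove the lemma by repeatedly excising cycles from $\sigma = (a_1, t_1, \ldots, a_{j-1}, t_{j-1}, a_j)$. Specifically, whenever a vertex (bidder or item) appears twice in $\sigma$, I would splice out the sub-walk between its two occurrences to obtain a strictly shorter alternating path with the same endpoints. Since each splice strictly reduces the number of vertices, after finitely many iterations I arrive at a simple alternating path, and I will argue that the process preserves Pareto equivalence at every step.

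Concretely, suppose a bidder is repeated, $a_p = a_q$ with $p < q$. I would define $\sigma'$ by removing $t_p, a_{p+1}, \ldots, a_q$, obtaining $\sigma' = (a_1, t_1, \ldots, a_p, t_q, a_{q+1}, \ldots, a_j)$; the new splice edge is alternating because $(a_p, t_q) \in H$ (using $a_p = a_q$ and $(a_q, t_q) \in H$), $t_q \in S_{a_{q+1}}$, and $t_q \notin H_{a_{q+1}}$ all follow from the alternation conditions of $\sigma$ at position~$q$. If instead an item is repeated, $t_p = t_q$ with $p < q$, I remove $a_{p+1}, t_{p+1}, \ldots, a_q, t_q$ and reconnect via the single edge $a_p \,\text{--}\, t_p \,\text{--}\, a_{q+1}$, which is alternating because $t_p = t_q \in S_{a_{q+1}}$ and $t_p = t_q \notin H_{a_{q+1}}$. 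In both splices the endpoints $a_1$ and $a_j$ are preserved.

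For Pareto equivalence I would rely on the single-valued structure: each bidder's change in weighted capacity equals $v_i$ times her net change in item count. In any alternating walk, every internal occurrence of a bidder contributes exactly one gained and one lost item (net $0$), while $a_1$ contributes $-1$ and $a_j$ contributes $+1$. Hence bidders appearing only inside the deleted sub-walk have net count $0$ in both $\sigma$ and $\sigma'$, and the repeated vertex, if it is a bidder, loses exactly one internal occurrence whose contribution is also $0$, so its total net count is unchanged; item-repetition splices preserve all bidder counts by the same accounting. I expect the main obstacle to be the item-repetition case, where the specific slot that changes hands in $\sigma'$ genuinely differs from the one in $\sigma$---bidder $a_q$ keeps her slot in $\sigma'$ instead of surrendering it---so Pareto equivalence only holds because the single-valued assumption collapses item identities into counts. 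The degenerate case $a_1 = a_j$ can be ignored, since any use of the lemma within a trading path forces $v_{a_j} > v_{a_1}$ and hence $a_1 \neq a_j$.
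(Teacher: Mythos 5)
Your proof is correct and follows essentially the same route as the paper's: locate a repeated vertex, excise the sub-walk between its two occurrences, check that the splice point still satisfies the alternation conditions (splitting into the bidder-repeated and item-repeated cases exactly as the paper does), and iterate until the path is simple. You are somewhat more explicit than the paper about why the excision preserves the per-bidder net item counts (and hence Pareto equivalence), which is a welcome addition rather than a deviation.
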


Pareto optimality and simple trading paths are now related by the following theorem.
\begin{thm}
\label{thm:paropt}
Any allocation $(H,\, p)$ is Pareto optimal if and only if (1) all slots of the rounds are sold in $(H,\, p)$, and (2)  
there are no simple trading paths in $(H,\, p)$.
\end{thm}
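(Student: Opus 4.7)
The plan is to prove the characterization by contrapositive in both directions: the forward direction constructs explicit Pareto improvements whenever either condition fails, and the backward direction extracts from any Pareto-dominating allocation a (possibly non-simple) trading path that Lemma~\ref{lem:simple-complex} then reduces to a simple one.

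For the forward direction, first suppose some slot of a round $r$ is unsold. Because at least $m$ bidders are interested in $r$ but strictly fewer than $m$ bidders currently hold round $r$, there exists a bidder $i$ with $r \in S_i \setminus H_i$; assigning the free slot to $i$ without touching payments strictly increases $u_i$ and leaves all other utilities unchanged, so $(H,p)$ cannot be Pareto optimal. Next, suppose a simple trading path $\sigma=(a_1,t_1,\ldots,t_{j-1},a_j)$ exists. I would rotate items along $\sigma$, setting $H'_{a_1}:=H_{a_1}\setminus\{t_1\}$, $H'_{a_i}:=(H_{a_i}\setminus\{t_i\})\cup\{t_{i-1}\}$ for $2\le i\le j-1$, $H'_{a_j}:=H_{a_j}\cup\{t_{j-1}\}$, and $H'_{\ell}:=H_{\ell}$ for every other bidder. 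Choosing $\pi\in[v_{a_1},\min\{v_{a_j},b^*_{a_j}\}]$ (non-empty by the trading-path inequalities $v_{a_j}>v_{a_1}$ and $b^*_{a_j}\ge v_{a_1}$), I would set $p'_{a_j}:=p_{a_j}+\pi$ and refund $a_1$ by $\pi$, leaving other payments fixed. A direct computation verifies that each intermediate bidder is unaffected, the auctioneer is indifferent, $a_1$ is weakly better off, and $a_j$ is strictly better off, contradicting Pareto optimality.

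For the backward direction, suppose $(H,p)$ satisfies (1) and (2) but is not Pareto optimal, and let $(H',p')$ Pareto dominate $(H,p)$; without loss of generality $H'$ also sells every slot, since otherwise the argument from the unsold-slot case enlarges the improvement. For each round $r$, set $L(r):=H(r)\setminus H'(r)$ and $G(r):=H'(r)\setminus H(r)$; these have equal cardinality, so fix a bijection between them. These bijections define a directed multigraph $D$ on $I$ whose edges are labeled by rounds and point from losers to gainers, and the signed degree of $i$ in $D$ equals $\Delta_i:=|H'_i|-|H_i|$. If every $\Delta_i$ vanishes, then only payments differ, and summing $u'_i\ge u_i$ together with auctioneer rationality forces $p'=p$, contradicting strictness; otherwise some $\Delta_i<0$ and some $\Delta_i>0$, and standard flow decomposition yields a directed path in $D$ from a net loser $a_1$ to a net gainer $a_k$. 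The sequence $(a_1,t_1,\ldots,t_{k-1},a_k)$ is then an alternating path in $H$. Isolating the unit of flow along this path, the requirement $u'_{a_1}\ge u_{a_1}$ together with auctioneer rationality forces $a_k$ to transfer at least $v_{a_1}$ to the auctioneer, yielding $b^*_{a_k}\ge v_{a_1}$ via $a_k$'s budget bound and $v_{a_k}\ge v_{a_1}$ via $a_k$'s bidder rationality; strictness of the overall improvement upgrades the latter to $v_{a_k}>v_{a_1}$. The alternating path is thus a trading path, and Lemma~\ref{lem:simple-complex} reduces it to a Pareto-equivalent simple trading path with the same endpoints, contradicting (2).

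The main obstacle lies in the backward direction's passage from the global Pareto-dominance hypothesis to the local trading-path inequalities. Extracting $v_{a_k}>v_{a_1}$ and $b^*_{a_k}\ge v_{a_1}$ requires isolating a single unit of item flow along the chosen path and arguing that the payment adjustments forced by that unit alone satisfy both the budget bound and bidder rationality at the two endpoints; some care is also needed to ensure that the non-simple-to-simple reduction of Lemma~\ref{lem:simple-complex} preserves both endpoints of the path, so that the trading-path inequalities carry over unchanged to the simple path that finally yields the contradiction.
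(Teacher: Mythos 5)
Your forward direction is essentially the paper's argument and is fine (taking the transfer equal to $v_{a_1}$, as the paper does, already gives the strict improvement for $a_j$). The backward direction, however, has a genuine gap at exactly the point you flag as ``the main obstacle,'' and the isolation device you propose does not close it. After decomposing $H\ominus H'$ into alternating paths, you pick one path from a net loser $a_1$ to a net gainer $a_k$ and claim that $u'_{a_1}\ge u_{a_1}$ plus auctioneer rationality ``forces $a_k$ to transfer at least $v_{a_1}$ to the auctioneer.'' This is false: auctioneer rationality constrains only the \emph{sum} $\sum_i (p'_i-p_i)\ge 0$, and there is no canonical way to attribute payment changes to individual paths. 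With two paths, say from $x_1$ (valuation $10$) to $y_1$ and from $x_2$ (valuation $1$) to $y_2$, the refund of $10$ owed to $x_1$ can be financed entirely by an increased payment from $y_2$, so nothing forces $p'_{y_1}-p_{y_1}\ge v_{x_1}$, and neither $b^*_{y_1}\ge v_{x_1}$ nor $v_{y_1}\ge v_{x_1}$ follows for the particular path you selected. In short, an arbitrarily chosen path of the decomposition need not be a trading path, so your construction does not produce one.

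The paper circumvents this with a global accounting argument rather than a per-path one: it assumes that \emph{no} path of the decomposition is a trading path, i.e., for every path $j$ either $v_{y_j}\le v_{x_j}$ or $b^*_{y_j}<v_{x_j}$, partitions the paths accordingly into sets $\mu$ and $\nu$, and bounds the aggregate extra revenue collectible from the gainers by $Z^+=\sum_{j\in\mu}v_{y_j}+\sum_{j\in\nu}b^*_{y_j}\le\sum_j v_{x_j}=Z^-$, the aggregate refund owed to the losers. This first rules out a strict gain for the auctioneer, then forces $Z^+=Z^-$ and hence $\nu=\emptyset$; finally, a strict utility gain for some bidder yields $\sum_j v_{y_j}>\sum_j v_{x_j}$, contradicting the equality with $\mu=\{1,\dots,z\}$. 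If you want to keep your structure, you must replace the single-path extraction by this summation over all paths (or by a more careful choice of path justified through such aggregate inequalities); as written, the key inequalities $v_{a_k}>v_{a_1}$ and $b^*_{a_k}\ge v_{a_1}$ are asserted but not derivable from the hypotheses.
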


We define the auction in Algorithm~\ref{alg:combclinch}. During the execution of the algorithm  there is always a price $\pi$ (initially
zero), a set of unsold items $R$ (i.e., of items with unsold instances) of cardinality ${\bar r}=\left|R\right|$, 
a vector of remaining budgets $b=(b_{1},b_{2},\ldots,b_{n})$, and a vector of the number of unsold slots that are instances of the same item $(c_{1}, c_{2}, \ldots, c_{\bar{r}})$.
We denote by $U$ the multiset formed by the multiset-union of the unsold slots of all items and by $\bar t=\sum_{i=1}^{\bar r} c_i$ its cardinality.
The current demand of  bidder $i$ during the course of the auction is the number of slots that bidder~$i$ could clinch at price~$\pi$ and
is denoted by $d_i$. It is either equal to $D_i$ or to $D_i^+$, which are defined as follows:
\begin{align*}
D_i (\pi,{\bar r},b)&:=\begin{cases}
\min \{\bar{r},|S_i|,\lfloor \frac{b_i}{\pi}\rfloor\},&\text{if}\ \pi\leq v_i\\
0,&\text{else}
\end{cases}
&  
D_i^+(\pi,{\bar r},b) &:=  \lim_{\epsilon \rightarrow 0^+} D_i(\pi+\epsilon,{\bar r},b)
\end{align*}

We now  define the set $A:=\{ i\in I | D_i>0\}$ \label{eq:A} of bidders with positive demand  and the subset $E:=\{ i\in I| D_{i}>0 \wedge v_{i}=\pi\}$ \label{eq:V}  of the bidders in $A$   with valuation equal to the current price.
Bidders in  $A$ are called {\em active bidders} whereas bidders in $E$ are called {\em exiting bidders}.

The auction continues, as long as there is a bidder that belongs to $A$. At every price $\pi$ we first try to sell slots 
to any exiting bidder because even if the utility of the exiting
bidder does not increase with the new item, the utility of the auctioneer
will. After this check, we have to verify if any bidder can clinch
any slot and eventually sell that slot to him.  We denote by  $B(\{\neg i\})$ the number of slots assigned to bidders other than $i$ in a  maximal $B$-matching, and assign to bidder~$i$ the minimal number of items amongst all the maximal $B$-matchings.  An item is clinched by bidder~$i$ when $B(\neg\{i\})<\bar t$. 
If no item is clinched, we set $d_{i}=D_{i}^{+}$ for a bidder $i$ with $d_{i}>D_{i}^{+}$, and this loop continues
until no bidder $i$ can clinch an item and $d_{i}= D_{i}^{+}\ \forall i\in I$. Only now we can raise the price.
The preference sets, the vector of the number of unsold slots, and the set of unsold items are updated after every time a bidder clinches.
The idea of the auction is to sell slots at the highest possible price such that all slots are sold and there exists no competition between
bidders.  On the contrary, the existence of a trading path indicates that there exists competition on the 
assignment of the first slot in the path. Hence, the auction contains no trading path in the final allocation.

\begin{thm} \label{thm:main}
The allocation $(H^*,p^*)$ produced by Algorithm~\ref{alg:combclinch} is incentive compatible, individually rational, and Pareto optimal.
\end{thm}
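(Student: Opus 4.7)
My plan is to prove the three properties separately, with Pareto optimality being the main work. First, individual rationality is immediate from the auction's bookkeeping: a bidder $i$ stays in $A$ only while $\pi \le v_i$, and the definition of $D_i$ caps demand at $\lfloor b_i/\pi\rfloor$, so every clinched slot is charged at a price $i$ can afford and that is at most his valuation. Hence $p_i^* \le b_i$ and $v_i|H_i^*| \ge p_i^*$, giving both bidder and auctioneer rationality.

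For Pareto optimality I apply Theorem~\ref{thm:paropt}, so it suffices to show that (a) all slots are sold and (b) the final allocation contains no simple trading path. Part (a) uses the assumption that at least $m$ bidders are interested in each round, together with the fact that $D_i$ is non-increasing in $\pi$ and the outer loop does not terminate while any bidder has positive demand; one checks that this forces every slot to be clinched before $A$ empties. Part (b) is the heart of the proof and I argue by contradiction. Suppose a simple trading path $\sigma = (a_1, t_1, a_2, \dots, t_{j-1}, a_j)$ with $v_{a_j} > v_{a_1}$ and $b^*_{a_j} \ge v_{a_1}$ existed in $(H^*, p^*)$. At the moment during the execution when the price $\pi$ first reached $v_{a_1}$, bidder $a_j$ is still active (since $v_{a_j} > v_{a_1}$) and can afford one more slot. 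The alternating structure of $\sigma$ lets us re-route the $B$-matching used by the auction by flipping matched and unmatched edges along $\sigma$, producing a matching in which $a_j$ receives one additional slot without reducing the total number of matched slots and without violating any degree constraint. This means $B(\{\neg a_j\}) < \bar t$ at that price, so the clinching test would have forced $a_j$ to clinch an extra slot, contradicting the survival of $\sigma$ in the final allocation. Lemma~\ref{lem:simple-complex} guarantees that restricting attention to simple trading paths is without loss of generality.

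For incentive compatibility I use the standard clinching-auction argument. The quantities $B(\{\neg i\})$ and the evolution of the other bidders' demands depend on $v_i$ only through the cutoff determining when $i$ leaves $A$, so two reports $\tilde v_i$ and $v_i$ produce identical executions up to prices $\pi < \min(\tilde v_i, v_i)$. Underreporting $\tilde v_i < v_i$ only causes $i$ to exit early and forfeit slots he could otherwise have clinched at non-negative marginal utility; overreporting $\tilde v_i > v_i$ exposes $i$ to clinching at prices strictly above his valuation, a strict loss on every such slot. Hence truth-telling is a dominant strategy.

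The main obstacle is the re-routing step inside (b): one must carefully verify that flipping the alternating path in the bipartite graph produces a legal $B$-matching in which $a_j$ strictly gains a slot, the total matching size is unchanged, and no other degree constraint is violated — and that this reconstruction is feasible at precisely the auction state when $\pi = v_{a_1}$, so that the feasibility constraints of the $B$-matching at that state coincide with those needed for the clinching detection. If that combinatorial correspondence is established cleanly, the rest of the proof reduces to the bookkeeping arguments above.
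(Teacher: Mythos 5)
Your skeleton (reduce to Theorem~\ref{thm:paropt}, then show that all slots are sold and that no simple trading path survives) is the same as the paper's, and the individual-rationality and incentive-compatibility parts are acceptable as standard clinching-auction bookkeeping. The gap is in your step (b), and it is twofold. First, the contradiction you aim for is not the right one: establishing $B(\neg\{a_j\})<\bar t$ at some price would merely force $a_j$ to clinch an extra slot, and $a_j$ clinching more slots does not destroy the trading path $\sigma$ --- it neither prevents $t_1$ from later being sold to $a_1$ nor visibly violates $b^*_{a_j}\geq v_{a_1}$. The contradiction the paper actually derives is with the \emph{minimality of the avoid matching} $Y^{\sigma}$ computed at the moment the \emph{first edge of $\sigma$ is sold} (which can occur at a price well below $v_{a_1}$, so ``when $\pi$ first reaches $v_{a_1}$'' is not the right moment): rerouting along $\sigma$ exhibits another maximal $B$-matching that assigns strictly fewer slots to the avoided bidder (or to the exiting bidders, in the $\Sigma_E$ case of Lemma~\ref{lem:SellV}), contradicting the definition of an avoid matching.

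Second --- and this is the obstacle you flag yourself but do not resolve --- $\sigma$ is alternating with respect to the \emph{final} allocation $H^*$, whereas the rerouting must be carried out in a matching that exists when $Y^{\sigma}$ is computed; the later edges of $\sigma$ have not been sold at that time. The paper's Lemma~\ref{lem:x} performs exactly this transfer, by decomposing the symmetric difference of $F^{\sigma}$ and $Y^{\sigma}$ into alternating paths and showing these are bidder-disjoint from the suffix of $\sigma$ (using that $\sigma$ is chosen last in a suitable ordering of trading paths). Moreover, your assertion that $a_j$ ``can afford one more slot'' is not always true: when $d_{a_j}=X(a_j)$ the rerouting is blocked by $a_j$'s demand constraint, and in that case the paper instead derives $b^*_{a_j}\leq \pi^{\sigma}<v_{a_1}$, contradicting the budget condition in the definition of a trading path. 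Without this case split and without the matching-transfer lemma, the heart of the proof is missing.
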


\section{Impossibility result for diminishing marginal valuations}
\label{sec:impossibility}

In this scenario every valuation $v_{i}$ is a diminishing marginal valuation function, shaped on value $v_{i}(q)\::\:\left\{ 1,\ldots, |R| \right\} \rightarrow \mathbb{R}_{+}$.  The initial Ausubel clinching auction \cite{Ausubel04} was indeed proposed for the case of diminishing marginal utilities without budgets. Since we consider only the case of indivisible items in this section, marginal values are constant in every unit interval.

We show  that even in the case of two identical items and public budgets,  it is impossible to design an auction that is incentive compatible and Pareto optimal if marginal valuations are private. The problem is that agents can lie over their own marginal values in order to raise the price paid by the other bidders. This results in a decrease of the budget and of the demand of the other bidders, and thus a possible decrease of the price charged  to the non-truth telling bidder.

\begin{thm}\label{thm:impossibility2}
There is no incentive compatible, Pareto optimal, individually rational multi-unit auction for private diminishing  marginal valuations and public budgets.
\end{thm}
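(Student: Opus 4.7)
The plan is to construct two instances that differ only in what bidder~$2$ reports as his \emph{second} marginal valuation, and show that any incentive compatible, individually rational, and Pareto optimal mechanism must, at those two instances, produce outcomes that contradict incentive compatibility for bidder~$2$. The construction uses $n=2$ bidders and $|R|=2$ one-slot rounds with public budgets $b_1,b_2$ and private diminishing marginal valuations $v_i(\cdot)$.

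First I would fix concrete parameters. I choose $b_1$ large (so bidder~$1$ is essentially unconstrained), $b_2$ strictly smaller than $v_1(2)$, and marginal valuations satisfying $v_1(1)>v_1(2)>v_2(2)$ together with $v_2(1)>v_1(2)$ and $v_2(1)-v_2(2)$ small enough that buying a single unit is attractive for bidder~$2$ within his budget. The asymmetry $v_1(2)>v_2(2)$ combined with the budget gap $b_1>b_2$ is designed so that Pareto optimality at truthful reports rules out giving bidder~$2$ any unit: any allocation that gives bidder~$2$ a unit at a price at most $b_2$ could be Pareto-improved by transferring that unit to bidder~$1$ (who has higher marginal value for a second unit and sufficient budget) while adjusting payments, using exactly the style of trade argument employed in Section~\ref{sec:paretochar}. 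Together with individual rationality this forces the IC+IR+PO mechanism at the truthful profile to allocate both units to bidder~$1$, leaving bidder~$2$ with utility $0$.

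Second, I would consider the profile in which bidder~$2$ misreports his second marginal value as $v_2'(2)$ with $v_2'(2)>v_1(2)$, while reporting $v_2'(1)=v_2(1)$ truthfully. Under the reported profile, Pareto optimality forces a different allocation: giving both units to bidder~$1$ is no longer Pareto optimal because bidder~$2$ now reports a higher marginal value for the second unit than bidder~$1$ does, so there would exist a mutually profitable trade of one unit from bidder~$1$ to bidder~$2$ at a price the budget $b_2$ can cover. A case analysis on the feasible payment vectors, using the budget cap $b_2$ and individual rationality, pins down that any IC+IR+PO outcome must allocate at least one unit to bidder~$2$ at a price $p_2\le b_2 < v_2(1)$. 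Hence bidder~$2$'s \emph{true} utility at the misreported profile is at least $v_2(1)-p_2>0$.

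Finally, comparing the two outcomes yields the contradiction: bidder~$2$ strictly improves his true utility from $0$ to a positive value by overreporting $v_2(2)$, contradicting the assumed incentive compatibility. The main obstacle is the Pareto-optimality characterization at both profiles: one must argue that with diminishing marginal valuations and the specific budget gap there is genuinely \emph{no} IC+IR+PO outcome of the first instance in which bidder~$2$ receives a unit, and \emph{no} IC+IR+PO outcome of the second instance in which bidder~$2$ receives no unit. The delicate part is choosing $b_2$, $v_2(1)$, $v_1(2)$, $v_2(2)$, and $v_2'(2)$ simultaneously so that the trade-based Pareto arguments go through in both directions, and so that in the second instance the price charged to bidder~$2$ is provably bounded by $b_2$ and below $v_2(1)$; I expect this to require a careful quantitative separation of $v_2(1)-b_2$ from $0$ together with a matching gap between $v_1(2)$ and $v_2(2)$.
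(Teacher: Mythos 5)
There is a genuine gap, and it sits exactly where you flag the ``delicate part.'' Your marginal-value bookkeeping in the first instance is wrong: if bidder~$2$ holds a single unit, the relevant trade transfers \emph{his first unit}, so the comparison is between $v_2(1)$ (what he loses) and $v_1(2)$ (what bidder~$1$ gains), not between $v_2(2)$ and $v_1(2)$. Since you stipulate $v_2(1)>v_1(2)$, moving a unit from bidder~$2$ to bidder~$1$ destroys value and is never a Pareto improvement, so Pareto optimality does \emph{not} rule out giving bidder~$2$ a unit at the truthful profile; the allocation splitting the units is also Pareto optimal there. Worse, the same bookkeeping error breaks the second instance: the only candidate Pareto-improving trade when bidder~$1$ holds both units moves one unit to a bidder~$2$ who holds none, and its feasibility depends on $v_2(1)$ and on bidder~$2$'s remaining budget $b_2<v_1(2)$ --- neither of which is affected by the reported $v_2'(2)$. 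So overreporting the second marginal value does not change the set of Pareto optimal outcomes in your construction, and the claimed forced reallocation at the misreported profile does not follow.

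The deeper missing ingredient is that PO$+$IR alone never pin down a unique outcome (many Pareto optimal allocations coexist), so you cannot determine what the mechanism does at the misreported profile by trade arguments alone. The paper's proof of Theorem~\ref{thm:impossibility2} resolves this by making the misreport a \emph{flat} (additive) valuation $v_2(1)=v_2(2)=2$, which brings the Dobzinski--Lavi--Nisan uniqueness theorem (Theorem~\ref{thm:uniqueness}) into play and forces the clinching-auction outcome, in which bidder~$2$ wins one unit at price $1.5$. The truthful profile $v_2=(2,1)$ is then handled by Lemma~\ref{lem:imphelp2} (all units to bidder~$1$ because $v_1(2)>v_2(1)$ and $v_2(1)+v_2(2)\le b_1$) together with Lemma~\ref{lem:imphelp1} (incentive compatibility forces $p_2=0$ when bidder~$2$ wins nothing, so his truthful utility is exactly $0$). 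Note also that the budget roles are the reverse of yours: the paper needs $b_1$ \emph{small} (just covering $\sum_j v_2(j)$ truthfully, but not under the overreport) and $b_2$ large, since it is bidder~$1$'s budget exhaustion in the clinching auction that lets bidder~$2$ clinch a cheap unit after misreporting. Your deviation direction (overreporting the second marginal) is the right one, but to complete the argument you would need both the uniqueness theorem for the additive profile and the $p_2=0$ lemma, neither of which your outline supplies a substitute for.
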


\newpage
\bibliographystyle{plain}

\newpage
\appendix
\section{Proof of Theorem~1}

Let us assume that we have a legal allocation $(X',p')$ that is Pareto superior to $(X,p)$. The utility of the auctioneer does not decrease. Thus, the sum of the payments of the bidders fulfills $\sum_{i\in I}p'_i\geq \sum_{i\in I}p_i$. If $\sum_{i\in I}p'_i> \sum_{i\in I}p_i$ then an allocation $(X',p'')$ where $\sum_{i\in I}p''_i= \sum_{i\in I}p_i$ exists, which is Pareto superior compared to $(X,p)$ as well: simply give the additional payments back to some of the bidders.
Therefore, it suffices to consider the case where $\sum_{i\in I}p'_i= \sum_{i\in I}p_i$. 

Let $q_i = \sum_{j\in J} \alpha_j (x'_{i,j} - x_{i,j})$ be the weighted capacity change of bidder $i$.
Since $(X,p)$ and $(X',p')$ are legal allocations, $\sum_{i \in I} x_{i,j}=1$ for all $j \in J$, and $\sum_{i\in I} x'_{i,j}=1$ for all $j \in J$. Hence,
$\sum_{i \in I} q_i = \sum_{i \in I} \sum_{j\in J} \alpha_j (x'_{i,j} - x_{i,j}) =   \sum_{j \in J}\alpha_j ( \sum_{i\in I} x'_{i,j} - \sum_{i \in I} x_{i,j} ) = 0$.
It follows that
(a) $\sum_{b\in I: q_b < 0} (-q_b) = \sum_{i\in I: q_i \geq 0} q_i$.
As $\sum_{i \in I} p_i = \sum_{i \in I} p'_i$ it also follows that
(b) $\sum_{b\in I: q_b < 0} (p_b - p'_b) = \sum_{i\in I: q_i \geq 0} (p'_i - p_i).$

We partition the bidders into the following three sets:
$I^- = \{b \in I| q_b < 0\}$, $B^+ = \{i \in B| q_i \geq 0\}$, and $C^+ = \{i \in I\setminus B | q_i \geq 0\} = I \setminus
(I^- \cup B^+).$
We will show below that
\begin{itemize}
\item[(A)] $\sum_{b \in I^-} (p_b - p'_b) \geq \sum_{b \in I^-} (-q_b v_b) \geq \sum_{i \in B^+} q_i v_i$,
\item[(B)] $\sum_{i \in B^+} q_i v_i \geq \sum_{i \in B^+} (p'_i - p_i),$ and
\item[(C)] $C^+ = \emptyset$.
\end{itemize}

Since $\sum_{i \in C^+} (p'_i - p_i) \leq 0$, (b) implies that
$\sum_{b \in I^-} (p_b - p'_b) \leq \sum_{i \in B^+} (p'_i - p_i).$ Combined with (A) and (B) it follows that
$\sum_{b \in I^-} (p_b - p'_b) = \sum_{i \in B^+} (p'_i - p_i)$ and that all the inequalities in (A) and (B) are
actually equations, specifically (c) $\sum_{b \in I^-} (-q_b v_b) = \sum_{i \in B^+} q_i v_i$.
Furthermore, (A) implies that the total change in utility (comparing $(X,p)$ to $(X',p')$) for all bidders $b \in I^-$, 
which is $\sum_{b \in I^-} (q_b v_b - p'_b + p_b)$, equals $0$, and
(B) implies that the change in utility for all bidders $i \in B^+$, which is
$\sum_{i \in B^+} (q_i v_i - p'_i + p_i)$, equals $0$. 
Since $C^+= \emptyset$, this implies that the total change in utility for all bidders is zero. Since the utility of the auctioneer in $(X,p)$ and in $(X',p')$ does not change either this give a contradiction to the assumption that $(X',p')$ is Pareto superior to $(X,p)$ and completes the proof of Theorem~\ref{lem:char1}.

To show (B) note that the increase in payment $p'_i - p_i$ for a bidder $i\in B$ with $q_i \geq 0$ is at
most $q_i v_i$, otherwise the utility of the bidder would drop. This shows (B). 
To show the first inequality in (A) note that the total drop in payments by a bidder $b\in I$ with $q_b < 0$ is at least $-q_b v_b$. Thus,
 $\sum_{b\in I^-} (p_b - p'_b) \geq \sum_{b\in I^-} (-q_b v_b)$.

To show the second inequality in (A)
we first show the following claims. Let $s = |B^+|$
and let $r(1), r(2), \dots, r(s)$ be an ordering of the bidders in $B^+$ in increasing order of $l(\cdot)$ such that two bidders $i$ and $i'$ with
$l(i)=l(i')$ are ordered by increasing $v$-value. We show first that $r$-ordering orders the bidders by valuation.

\begin{clm}\label{claim1}
For $1 \leq l < s$ it holds that
$v_{r(l)} \leq v_{r(l+1)}$.
\end{clm}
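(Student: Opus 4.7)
The plan is to split the argument into two cases, exhausting the possibilities that arise from the $r$-ordering being by nondecreasing $l(\cdot)$: either $l(r(l)) = l(r(l+1))$ or $l(r(l)) < l(r(l+1))$.

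The easy case is $l(r(l)) = l(r(l+1))$: the tiebreaking rule used to define the $r$-ordering—bidders with equal $l(\cdot)$-value are listed by increasing $v$-value—immediately yields $v_{r(l)} \leq v_{r(l+1)}$ with no further work.

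The substantive case is $l(r(l)) < l(r(l+1))$. My plan is to show $r(l+1) \in N_{r(l)}$ and then invoke the hypothesis of Theorem~\ref{lem:char1}: since $r(l) \in B^+ \subseteq B$ and $r(l+1) \in N_{r(l)} \subseteq \tilde{N}_{r(l)}$, we would obtain $v_{r(l+1)} \geq \tilde{v}_{r(l)} \geq v_{r(l)}$. To establish $r(l+1) \in N_{r(l)}$, i.e., $h(r(l+1)) > l(r(l))$, I would argue that $h'(r(l+1)) \geq l(r(l+1)) > l(r(l))$ forces $\alpha_{h'(r(l+1))} \geq \alpha_{l(r(l))}$, and then combine this with the identity $\alpha_{h(r(l+1))} = \alpha_{h'(r(l+1))}$ (by definition of $h$) to deduce the required strict inequality on slot indices.

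The main obstacle is the borderline subcase in which $\alpha_{h'(r(l+1))} = \alpha_{l(r(l))}$: here the $\alpha$-inequality is only weak, so the jump from ``$\geq$ on $\alpha$-values'' to ``$>$ on slot indices'' is not automatic. My plan is to exploit the paper's convention that the slot ordering is an arbitrary but fixed tiebreaking on each $\alpha$-equivalence class. Within such a class, the labels can be permuted without altering any capacity, any of the sets $N_i$, or any quantity in the allocation, so we may assume without loss of generality that the labeling is consistent with $h(r(l+1)) > l(r(l))$ whenever the underlying $\alpha$-values coincide, which closes the chain and completes the argument.
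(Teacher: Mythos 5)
Your main line coincides with the paper's: dispose of the tie case $l(r(l))=l(r(l+1))$ by the tie-breaking in the $r$-order, and otherwise establish $r(l+1)\in N_{r(l)}\subseteq\tilde N_{r(l)}$ and invoke the hypothesis $\tilde v_{r(l)}\geq v_{r(l)}$ of Theorem~\ref{lem:char1} (the paper compresses this into a one-line contradiction, asserting $l(r(l))<l(r(l+1))\leq h(r(l+1))$ without comment). You are right to single out the borderline subcase $\alpha_{h'(r(l+1))}=\alpha_{l(r(l))}$; that is precisely where the paper's unjustified inequality $l(r(l+1))\leq h(r(l+1))$ can fail.

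Your repair of that subcase, however, does not go through. By definition $h(r(l+1))$ is the \emph{minimum} index over the entire $\alpha$-equivalence class of $\alpha_{h'(r(l+1))}$, and in the borderline subcase slot $l(r(l))$ lies in that same class; hence $h(r(l+1))\leq l(r(l))$ under \emph{every} admissible labelling, and no permutation of labels within the class can yield $h(r(l+1))>l(r(l))$. Nor is this a labelling artifact: in that subcase every slot held by $r(l+1)$ has quality exactly $\alpha_{l(r(l))}$, which is the lowest quality held by $r(l)$, so no swap between the two bidders increases $r(l)$'s weighted capacity and $r(l+1)\notin N_{r(l)}$ is a genuine fact. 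To close this subcase you would need either to reach $r(l+1)$ through a longer chain of intermediate bidders (membership in $\tilde N_{r(l)}$ need not come from $N_{r(l)}$ directly) or to redefine the $r$-order so that bidders whose lowest-quality slots have equal $\alpha$-\emph{value} (not merely equal index) are tie-broken by valuation. As written, the proposal leaves the subcase open.
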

\begin{proof}
Assume by contradiction that $v_{r(l)} > v_{r(l+1)}$ for some $1 \leq l < s$. Since $l(r(l)) < l(r(l+1)) \leq h(r(l+1))$,
$r(l+1) \in \tilde N_{r(l)}.$ Since $r(l) \in B$, it follows that $v_{r(l)} \leq v_{r(l+1)}$. Contradiction!
\end{proof}

Note that $\tilde N_{r(l+1)} \subseteq \tilde N_{r(l)}$ for $1 \leq l < s$, i.e., a bidder $b\in I$ can belong to multiple such sets.
We define for each bidder $b \in I^-$ a unique ``top'' $i \in B^+$ to whose set $\tilde N_i$ bidder $b$ belongs. More formally, we define a mapping as follows: 
Let $p(b):= \arg \max_{i\in B^+ : b\in \tilde{N}_i} r(i)$ which is the maximum $i \in B^+$ (in $r$-order) with $b \in \tilde N_i$.
Let $A_i = \sum_{b \in I^- \cap \tilde N_i:p(b) = i} (-q_b)$.
By the definition of the mapping $p$ we have that  (d) 
$\sum_{b \in I^- \cap \tilde N_{r(1)} } (-q_b v_b) \geq$ $ \sum_{b \in I^- \cap \tilde N_{r(1)}} (-q_b v_{p(b)}) $ 
$= \sum_{i \in B^+}  (\sum_{b \in I^- \cap \tilde N_i:p(b) = i} (-q_b) ) v_i $ 
$= \sum_{i \in B^+} A_i v_i$.

The following claim simply states that all bidders from $r(l)$ to $r(s)$ ``receive'' all their increases in weighted capacity from bidders in $\tilde N_{r(l)}$.

\begin{clm}\label{claim2}
For all $1 \leq l \leq s$ it holds that $\sum_{l \leq t \leq s} A_{r(t)} = \sum_{b \in I^- \cap \tilde N_{r(l)}} (-q_b)
\geq \sum_{l \leq t \leq s} q_{r(t)}$.
\end{clm}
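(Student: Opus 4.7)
The plan is to split Claim~2 into the equality and the inequality. For the equality $\sum_{l \le t \le s} A_{r(t)} = \sum_{b \in I^- \cap \tilde N_{r(l)}}(-q_b)$: the note $\tilde N_{r(t')} \subseteq \tilde N_{r(t)}$ for $t' \ge t$ ensures that the map $p(\cdot)$ partitions $I^- \cap \tilde N_{r(l)}$ into disjoint blocks $\{b : p(b) = r(t)\}$ indexed by $t \in \{l, \ldots, s\}$ (with block $t$ being the unique block to which $b$ belongs, namely the largest $t$ with $b \in \tilde N_{r(t)}$), so summing $A_{r(t)}$ over these blocks reproduces the right-hand side.

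For the inequality, I would introduce the set $T := \tilde N_{r(l)} \cup \{r(l), r(l+1), \ldots, r(s)\}$ and prove two key properties. First, $T$ is closed under the relation $N$: on $\tilde N_{r(l)} \cup \{r(l)\}$ this is immediate from the transitive-closure definition of $\tilde N$, and for each extra $r(t)$ with $t \ge l$ the nesting $\tilde N_{r(t)} \subseteq \tilde N_{r(l)}$ combined with $N_{r(t)} \subseteq \tilde N_{r(t)} \cup \{r(t)\}$ yields $N_{r(t)} \subseteq T$. Second, the total weighted capacity $C_T(X) := \sum_{i \in T} \sum_j \alpha_j x_{i,j}$ is the maximum achievable by $T$ over all feasible distributions of the $\sum_{i \in T} \kappa_i$ slot-units among the slots. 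To prove this by contradiction, suppose some $b \notin T$ owned positive mass of a slot $j$ while some $a \in T$ owned a slot $j'$ with $\alpha_{j'} < \alpha_j$. Then $\alpha_{h(b)} \ge \alpha_j > \alpha_{j'} \ge \alpha_{l(a)}$, and since $\alpha$ is non-decreasing in index this forces $h(b) > l(a)$, so $b \in N_a$, contradicting the closure of $T$. Hence $(X, p)$ already maximizes $C_T$, and every legal $(X', p')$ satisfies $\sum_{i \in T} q_i = C_T(X') - C_T(X) \le 0$.

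To conclude, splitting $\sum_{i \in T} q_i \le 0$ according to the sign of $q_i$ yields
$$\sum_{b \in T \cap I^-}(-q_b) \;\ge\; \sum_{i \in T \cap (B^+ \cup C^+)} q_i \;\ge\; \sum_{t \ge l} q_{r(t)},$$
where the last step uses $\{r(t) : t \ge l\} \subseteq T \cap B^+$ together with $q_{r(t)} \ge 0$. Since each $r(t) \in B^+$ lies outside $I^-$, we have $T \cap I^- = \tilde N_{r(l)} \cap I^-$, yielding the claimed inequality. The main obstacle I anticipate is the capacity-maximality lemma, in particular the careful handling of $\alpha$-ties: the lemma must be stated at the level of weighted capacity rather than of specific slot identities, so that the strict inequality $\alpha_{h(b)} > \alpha_{l(a)}$ can be extracted precisely in those situations where the allocation fails to be capacity-optimal for $T$.
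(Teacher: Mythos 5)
Your proof is correct, and for the inequality (the substantive part of the claim) it takes a genuinely different route from the paper. The paper describes $X$ and $X'$ as a bipartite directed graph, decomposes their difference into alternating cycles $\Pi$ (justifying the decomposition via flow conservation), and proves $\sum_{i\in\tilde N_{r(l)}}q_i(\Pi')\leq 0$ by induction on the set of cycles, using at the key step the observation that $h(i)\leq l(s)$ for $i\notin\tilde N_{r(l)}$ and $s\in\tilde N_{r(l)}$, so that on each cycle the bidders outside the set cannot lose capacity to those inside. You instead prove a static statement: the set $T=\tilde N_{r(l)}\cup\{r(l),\dots,r(s)\}$ is closed under the relation $N$, and closure forces every slot-unit held by $T$ in $X$ to have quality at least that of every unit held outside $T$; since $T$ holds exactly $\sum_{i\in T}\kappa_i$ units in every legal allocation (using $m=\sum_i\kappa_i$), the aggregate weighted capacity of $T$ is already maximal, so $\sum_{i\in T}q_i\leq 0$ for any legal $X'$. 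The underlying structural fact is the same, but your argument dispenses with the cycle-decomposition machinery entirely and is more self-contained. It also handles more carefully a point the paper glosses over: whether the bidders $r(t)$, $t\geq l$, actually lie in $\tilde N_{r(l)}$ (they need not, because of $\alpha$-ties in the definition of $h$); by adjoining them explicitly to $T$ and checking that closure is preserved via $N_{r(t)}\subseteq\tilde N_{r(t)}\cup\{r(t)\}$ and the nesting of the $\tilde N$-sets, you sidestep that gap cleanly. Your treatment of the equality via the partition induced by the map $p(\cdot)$ matches the paper's (implicit) justification.
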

\begin{proof}
Consider bidders $\{ r(t)| t \in \{ l,\dots, s \}\}\subseteq B^+$. We show that bidder $i$ can increase his weighted capacity in the Pareto superior assignment $X'$ only at expenses of the  reduction of the weighted capacity of  bidders in $\tilde N_{i}$.  This in turn implies 
$\sum_{b \in I^- \cap \tilde N_{r(l)}} (-q_b) \geq \sum_{l \leq t \leq s} q_{r(t)}$. 

Let us describe the assignment $X$ and the Pareto superior assignment $X'$ by a weighted bipartite directed graph $G=(V,E\cup E')$ with the vertex set $V= I\cup J$, the edge 
sets $E=\{(i,j)\in I\times J|x_{i,j}>0\}$ and $E'= \{(j,i)\in J\times I| x'_{i,j}>0\}$, and the weights $w_{i,j}=x_{i,j}\ \forall (i,j)\in I\times J$ and $w_{j,i}=x'_{i,j}\ \forall (j,i)\in J\times I$.   Edges from $I$ to $J$ are weighted by the corresponding real-numbered  value $x_{i,j}$.  Edges from $J$ to $I$ are weighted by the corresponding real-numbered value $x'_{i,j}$.   
Consider a path $\pi = (i_1, j_1, i_2, j_2, \ldots, i_{k-1}, j_{k-1}, i_k)$ in the bipartite graph. We say that the path $\pi$ is an alternating path of length $k$ with respect to the assignments $X$ and $X'$  if $(i_t,j_t) \in E$ and $(j_t,i_{t+1}) \in E'$ for all $1 \leq t<k$.  It is an alternating cycle if $i_1 = i_k$.
Since for any assignment $\sum_{i\in I} x_{i,j} = 1\ \forall j\in J$, and $\sum_{j\in J} x_{i,j} = \kappa_i\ \forall i\in I$, it holds that
\begin{align}
\sum_{j\in J} (w_{i,j} - w_{j,i}) &= 0\ \forall i\in I,\ \text{and} \label{equalbid}\\
\sum_{i\in I} (w_{i,j} - w_{j,i}) &= 0\ \forall j\in J. \label{equalround}
\end{align}
 
We decompose the bipartite graph in a set of at most $|I|\,|J|$ alternating cycles that we denote by $\Pi$.  We start from the edge $(i,j)$ or $(j,i)$ with the lowest weight 
$\lambda = \min_{(x,y)\in E\cup E'} w_{x,y}$.  We traverse the bipartite graph starting from edge $(x,y)$ and find a path going from vertex $y$ to vertex $x$. This gives us a cycle~$\pi$. If such a path would not exist we could partition the set of vertices into three disjoint subsets: $V_1$ contains $x$ and all the start vertices of paths ending at~$x$, $V_2$ contains $y$ and all the end vertices of a paths starting at~$y$, and $V_3$ contains all the remaining vertices. The edge $(x,y)$ would be directed from a vertex in $V_1$ to a vertex that is not in $V_1$ and has a positive weight and no edge would be directed from a vertex that is not in $V_1$ to a vertex in $V_1$. Thus, $\sum_{u\in V_1, v\in V_2\cup V_3} w_{u,v}>0$ and $\sum_{u\in V_1, v\in V_2\cup V_3} w_{v,u} = 0$, which would contradict \eqref{equalbid} and \eqref{equalround}, and hence, a cycle $\pi$ has to exist.

Let us denote by $\lambda_{\pi} = \lambda$ the capacity of cycle $\pi$. We then reduce by $\lambda_{\pi}$ the weight of all edges on $\pi$ and we remove from the bipartite graph all edges with $0$ remaining weight.  Observe that equations 
\eqref{equalbid} and \eqref{equalround} still hold for the resulting graph. It is therefore possible to continue this procedure until the graph is empty. 

Given a cycle $\pi = (i_1, j_1, i_2, j_2, \ldots, i_{k-1}, j_{k-1}, i_k)$, we abuse notation by  denoting by $\pi$ also the set of bidders $\{i_1,  i_2, \ldots, i_k\}$. 
For a bidder $i\in \pi$, let us define $t_\pi(i)$ and $t'_\pi(i)$ such that $(i,t_\pi(i))\in E$ and  $(t'_\pi(i),i)\in E'$ are edges of the cycle.
We use $\alpha(j)$ for $\alpha_j$, which is the quality of slot~$j\in J$.

Given a bidder $i\in I$ and a set of alternating cycles $\Pi' \subseteq \Pi$ we define
\begin{equation*}
q_{i} (\Pi') = \sum_{\pi\in \Pi': i\in \pi} \lambda_{\pi} (\alpha(t'_\pi(i)) - \alpha(t_\pi(i)))
\end{equation*}
as the increase of the  weighted capacity of bidder $i$ when moving from the assignment $X$ to a new assignment by the    
set of cycles $\Pi'$.  Note that $q_i = q_i(\Pi)$ for every bidder $i$. It holds for each $\pi \in \Pi$ that
\begin{equation}
\label{eq:sum0}
\sum_{i\in \pi} q_i(\{\pi\}) =0.
\end{equation}

We prove the claim now by induction on a set of cycles  $\Pi$.  We actually prove the stronger statement
\begin{equation*}
\sum_{b \in \tilde N_{r(l)}\setminus B^+: q_b (\Pi) < 0} (-q_b (\Pi)) \geq  \sum_{i \in \tilde N_{r(l)}\setminus B^+: q_{i}(\Pi) \geq 0} q_{i} (\Pi)   +  \sum_{l \leq t \leq s} q_{r(t)}(\Pi).
\end{equation*}

Observe that the statement above can greatly be simplified by observing that all bidders in $\tilde N_{r(l)}$ appear in the above inequality.  
It is therefore enough to prove  for each set of cycles $\Pi$ that
\begin{equation}
\label{eq:sumless0}
\sum_{i\in \tilde N_{r(l)}} q_i(\Pi)\leq 0.
\end{equation}
It clearly holds for $\Pi = \emptyset$. 
Assume it holds for $\Pi$, we prove in the following that it then holds also for $\Pi' = \Pi \cup \{\pi \}$. 

Since 
\begin{equation}
\sum_{i\in \tilde N_{r(l)}} q_i(\Pi') = \sum_{i\in \tilde N_{r(l)}} q_i(\Pi) + \sum_{i\in \tilde N_{r(l)} \cap \pi}q_i(\{\pi\}),
\end{equation}
it is sufficient to prove
\begin{equation}
\label{eq:finalstat}
\sum_{i\in \tilde N_{r(l)} \cap \pi}q_i(\{\pi\})\leq 0.
\end{equation}

For any bidder $s\in \tilde N_{r(l)}$ and for any bidder  $i\notin \tilde N_{r(l)}$ it holds that $h(i) \leq l(s)$.  This implies in turn that 
any bidder in $i\in \pi \cap \tilde N_{r(l)}$ will only increase his weighted capacity when swapping a fraction of a slot against a fraction of a slot that is assigned to another bidder $s\in  \pi \cap \tilde N_{r(l)}$ in $X$.  
It follows
$\sum_{i\in \pi \setminus \tilde N_{r(l)}} q_i(\{\pi\}) \geq 0.$
Combined with  Equation \eqref{eq:sum0} this yields the proof of the statement of Equation \eqref{eq:finalstat}.
\end{proof}
We need one more auxiliary lemma before completing the proof of the second inequality of (B).

\begin{clm}\label{claim3}
If $(X',p')$ is a Pareto superior solution to $(X,p)$ then
for every $1 \leq l \leq s$
\begin{equation*}
\sum_{l \leq t \leq s} A_{r(t)} v_{r(t)} \geq \sum_{l \leq t \leq s} q_{r(t)} v_{r(t)} + \sum_{l \leq t \leq s}
(A_{r(t)} - q_{r(t)}) v_{r(l)}.
\end{equation*}
\end{clm}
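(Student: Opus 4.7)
Bringing the last summation on the right-hand side to the left and factoring, the claim is equivalent to
\[
\sum_{t=l}^{s} \bigl(A_{r(t)} - q_{r(t)}\bigr)\bigl(v_{r(t)} - v_{r(l)}\bigr) \;\geq\; 0.
\]
The first factor need not be nonnegative term by term (only its full sum from $l$ to $s$ is nonneg by Claim~\ref{claim2}), so a pointwise argument fails. The plan is to treat this as a weighted tail sum and apply summation by parts, feeding in the two monotonicity facts already established in the paper.

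First I would isolate the two ingredients. By Claim~\ref{claim1}, the sequence $\bigl(v_{r(t)}\bigr)_{t=l}^{s}$ is non-decreasing, so $v_{r(t)} - v_{r(l)} \geq 0$ for every $t$ and, more importantly, the consecutive increments $v_{r(t)} - v_{r(t-1)}$ are all nonneg. Moreover, Claim~\ref{claim2} is stated for \emph{every} starting index, so for each $t' \in \{l, l+1, \dots, s\}$ the suffix sum
\[
F(t') \;:=\; \sum_{t=t'}^{s} \bigl(A_{r(t)} - q_{r(t)}\bigr)
\]
is nonneg, with the convention $F(s+1) = 0$.

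Next I would apply Abel summation with $a_t := A_{r(t)} - q_{r(t)} = F(t) - F(t+1)$ and $b_t := v_{r(t)} - v_{r(l)}$, noting that the boundary term at $t=l$ vanishes because $b_l = 0$. Standard rearrangement gives
\[
\sum_{t=l}^{s} a_t\, b_t \;=\; \sum_{t=l+1}^{s} F(t)\,\bigl(v_{r(t)} - v_{r(t-1)}\bigr).
\]
Each factor on the right is nonneg by the two ingredients above, so the sum is nonneg, which is exactly the rearranged form of the claim.

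The main obstacle is just keeping the indexing of the summation by parts correct and verifying that Claim~\ref{claim2} really does supply the nonneg of every suffix $F(t')$, not merely of $F(l)$; once that is observed, the argument collapses to one line of Abel summation and two sign checks.
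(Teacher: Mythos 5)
Your proof is correct and rests on exactly the same two ingredients as the paper's: the monotonicity $v_{r(t)}\leq v_{r(t+1)}$ from Claim~\ref{claim1} and the nonnegativity of every suffix sum $\sum_{t'\leq t\leq s}(A_{r(t)}-q_{r(t)})$ from Claim~\ref{claim2}. The paper proves the claim by backwards induction on $l$, which is just the unrolled, step-by-step form of your Abel summation, so the two arguments are essentially identical.
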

\begin{proof}
We use backwards induction on $l$.
For $l = s$, it trivially holds that $A_{r(s)} v_{r(s)} \geq q_{r(s)} v_{r(s)} + (A_{r(s)} - q_{r(s)} )v_{r(s)}$.

For $l < s$, we use the inductive claim for $l+1$, Claim~\ref{claim2}, and the fact that $v_{r(l+1)} \geq v_{r(l)}$ according to Claim~\ref{claim1}.
Thus,
$\sum_{l \leq t \leq s} A_{r(t)} v_{r(t)} = $
$\sum_{l+1 \leq t \leq s} A_{r(t)} v_{r(t)} + A_{r(l)} v_{r(l)} \geq$
$\sum_{l+1 \leq t \leq s} q_{r(t)} v_{r(t)} + \sum_{l+1 \leq t \leq s}
(A_{r(t)} - q_{r(t)}) v_{r(l+1)} + A_{r(l)} v_{r(l)} \geq $
$\sum_{l \leq t \leq s} q_{r(t)} v_{r(t)} + $
$\sum_{l+1 \leq t \leq s}(A_{r(t)} - q_{r(t)}) v_{r(l)} + $
$(A_{r(l)} - q_{r(l)}) v_{r(l)} =$
$\sum_{l \leq t \leq s} q_{r(t)} v_{r(t)} + \sum_{l \leq t \leq s}(A_{r(t)} - q_{r(t)}) v_{r(l)}.$
\end{proof}

By Claim~\ref{claim2} it follows that $\sum_{1 \leq t \leq s} (A_{r(t)} - q_{r(t)}) \geq 0$, and thus,  by (d) and Claim~\ref{claim3} it follows that
$\sum_{b \in I^-} (-q_b) v_b \geq$
$\sum_{i \in  B^+} A_i v_i  =$
$\sum_{1 \leq t \leq s} A_{r(t)} v_{r(t)} $
$ \geq \sum_{1 \leq t \leq s} q_{r(t)} v_{r(t)} $ 
$= \sum_{i \in B^+} q_i v_i$.
This completes the proof of the second inequality of (B).

To show (C) assume by contradiction that $C^+ \not= \emptyset$ and consider two cases that follow from Claim~\ref{claim2}:

{\bf Case 1:} $\sum_{i\in B^+} A_{i} > \sum_{i \in B^+} q_i.$
Combined with (d) and Claim~\ref{claim3} this shows that
$\sum_{b \in I^-} (-q_b v_b) \geq \sum_{i\in B^+} A_{i} v_{i} > \sum_{i\in B^+} q_{i} v_{i}.$ But this is a contradiction to (c) above.

{\bf Case 2:} $\sum_{i\in B^+} A_{i} = \sum_{i \in B^+} q_i.$
Note that $\sum_{i\in B^+} A_{i} = \sum_{b \in I^- \cap \tilde N_{r(1)}} (-q_b)$.
Then (a) implies $\sum_{b \in I^- \setminus \tilde N_{r(1)}} (-q_b) = \sum_{i \in C^+} q_i > 0$.
By (c) $\sum_{i \in B^+} q_i v_i = \sum_{b \in I^-} (-q_b v_b) =$
$ \sum_{b \in I^- \cap \tilde N_{r(1)}} (-q_b v_b) +$
$\sum_{b \in I^- \setminus \tilde N_{r(1)}} (-q_b v_b) >$
$ \sum_{b \in I^- \cap \tilde N_{r(1)}} (-q_b v_b).$
By Claim~\ref{claim2}, Claim~\ref{claim3}, and (d)  it follows  that \linebreak
$\sum_{b \in I^- \cap \tilde N_{r(1)}} (-q_b v_b) \geq \sum_{i \in B^+} A_i v_i \geq \sum_{i \in B^+} q_i v_i$, which contradicts the previous statement.

\section{Proof of Theorem~3}

Let $(X,p)$ be the allocation computed by the auction and assume by contradiction that there exists a trading swap, i.e., a sequence of bidders $(u = a_0, a_1, \dots, a_k = w)$
that fulfills the above conditions.
Consider the Pareto superior allocation $(X', p')$ constructed in the proof of Theorem~\ref{lem:char2}. Define $\cfin_i := \sum_{j\in J} \alpha_j x_{i,j}$ and $c'_i := \sum_{j \in J} \alpha_j x'_{i,j}$ for all bidders~$i$.
Note that $c'_w = \cfin_w - \delta$, $c'_u = \cfin_u + \delta$, and $\cfin_i = c'_i\ \forall i\in I\setminus \{ u,w\}$. Let $\delta' =  \delta \frac{\pi}{\pi^{+}}  > 0$.

We construct a modified Pareto superior allocation $(X'',p'')$ with $c''_w = \cfin_w - \delta'$, $c''_u = \cfin_u + \delta'$, and $c''_i = \cfin_i\ \forall i\in I\setminus\{ u,w\}$, where 
$c''_i = \sum_{j\in J} \alpha_j x''_{i,j}$.
Specifically, we use the same set of bidders $u = a_0, \dots,a_k = w$, perform the swaps between the same bidders as for $(X',p')$, but use as
swap values $\tau_{p+1}' := \tau_{p+1} \frac{\pi}{\pi^{+}}$ instead of $\tau_{p+1}$ and as payments
$p_{u}'' = p_{u} + v_{w} \delta'$, $p_w'' = p_w - v_w \delta'$, and $p_i'' = p_i$ for all other bidders~$i$.
By the same argument as for $(X', p')$ the allocation $(X'',p'')$ is Pareto superior to $(X,p)$.

We will show that $(X'',p'')$ can be used to construct a smaller feasible solution to one of the linear programs solved by \textsc{Sell}. Since the linear program has found the minimal solution this leads to a contradiction with the assumption that there exists a trading swap in $(X,p)$.

Let $\bfin_i := b_i - p_i$ be the remaining budget of bidder~$i$ at the end of the algorithm.
The value $c_w$ of bidder $w$ increases only when procedure \textsc{Sell} returns a non-zero value for $\gamma_{w}$, where $w$ was the last parameter when \textsc{Sell} was called, that is, the linear program solved in \textsc{Sell} was trying to minimize $\gamma_{w}$. Since $\cfin_w > c_w''$, there exists
a unique call to procedure \textsc{Sell} 
with parameters $(I,J, \alpha,\kappa, v, c, d, w)$
such that {\em before} the execution of the linear program $c_w \leq c''_w$ and
\textsc{Sell} returns a value $s>0$ such that $c_w + s > c''_w$.
We call the corresponding linear program LP. Its inputs are the vectors $c$, $d$, and $\kappa$, its variables are the matrix $X = (x_{i,j})_{(i,j) \in I \times J}$ and the vector $\gamma$.
 Let $\pi$ be the price at the time of the call. 

We will show that using $(X'', p'')$ we can construct a feasible solution for this  linear program which outputs a value $s' < s$. This leads to the desired contradiction.

We first show the following claim:

\begin{clm} 
Using $(X,p)$ we can find a feasible solution $(X,\tilde{\gamma})$ to LP that fulfills the following additional conditions:
\begin{itemize}
\item for all bidders $i\in A\setminus E$ with $i \not= w$ and $d_i > d_i^+$:
$\tilde{\gamma}_i \leq d_i - \frac{\bfin_i}{\pi}$
\item for all bidders $i\in A\setminus E$ with $i \not= w$ and $d_i = d_i^+$:
$\tilde{\gamma}_i \leq d_i - \frac{\bfin_i}{\pi^{+}}$
\end{itemize}
\end{clm}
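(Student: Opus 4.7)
The plan is to take $\tilde\gamma_i := \cfin_i - c_i$ for every bidder $i$, where $c_i$ refers to the value at the time of the LP call and $\cfin_i = \sum_{j \in J}\alpha_j x_{i,j}$ is the weighted capacity of bidder $i$ in the final allocation $(X,p)$. With this choice, constraints (a), (b), (e) of LP follow from the legality of $(X,p)$; constraint (c) holds by the definition of $\tilde\gamma_i$; and constraint (f) holds because $c_i$ is non-decreasing during the auction. The substantive work is in verifying (d) and the two sharper bounds listed in the claim.

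The key observation is that $\tilde\gamma_i$ equals the total weighted capacity that bidder $i$ clinches between the moment of the LP call and the end of the auction. Let $p_i^{\mathrm{LP}}$ denote the payment of bidder $i$ at the LP call. Decompose $\tilde\gamma_i = s_\pi + s_{>\pi}$, where $s_\pi$ is the capacity clinched at price $\pi$ and $s_{>\pi}$ is the capacity clinched at later prices, each of which is at least $\pi^{+}$. The total payment made by $i$ after the LP call equals $b_i - p_i^{\mathrm{LP}} - \bfin_i$, so the budget identity together with the fact that every unit of $s_{>\pi}$ costs at least $\pi^+$ gives
$$
s_\pi \pi + s_{>\pi}\pi^{+} \;\leq\; b_i - p_i^{\mathrm{LP}} - \bfin_i.
$$
I would then split into cases on $d_i$ at the LP call. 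If $i \in A \setminus E$ and $d_i > d_i^+$, then $b_i - p_i^{\mathrm{LP}} = \pi d_i$; using $\pi \leq \pi^{+}$ in the displayed inequality gives $\pi \tilde\gamma_i \leq \pi d_i - \bfin_i$, i.e., $\tilde\gamma_i \leq d_i - \bfin_i/\pi$. If $i \in A \setminus E$ and $d_i = d_i^+$, then $b_i - p_i^{\mathrm{LP}} = \pi^+ d_i$ and every further payment is at price $\geq \pi^+$, so the inequality becomes $\pi^+ \tilde\gamma_i \leq \pi^+ d_i - \bfin_i$, giving $\tilde\gamma_i \leq d_i - \bfin_i/\pi^{+}$. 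For $i \in E$ not yet processed, only price $\pi$ is ever charged and $\tilde\gamma_i \leq d_i$ follows directly from the budget inequality; for $i \in E$ already processed, and for bidders inactive at the LP call, both $d_i$ and $\tilde\gamma_i$ are zero. Since $\bfin_i \geq 0$, in every case $\tilde\gamma_i \leq d_i$, establishing (d) together with the two listed bounds for $i \in A\setminus E$, $i\neq w$.

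The main obstacle will be the bookkeeping inside the inner while loop: a bidder's individual price can flip from $\pi$ to $\pi^{+}$ in the middle of a single outer-loop iteration, and different bidders flip at different moments, so one has to be careful about which portion of $\tilde\gamma_i$ is paid at which price. The clean structural fact that rescues the accounting is that once the price used for a particular bidder has been raised to $\pi^{+}$, every subsequent payment that bidder makes is at a price at least $\pi^{+}$. This is exactly what licenses the bound $s_{>\pi}\pi^{+}$ in the displayed inequality and, together with $\pi \leq \pi^{+}$, makes the monotonicity step that converts the budget inequality into the desired bounds on $\tilde\gamma_i$ routine.
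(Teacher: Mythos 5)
Your proposal is correct and follows essentially the same route as the paper's proof: the same choice $\tilde\gamma_i=\cfin_i-c_i$, the same observation that $\tilde\gamma_i$ is the capacity clinched after the LP call, and the same case split on $d_i>d_i^+$ versus $d_i=d_i^+$ using the budget identities $b_i-p_i^{\mathrm{LP}}=\pi d_i$ and $b_i-p_i^{\mathrm{LP}}=\pi^{+}d_i$ together with the fact that post-LP clinching is charged at least $\pi$ (resp.\ $\pi^{+}$). Your explicit decomposition $\tilde\gamma_i=s_\pi+s_{>\pi}$ and the handling of exiting/inactive bidders just make explicit what the paper leaves implicit.
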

\begin{proof}
We first recall  that the case $d_i > d_i^+$ happens when the LP is computed before the demand of bidder $i$ has been updated as in line 25 of \textsc{Auction}.  The case  $d_i = d_i^+$ happens after the update has been made for bidder $i$.  

To prove this claim we
set $\tilde{\gamma}_i := \sum_{j\in J} \alpha_j x_{i,j} - c_i = \cfin_i - c_i$.  
First we show that $(X, \tilde{\gamma})$ fulfills the constraints of LP.
Since the allocation $(X,p)$ is derived from the last linear program executed by the algorithm, 
it fulfills the conditions
$\sum_{i \in I} x_{i,j} = 1\ \forall j \in J$ and $\sum_{j \in J} x_{i,j} = \kappa_i\ \forall i \in I$. 
By definition $\sum_{j \in J} x_{i,j} \alpha_j - \tilde{\gamma}_i = c_i\ \forall i \in I$.

 Recall that
$\bfin_i$ is the remaining budget of bidder~$i$ at the end of the auction, that is, the money {\em not} spent by $i$.
Note that  bidder~$i$ clinched $\tilde{\gamma}_i = \cfin_i - c_i$ ``weighted capacity'' after LP was executed.

{\bf Case 1:} Consider first a bidder $i \not= w$ with $d_i > d_i^+$.
Note that for bidders of this type the remaining budget when LP is called is
$d_i \pi$ and that these bidders pay a price per ``weighted capacity unit'' of at least $\pi$ for all capacity that was not clinched before LP was executed. Thus, bidder~$i$ pays $ d_i \pi - \bfin_i$ for all the ``weighted capacity'' that was not clinched before LP was executed. 
Thus, $\tilde{\gamma}_i \pi \leq d_i \pi - \bfin_i$.

{\bf Case 2:} Consider next a bidder $i \not= w$ with $d_i = d_i^+$. 
Note that for bidders of this type the remaining budget when LP is called is
$d_i \pi^{+}$ and that these bidders pay a price per ``weighted capacity unit'' of at least $\pi^{+}$ for all capacity that was not clinched before LP since they can only clinch 
at the price $\pi^{+}$ or higher. Note that we know that $\pi^{+} \leq v_i:$ Since $i \in A\setminus E$ it holds that $v_i > \pi$, and therefore, $v_i\geq \pi^{+}$. 
Thus, bidder~$i$ pays $d_i \pi^{+} - \bfin_i$ for all the ``weighted capacity'' clinched after LP was executed. 
Thus, $\tilde{\gamma}_i \pi^{+} \leq d_i \pi^{+} - \bfin_i.$
\end{proof}

Next we define $\gamma_i'' = \sum_{j\in J} x''_{i,j} \alpha_j - c_i = c_i'' - c_i$ for all $i\in I$ and show that $(X'', \gamma'')$ is a feasible solution of LP and that $\gamma''_w < s$ thus leading to a contradiction.
Note that $\gamma_u'' = \tilde{\gamma}_u + \delta'$.
By the definition of $X''$ for all $i \in I$ it holds that $\sum_{j\in J} x''_{i,j} = \sum_{j\in J} x_{i,j} = \kappa_i$ as
whenever for some $\tau$ with $-1\leq \tau\leq 1$, $x_{i,j}''$ is set to $x_{i,j} + \tau$ for some $j \in J$, $x_{i,l}''$ is set to $x_{i,l} - \tau$ for some other $l \in J$.
Additionally for every $j \in J$ it holds that $\sum_{i\in I} x_{i,j}'' = \sum_{i\in I} x_{i,j} = 1$ as whenever 
$x_{a_p,j}''$ is set to $x_{a_p,j} + \tau$ for some $\tau$ with $-1\leq\tau\leq 1$, either $x_{a_{p+1}, j}''$ is set to $x_{a_{p+1}, j} - \tau$, or $x_{a_{p-1}, j}''$ is set to $x_{a_{p-1}, j} - \tau$.
Thus $(X'', \gamma'')$ fulfills constraints (a) and (b) of LP. By the definition of $\gamma''$
constraint (c) also holds. 

For constraint (d)  note that for all $i \in I \setminus \{u, w\}$ we know that $\gamma''_i = \tilde{\gamma}_i \leq d_i$, and thus, constraint (d) holds for such $i$. 
For $i = w$,  by definition of a trading swap $\sum_{j\in J} \alpha_j  x''_{i,j} < \sum_{j\in J} \alpha_j x_{i,j}$, and thus,
$\gamma''_w < \tilde{\gamma}_w\leq d_w$. Hence constraint (d) also holds for $i = w$.
For $i = u$,  we know that $\gamma''_u = \tilde{\gamma}_u + \delta'$ and we have to show that $d_u \geq \gamma_u''$.

Since $\cfin_w > c_w$ we know that $w$ is still an active bidder when LP is executed, and thus,
$v_w \geq \pi$. 
Hence, $\bfin_u = b_u - p_u \geq p'_u - p_u = v_w \delta \geq \pi \delta$. 

By $v_w\geq\pi$ it follows from the definition of a trading swap that $v_u > \pi$ and that therefore $u\in A\setminus E$.
Consider first the case that $d_u > d_u^+$.
By the previous claim it follows that
$d_u \geq \tilde{\gamma}_u + \frac{\bfin_u}{\pi} \geq \tilde{\gamma}_u + \delta  = \gamma_u'' + \delta - \delta' >  \gamma_u''$.
Consider next the case that $d_u = d_u^+$. By 
 the previous claim it follows that
$d_u \geq \tilde{\gamma}_u + \frac{\bfin_u}{\pi^{+}} \geq \tilde{\gamma}_u + \delta \frac{\pi}{\pi^{+}} = \tilde{\gamma}_u + \delta'  = \gamma_u''$.

It remains to show that $\gamma''_w < s$. Recall that by the definition of LP it holds that
$c_w + s > c_w''$, while, by definition of $\gamma_w''$, $c_w + \gamma_w'' = c_w''$. Thus $\gamma_w'' < s$, which leads to the desired contradiction.

\section{Proof of Lemma~1}

We first show the following claim:
\begin{clm}\label{claim:randpo}
For every legal allocation $(N,p)$ in the indivisible case there exists a legal allocation $(X,p)$ in the  divisible case where all the bidders and the auctioneer have the same utility.
\end{clm}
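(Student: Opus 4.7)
The plan is to construct the divisible allocation directly by averaging the indivisible assignment across rounds. Given the legal indivisible allocation $(N, p)$ with $N = (x_{i,j,r})_{(i,j,r) \in I \times J \times R}$, I would define a divisible allocation $X = (x_{i,j})_{(i,j) \in I \times J}$ by
\[
x_{i,j} := \frac{1}{|R|} \sum_{r \in R} x_{i,j,r},
\]
and keep the payment vector $p$ unchanged. This is the only natural candidate: since the indivisible weighted capacity already contains a factor $1/|R|$ in its definition, averaging over rounds is precisely what lines up the two notions of weighted capacity.

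Next I would verify that $(X,p)$ satisfies the three legality conditions of the divisible case. The demand constraint holds because $\sum_{j\in J} x_{i,j} = \frac{1}{|R|} \sum_{r \in R} \sum_{j \in J} x_{i,j,r} \leq \frac{1}{|R|} \sum_{r\in R} \kappa_i = \kappa_i$, using the per-round demand constraint of $(N,p)$. The slot-filling condition holds because $\sum_{i \in I} x_{i,j} = \frac{1}{|R|} \sum_{r \in R} \sum_{i \in I} x_{i,j,r} = \frac{1}{|R|} \sum_{r \in R} 1 = 1$. The budget condition $b_i \geq p_i$ is inherited directly from the legality of $(N,p)$ since payments are identical.

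Finally I would verify utility equivalence. The divisible weighted capacity of bidder $i$ is
\[
\sum_{j \in J} \alpha_j x_{i,j} = \sum_{j \in J} \alpha_j \cdot \frac{1}{|R|} \sum_{r \in R} x_{i,j,r} = \sum_{j \in J} \frac{\alpha_j}{|R|} \sum_{r \in R} x_{i,j,r},
\]
which is exactly the indivisible weighted capacity. Since bidder utility $u_i = c_i v_i - p_i$ depends only on the weighted capacity and the payment, both of which are preserved, every bidder's utility coincides. Because the payments are identical, the auctioneer's utility $\sum_{i \in I} p_i$ is also unchanged. There is no real obstacle here beyond routine verification; the substance of the lemma (extending this from a single allocation to a probability distribution) will only appear in the full Lemma statement, where one would additionally use linearity of expectation together with this claim applied to each realized outcome and then invoke convexity of the set of legal divisible allocations to average.
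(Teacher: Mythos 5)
Your construction $x_{i,j} = \frac{1}{|R|}\sum_{r\in R} x_{i,j,r}$ is exactly the paper's (the paper writes it as $|\{r\in R \mid n_{j,r}=i\}|/|R|$), and your verification of the demand, slot-filling, and budget conditions together with utility equivalence matches the paper's argument step for step. The proposal is correct and takes essentially the same approach.
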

\begin{proof}
The utility of the auctioneer stays unchanged, since we leave the payments unchanged. We set $x_{i,j}=\frac{|\{ r\in R| n_{j,r} = i\}|}{|R|}\ \forall i\in I,\ \forall j\in J$. The utility of bidder $i$ is the same for $(N,p)$ and $(X,p)$, since the utility of bidder $i$ is $\sum_{j\in J}\frac{\alpha_j}{|R|}|\{ r\in R | n_{j,r} = i \}| v_i -p_i=\sum_{j\in J} \alpha_j x_{i,j} v_i -p_i$ for $(N,p)$. The demand constraint for $(N,p)$ implies $\kappa_i\geq \max_{r\in R}|\{ j\in J| n_{j,r}=i\}|\geq \frac{|\{ (j,r)\in J\times R | n_{j,r} = i\}|}{|R|} = \sum_{j\in J}\frac{|\{ r\in R |n_{j,r}=i\}|}{|R|}=\sum_{j\in J}x_{i,j}$, and therefore it implies the demand constraint in $(X,p)$. Since all the slots are fully assigned to the bidders in $(N,p)$, and consequently for $(X,p)$, it follows that $(X,p)$ is legal. 
\end{proof}
Given a probability distribution over legal allocations for the indivisible case, transform each legal allocation
that has a non-zero probability into a legal allocation for the divisible case. Then create a new allocation for the divisible case by adding up the all of these legal allocations for the divisible case weighted by the probability distribution. Since the weights are created by a probability distribution, they add up to 1, and thus, the resulting combined allocation fulfills Conditions (1) and (2) of a legal allocation. As the payment is identical to the payment
for the indivisible case, Condition (3) is also fulfilled.

\section{Construction of the matrices in Section~4}

We construct $\tilde{M}$ in the following way: we give every entry $(j,c)\in J\times C$ of $M$ that has the value $i\in I$ a unique number $l_{j,c}$ that is starting from 1; we assign the same value to two entries $(j,c)$ and $(j',c')$ in $\tilde{M}$ if and only if the corresponding entries in $M$ have the same values ($m_{j,c}=m_{j',c'}$) and their numbers $l_{j,c}$ and $l_{j',c'}$ fulfill $\lfloor\frac{l_{j,c}}{|C|}\rfloor=\lfloor\frac{l_{j',c'}}{|C|}\rfloor$. To be more concrete, let the indicator variable $\delta_{i,j,c}=1$ if $m_{j,c}=i$ and  $\delta_{i,j,c}=0$ otherwise.  We define $l_{j,c}:=\sum_{(j',c')\in S_{j,c}} \delta_{m_{j,c},j',c'}$ where $S_{j,c}=\{ (j',c')\in J\times C | (j'<j)\vee (j'=j \wedge c'\leq c)\}$ and construct matrix $\tilde{M}$ of size $|J|\times |C|$ by $\tilde{m}_{j,c}=\left( \sum_{i'=1}^{m_{j,c}-1}\kappa_{i'}\right) + \lfloor \frac{l_{j,c}}{|C|}\rfloor +1$. The matrix $\tilde{M}$ has the property that all the entries that have an identical value in $\tilde M$ have an identical value in $M$, but every value appears at most $|C|$ times in $\tilde M$. Theorem~\ref{thm:swap} gives the swapping algorithm that is applied to $\tilde M$. The pseudo-code of the algorithm is presented in the proof of the theorem.

We can now use the swapping algorithm on $\tilde{M}$ and get a matrix $\tilde{M}'$ of size $|J|\times|C|$ where none value appears more than once in the same column. 

To construct matrix $M'$ we simply reproduce the swaps that happened to matrix $\tilde{M}$ on matrix $M$. We define the matrix $M'$ with size $|J|\times |C|$ where $m'_{j,c}=\min\{ v\in I | \sum_{i=1}^{v-1}\kappa_{i}< m_{j,c}\}$. The values of the entries of $M'$ correspond to the bidders in $I$ and in each column each value $i\in I$ appears at most $\kappa_i$ times. This methodology preserves the amount of capacity of each slot $j\in J$ that is allocated to each bidder $i\in I$ (i.e., $|\{c\in C| m'_{j,c} = i\}| = \sum_{c\in C} \delta_{i,j,c} = y_{i,j}\ \forall i\in I, \forall j\in J$).

\section{Proof of Theorem~4}

Our goal is to find an algorithm that swaps the values between the entries such that each value appears only once in each column. We define $r=|J|$ and $n=|C|$. 
Let  the indicator variable $\delta_{i,j,c}=1$ if $m_{j,c}=i$,  $\delta_{i,j,c}=0$ otherwise.
We define the badness of a value $i\in I$  as $\beta_i(M) = \sum_{j\in J} \sum_{c\in C} (\delta_{i,j,c}) - |\{c\in C| \sum_{j\in J}  \delta_{i,j,c} >0\}|$ 
as the difference between the number of entries which have value $i$ and the number of columns in which $i$ appears. 
Moreover, we define by $\beta(M)=\sum_{i\in I} \beta_i(M)$ the badness  of matrix $M$.  When each value appears at most once in each column, the badness of the matrix is $0$. We aim at reducing the badness of the matrix at each sequence of swaps. 

Let us assume that $i$ appears more than once in column $a$. Then, there exists a column $b$ where $i$ does not appear at all, because each value appears in at most $n$ entries. For the following operations we consider only the columns $a$ and $b$.  We now define a sequence of swaps between pairs of entries of the two columns.  We can see the two columns as the two sides of a bipartite graph.  We set vertices $A =\{a_1,\ldots, a_r\}$ on the left side and vertices $B =\{b_1,\ldots, b_r\}$ on the right side.  The values of the $a_j$ and $b_j$ are $m_{j, a}$ and $m_{j, b}$, respectively  for all $j\in J$.
We set edges $\{(a_j,b_j)|j=1,\ldots, r\}$ from left vertices to right vertices of the same slot, and edges $\{(b_j,a_k)| m_{j,b} = m_{k,a}\}$ from right vertices to left vertices with same value. 

We define a swapping alternating path $(a_{j_1}, b_{j_1}, \ldots, a_{j_t}, b_{j_t})$ on the bipartite graph.  The path starts with a vertex of the left side and ends with a vertex of the right side.  We start with  $a_{j_1}$, one of the vertices on the left side with value $i$, and set $i_0=i$. Vertex $a_{j_k}$ is followed in the path by vertex $b_{j_k}$.  Let $m_{j_k,b}=i_k$ be the value of vertex $b_{j_k}$.  If value $i_k$ appears more than once on column $b$ then  we end the path.  Otherwise, we continue to the left with any of the edges $(b_{j_k},a_{j_{k+1}})$, if any such edge exists. Finally, we implement $t$ swaps by exchanging the values of the endpoints  $(a_{j_k}, b_{j_k})$ of each edge on the path, i.e., we exchange $m_{j_k,a}$ with $m_{j_k,b}$.

We prove two claims:

\begin{clm}
The sequence $(i_1,\ldots,i_t)$ does not contain any value more than once and it does not contain value~$i_0$.
\end{clm}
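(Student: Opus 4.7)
My plan is to prove the two assertions of the claim separately, leveraging the fact that column $b$ was chosen specifically so that the value $i_0 = i$ does not appear anywhere in it. The second part (no $i_k$ equals $i_0$) is the easier of the two and I would dispatch it first: by construction $i_k = m_{j_k, b}$ is a value from column $b$, and by the choice of $b$ the value $i_0$ is absent from column $b$, so $i_k \neq i_0$ for every $k \in \{1, \ldots, t\}$.

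For distinctness I would argue by contradiction, picking the smallest $l$ for which $i_k = i_l$ holds for some $k < l$. If $j_k \neq j_l$, then $i_k = m_{j_k,b} = m_{j_l,b}$ places the value $i_k$ at two distinct positions of column $b$, so $i_k$ appears at least twice there; but the path continued past $b_{j_k}$ (since $k < l \leq t$), which is only allowed when $i_k$ appears at most once in column $b$ --- a contradiction. The harder subcase is $j_k = j_l$: then $a_{j_k} = a_{j_l}$, forcing $m_{j_k,a} = m_{j_l,a}$. By the construction of the backward edges $(b_{j_{k-1}}, a_{j_k})$ and $(b_{j_{l-1}}, a_{j_l})$, the value at $a_{j_k}$ is $i_{k-1}$ (or $i_0$ when $k = 1$) and at $a_{j_l}$ is $i_{l-1}$. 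For $k \geq 2$ this gives $i_{k-1} = i_{l-1}$ with $k-1 < l-1 < l$, contradicting the minimality of $l$; for $k = 1$ it gives $i_0 = i_{l-1}$, contradicting the second assertion established above.

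The main obstacle is this subcase $j_k = j_l$ where the path revisits a row --- the algorithm's description does not directly forbid such revisits, so ruling them out requires combining the minimality of $l$ with the value-propagation rule $m_{j_{k+1}, a} = i_k$ baked into the path's backward edges.
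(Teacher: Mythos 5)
Your proof is correct and follows essentially the same route as the paper's: both parts rest on the path-continuation rule (the path proceeds past $b_{j_k}$ only when $i_k$ occurs exactly once in column $b$) together with the choice of $b$ as a column avoiding $i_0$. Your explicit treatment of the subcase $j_k = j_l$ via minimality of $l$ supplies a detail the paper's terser argument leaves implicit, namely that the path cannot revisit a right-side vertex.
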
 
\begin{proof}
Assume that the path has reached vertex $b_{j_k}$ on the right side of the graph.  The path continues to the left side only if the value $i_k$ 
appears only once on the right side.  Therefore, the sequence $(i_1,\ldots,i_t)$ contains the values of the vertices on the right side only once.
The set does not contain $i_0=i$ since value $i$ does not appear on $b$.  
\end{proof}

\begin{clm}
The sequence of swaps along the edges $(a_{j_k}, b_{j_k})$, $k=1,\ldots,t$,  reduces the total badness of the matrix by at least 1. 
\end{clm}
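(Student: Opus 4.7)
The plan is to compute, for each value $v\in I$, the change in its badness after the swaps and then sum the contributions to bound $\beta(M')-\beta(M)$. Since each swap only exchanges the contents of a paired cell in columns $a$ and $b$, the global count of occurrences of every value is preserved, so $\beta_v(M')-\beta_v(M)=-(N_v(M')-N_v(M))$, where $N_v$ denotes the number of columns in which $v$ appears. Only columns $a$ and $b$ are touched by the swaps, and within them only rows $j_1,\ldots,j_t$ change: after all $t$ swaps, column $a$ at these rows holds $i_1,i_2,\ldots,i_t$ (previously $i_0,i_1,\ldots,i_{t-1}$) and column $b$ at these rows holds $i_0,i_1,\ldots,i_{t-1}$ (previously $i_1,i_2,\ldots,i_t$). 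By the previous claim, the values $i_0,\ldots,i_t$ are pairwise distinct, so contributions of distinct $i_k$'s to $N_v$ in columns $a$ and $b$ can be analyzed independently.

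The core of the argument is a case analysis on three types of value. For $i_0=i$: column $a$ originally contains $i_0$ with multiplicity at least $2$ and loses exactly one such occurrence at $j_1$, so column $a$ still contains $i_0$; column $b$ did not contain $i_0$ before and now contains it at $j_1$. Hence $\Delta N_{i_0}=+1$. For an intermediate value $i_k$ with $1\le k\le t-1$: its unique occurrence among the affected rows in column $a$ migrates from $j_{k+1}$ to $j_k$, and its unique occurrence among the affected rows in column $b$ migrates from $j_k$ to $j_{k+1}$; whether $i_k$ appears elsewhere in column $a$ or $b$ is irrelevant because each column still contains $i_k$ after the swaps, giving $\Delta N_{i_k}=0$. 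For $i_t$: the path terminated either because $i_t$ appears more than once in column $b$, or because $b_{j_t}$ has no outgoing edge to the left, i.e., $i_t$ does not appear in column $a$ at all. In the first case column $b$ retains $i_t$ after losing one copy at $j_t$ and column $a$ gains $i_t$ at $j_t$, so $\Delta N_{i_t}\in\{0,+1\}$ depending on whether $i_t$ was already present in column $a$; in the second case the single copy of $i_t$ moves from column $b$ to column $a$, so $\Delta N_{i_t}=0$. In both subcases $\Delta N_{i_t}\ge 0$.

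Summing, $\beta(M')-\beta(M)=-\Delta N_{i_0}-\Delta N_{i_t}-\sum_{k=1}^{t-1}\Delta N_{i_k}=-1-\Delta N_{i_t}\le -1$, which is exactly the claimed decrease. The main obstacle in executing this plan cleanly is the bookkeeping of the two termination conditions for the alternating path (the second being only implicit in the construction: the path also ends when no edge $(b_{j_t},a_{j_{t+1}})$ exists) and verifying that within the affected rows each value in $\{i_0,\ldots,i_t\}$ occupies exactly one cell per column before and one cell per column after. The previous claim is the crucial ingredient that guarantees the list $i_0,\ldots,i_t$ is free of repetitions and that $i_0$ is distinct from all $i_k$ with $k\ge 1$; without it the bookkeeping above would have to track possible re-introductions of values and the contributions would no longer decouple.
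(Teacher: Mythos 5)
Your proof is correct and follows essentially the same route as the paper's: the badness of $i_0$ drops by one because column $b$ newly acquires $i_0$ while column $a$ retains a copy, the intermediate values $i_1,\dots,i_{t-1}$ keep a presence in both columns, and the two termination conditions guarantee $i_t$'s badness does not increase. Your reformulation via the column-count change $\Delta N_v$ and the explicit handling of the implicit second termination condition only make the same case analysis slightly more careful.
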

\begin{proof}
The first swap of the path reduces the badness of bidder $i=i_0$  by 1, since there exists no value $i$ on any entry of column $b$. 
We now prove that the total badness of bidders $\{i_1,\ldots,i_t\}$ does not increase. Consider any value $i_k$ with $k<t$. 
If value $i_k$ is moved to entry $m_{j_k,a}$ then  $i_k$ appears only once in $b$.  However, value $i_k$ is also moved from entry $m_{j_{k+1},a}$ to $m_{j_{k+1},b}$. Thus, the badness of bidder $i_k$ does not increase. For value $i_t$, that is moved from  $m_{j_t,b}$ to  $m_{j_t,a}$,  we observe that either value $i_t$ appears more than once in $b$ or that there is no entry on $a$ that contains $i_t$. In both cases the badness of $i_t$ does not increase. 
\end{proof}

\begin{algorithm}
\caption{Swapping Algorithm.}
\label{alg:swap}
\begin{algorithmic}[1]
	\Procedure{Swapping}{$M$}
		\State $J\gets\{ 1,\dots,\mathrm{rows}(M)\}$
		\State $C\gets\{ 1,\dots,\mathrm{columns}(M)\}$
		\State $n\gets \max_{(j,c)\in J\times C} m_{j,c}$
		\For{$i\in\{1,\dots,n\}$}
		\State $j_c\gets |\{ j\in J| m_{j,c}=i\}|\ \forall c\in C$
		\While{$\max_{c\in C}(j_c)>1$}
		\State $a\gets \min(\{c\in C| j_c>1\})$
		\State $b\gets \min(\{c\in C| j_c=0\})$
		\State $i'\gets i$
		\State $k\gets 0$
		\Repeat		
		\State $k\gets \min(\{j\in J\setminus\{ k \}| m_{j,a}=i'\})$
		\State $m_{k,a}\gets m_{k,b}$
		\State $m_{k,b}\gets i'$
		\State $i'\gets m_{k,a}$
		\Until{$|\{ j\in J| m_{j,a}=i' \}| = 1 \vee |\{ j\in J| m_{j,b}=i' \}|>0$}
		\State $j_a\gets j_a-1$
		\State $j_b\gets j_b+1$
		\EndWhile
		\EndFor
		\State \Return{$M$}
	\EndProcedure
\end{algorithmic}
\end{algorithm}

The proof of the theorem is therefore completed. Algorithm~\ref{alg:swap} implements the idea of the swaps that is described above.

\section{Proof of Theorem~5}

In order to prove Theorem \ref{thm:paropt} we need the following lemmas and the definition of Pareto equivalent trading paths.

\begin{lem}\label{lem:paths-and-cycles}

Let $H$ and $H^{'}$ be two allocations with all items allocated. The symmetric difference $H \ominus H^{'}$ between the two allocations can be decomposed into a set of alternating paths (simple paths and paths with cycles) with respect to $H$.

\end{lem}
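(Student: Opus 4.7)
The plan is to exploit a balance property at round vertices. Color each edge of $D := H \ominus H'$ as a \emph{minus-edge} if it belongs to $H \setminus H'$ and as a \emph{plus-edge} if it belongs to $H' \setminus H$. Because all slots of every round are allocated in both $H$ and $H'$, each round vertex $t$ has the same number of incident edges in $H$ as in $H'$ (namely, its number of slots); subtracting the edges of $H \cap H'$ from both sides shows that every round vertex in $D$ has equal numbers of incident minus- and plus-edges. No analogous balance is required at bidder vertices.

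Given this balance, I would extract alternating paths from $D$ one at a time by a greedy walk. Pick any bidder $a_1$ that still has an incident minus-edge and traverse that edge to a round $t_1$. From a round vertex just entered along a minus-edge, leave along any remaining plus-edge at that round (one exists by the balance argument below). From a bidder vertex just entered along a plus-edge, leave along any remaining minus-edge if one exists, and otherwise terminate the walk at that bidder. The resulting sequence $(a_1, t_1, a_2, t_2, \ldots, t_{j-1}, a_j)$ satisfies $(a_i, t_i) \in H \setminus H'$ and $(a_{i+1}, t_i) \in H' \setminus H$ for all $i < j$, in particular $t_i \in S_{a_{i+1}}$ and $t_i \notin H_{a_{i+1}}$, so it is an alternating path with respect to $H$ in the paper's sense. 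The walk may revisit vertices, producing a non-simple alternating path, which is explicitly allowed by the lemma statement.

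After removing the edges of an extracted walk from $D$, the balance property is preserved at every round vertex, because each visit to a round consumed exactly one minus- and one plus-edge. Therefore the procedure iterates on the residual graph: as long as any edge remains, some round still has both a minus- and a plus-edge, so some bidder is incident to a remaining minus-edge and a new walk can begin. Since each iteration removes at least two edges, the process terminates and produces the claimed decomposition.

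The main obstacle is the bookkeeping behind ``one can always leave a round vertex along a plus-edge'': within a single walk that revisits the same round several times, one must check that the remaining plus-edges at that round never run out before the remaining minus-edges do. This follows by induction on the number of prior visits to the round within the current walk, since each prior visit consumed one edge of each color, so the local plus-count never drops below the local minus-count while the walk is entering the round. Non-simplicity causes no additional difficulty because the paper's definition of alternating path imposes no simplicity condition, and Lemma~\ref{lem:simple-complex} can be applied afterwards when simple trading paths are needed.
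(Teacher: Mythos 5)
Your proof is correct and follows essentially the same route as the paper's: the paper orients $H$-edges from bidders to items and $H'$-edges from items to bidders and observes that every item has equal in- and out-degree, which is exactly your minus/plus balance at round vertices; your greedy walk extraction just makes explicit the pairing of incoming and outgoing edges at items that the paper asserts. The additional bookkeeping you supply (balance preserved across visits and across extracted walks) is a faithful elaboration of the paper's sketch rather than a different argument.
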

\begin{proof}
Let the graph $G$ be a directed graph. The edges from the matching $H$ are directed from bidders to items and the edges from the matching $H^{'}$ are directed from items to bidders.
Since $H$ and $H^{'}$ have all items allocated, in both matchings there are $m$ edges for every item. Moreover, each item  in $G$ will have an equal number of incoming and outgoing edges.  Thus, no item has to be the start or the end of a path, and we can always find two edges incident to any item such that there are no two consecutive edges from $H$ or $H^{'}$. 
\end{proof}

\begin{proof}[\bf Proof of Lemma \ref{lem:simple-complex}]
Assume that  $\sigma = (a_1, t_1,  \ldots, t_{j-1}, a_j)$ is an alternating path with a cycle. Let $s_k$ be the first vertex of the cycle in the order.  
We decompose the path into  $\sigma_s = (a_1, \ldots, s_{k-1}, s_k)$, $\sigma_c = (s_k,s_{k+1}, \ldots, s_{i-1}, s_k)$, and
$\sigma_e = (s_k,s_{i+1}, \ldots, a_j)$, where $s_k$ is either a bidder or an item.
If $s_k$ is an item, say  $t$, then  $s_{k-1}$, $s_{k+1}$, $s_{i-1}$, and $s_{i+1}$ are bidders, and $t \in S_{s_{k-1}}$, $t \in  S_{s_{k+1}}$, $t \in   S_{s_{i-1}}$, and $t \in   S_{s_{i+1}}$. Moreover,  $t  \in H_{s_{k-1}}$,
$t\in H_{s_{i-1}}$,
$t \not\in H_{s_{k+1}}$, and $t \not\in H_{s_{i+1}}$ by the definition of an alternating path.
Thus, the concatenation of $\sigma_s$ and $\sigma_e$ is still an alternating path and it is simple.
In the same way, if $s_k$ is a bidder, say $a$,  we have that $s_{k-1}$, $s_{k+1}$, $s_{i-1}$, and $s_{i+1}$ are items and $ \{s_{k-1}, s_{k+1}, s_{i-1}, s_{i+1}\} \subseteq S_{a}$, $s_{k-1}\not\in H_{a},
s_{i-1} \not\in H_{a}$, $s_{k+1} \in H_{a}, s_{i+1} \in H_{a}$. 
Thus, we can again concatenate $\sigma_s$ and $\sigma_e$ and obtain a simple alternating path.
The above process can be iterated if there exist more cycles in the path. 
\end{proof}

We conclude that for every trading path there is a Pareto equivalent simple trading path and that if there are no simple trading paths then there are no trading paths at all.

Now we are ready for the proof of Theorem \ref{thm:paropt}.  This proof and the proof of Theorem~\ref{thm:main} follow very closely those of \cite{FLSS11}.

\begin{proof}[\bf Proof of Theorem \ref{thm:paropt}]
Let $\mathcal{Q}$ be the predicate that $(H,p)$ is Pareto optimal, $\mathcal{R}_1$ be the predicate that all items are sold in $(H,p)$, and $\mathcal{R}_2$ the
predicate that there are no trading paths in $G$ with respect to $(H,p)$. We seek to show that $\mathcal{Q} \Leftrightarrow \mathcal{R}_1 \wedge \mathcal{R}_2$.

$\mathcal{Q} \Rightarrow (\mathcal{R}_1\wedge \mathcal{R}_2)$: to prove this we show that $(\neg \mathcal{R}_1 \vee \neg \mathcal{R}_2) \Rightarrow \neg \mathcal{Q}$.

If both $\mathcal{R}_1$ and $\mathcal{R}_2$ are true then this becomes $\mathrm{False} \Rightarrow \neg\mathcal{Q}$ which is trivially true.

If the allocation
$(H,p)$ does not assign all items ($\neg \mathcal{R}_1$) then it is clearly not Pareto optimal ($\neg \mathcal{Q}$). We can get a better allocation by assigning
unsold items to any bidder $i$ with such items in $S_i$. This increases the utility of bidder $i$.

If $\neg \mathcal{R}_2$ then there exists a trading path $\sigma$ in $G$ with respect to $(H,p)$. Let $\sigma = (a_1, t_1,$ $a_2, t_2, \ldots, a_{j-1}, t_{j-1}, a_j)$, with $v_{a_j} > v_{a_1}$
and $b^*_{a_j} \geq v_{a_1}$; then we can decrease the payment of bidder $a_1$ by $v_{a_1}$, increase the payment of bidder $a_j$ by the same $v_{a_1}$, and move item $t_i$ from bidder $a_i$ to
bidder $a_{i+1}$ for all $i=1,\ldots, j-1$. In this case, the utility of bidders $a_1, a_2, \ldots, a_{j-1}$ is unchanged, the utility of bidder $a_j$ increases by $v_{a_j}-v_{a_i} > 0$, and
the utility of the auctioneer is unchanged. The sum of payments by the bidders is likewise unchanged. This contradicts the assumption that $(H,p)$ is Pareto optimal.

We now seek to prove that $(\mathcal{R}_1 \wedge \mathcal{R}_2) \Rightarrow \mathcal{Q}$. We note above that if not all items are allocated ($\neg \mathcal{R}_1$) then the allocation is not
Pareto optimal ($\neg \mathcal{Q}$), thus $\mathcal{Q} \Rightarrow \mathcal{R}_1$ and (trivially) $\mathcal{Q} \Rightarrow \mathcal{Q} \wedge \mathcal{R}_1$ (Pareto optimality implies that all
items
allocated). Thus, $(\mathcal{R}_1 \wedge \mathcal{R}_2) \Rightarrow \mathcal{Q} \Rightarrow \mathcal{Q} \wedge \mathcal{R}_1$. If $\mathcal{R}_1$ is false this predicate becomes $\mathrm{False}\Rightarrow \textrm{False}$, thus we remain with the case where all items are allocated.

We show the contrapositive: $\neg \mathcal{Q}  \Rightarrow (\neg \mathcal{R}_1 \vee \neg \mathcal{R}_2)$.
Assume $\neg \mathcal{Q}$, i.e., assume that $(H,p)$ is not Pareto optimal.  Further assume $\mathcal{R}_1$, that $H$ assigns all items.  We will show $\neg \mathcal{R}_2$, i.e., that there is a trading path with respect to $(H,p)$.  Since $(H,p)$ is not Pareto optimal, there must be some other allocation $(H',p')$ that is not worse for all players (including the auctioneer) and strictly better for at least one player. We can assume that $(H',p')$ assigns all items as well, as otherwise we can take an even better allocation that would assign all items.

By Lemma~\ref{lem:paths-and-cycles} we know that $H$ and $H'$ are related by a set of alternating paths (simple and not) and cycles. On a path, the first bidder gives up one item, whereas the last bidder receives one item more, after items are exchanged along the path. Cycles represent giving up one item in return for another by passing items around along it. Cycles do not change the number of items assigned to the bidders along the cycles so we will ignore them.
Moreover, by Lemma \ref{lem:simple-complex} we know that every trading path that is not simple has a Pareto equivalent simple trading path and if it does not exist any simple trading path then there are no trading paths at all.
Thus, we can focus on the existence of simple trading paths.
Let us denote the number of alternating paths by $z$, and denote the start and end bidders along these $z$ alternating paths by $x_1,\ldots,x_z$ and $y_1,\ldots, y_z$. We assume that the same bidder may appear multiple times amongst $x_i$'s or multiple times amongst $y_i$'s, but cannot appear both as an $x_i$ and as a $y_i$, since we can concatenate two such paths into one. Such an alternating path represents a shuffle of items between bidders where bidder $x_j$ loses an item, and bidder $y_j$ gains an item  when  moving from $H$ to $H'$. In general, these two items may be entirely different.

Assume there are no trading paths with respect to $(H,p)$.  Then it must be the case that for each alternating path $j$ either
$v_{y_j} \leq v_{x_j}$ holds, $b^*_{y_j} < v_{x_j}$ holds, or both holds, where $b^*_{y_j}$ is the budget left over for bidder $y_j$ at the end of the mechanism.
We define $\mu = \{j\in \{ 1,\dots, z\} | v_{y_j} \leq v_{x_j}\}$ and $\nu =\{ 1,\dots, z\}\setminus \mu$.

Now, no bidder is worse off in $(H',p')$ in comparison to $(H,p)$, the auctioneer is not worse off, and, by assumption, either
\begin{enumerate}
\item[(A)] some bidder is strictly better off, or
\item[(B)] the auctioneer is strictly better off.
\end{enumerate}

First, we rule out case (B) above:
Consider the process of changing $(H,p)$ into $(H',p')$ as a two stage process: at first, the bidders $x_1, \ldots, x_z$ give up items. During this first stage, the payments made by these bidders must decrease (in sum) by at least $Z^-=\sum_{i=1}^z{v_{x_i}}$. The second stage is that bidders ${y_1,\ldots,y_z}$ receive their extra items. In the second stage, the maximum extra payment that can be received from bidders ${y_1,\ldots,y_z}$ is no more than
\begin{equation} Z^+=\sum_{j\in \mu} v_{y_j} + \sum_{j\in \nu} b^*_{y_j} \leq \sum_{j\in \mu} v_{x_j} + \sum_{j\in \nu} v_{x_j} = Z^-,\label{eq:zplezm}\end{equation}
by definition of sets $\mu$ and $\nu$ above. Thus, the total increase in revenue to the auctioneer would be $Z^+ - Z^- \leq 0$. This rules out case B. Moreover, as the auctioneer cannot be worse off, $Z^+ = Z^-$ and from Equation \eqref{eq:zplezm} we conclude that
\begin{equation}\sum_{j\in \mu} v_{y_j} + \sum_{j\in \nu} b^*_{y_j} = \sum_{j\in \mu} v_{x_j} + \sum_{j\in \nu} v_{x_j}. \label{eq:equal}\end{equation}

By definition, we have for $j\in \mu$ that $v_{y_j} \leq v_{x_j}$, and for $j\in \nu$ we have that $b^*_{y_j} < v_{x_j}$.
Thus, if $\nu \neq \emptyset$ then the left hand side of Equation \eqref{eq:equal} is strictly less than the right hand side, a contradiction. Therefore, case (A) must hold and it must be that $\nu = \emptyset$. We will conclude the proof of the theorem by showing that these two are inconsistent. By~(A), we have that
$|H'_a|  v_{a} - p'_{a} = |H_a| v_a - p_{a}$ for each bidder $a$ whose utility does not increase, $|H'_{\hat{a}}|  v_{\hat{a}} - p'_{\hat{a}} > |H_{\hat{a}}| v_{\hat{a}} - p_{\hat{a}}$ for at least one bidder $\hat{a}$, and $\sum_{a\in I} p'_a = \sum_{a\in I} p_a$.
We can now derive that
\begin{equation*}
   \sum_{a\in I} |H'_a|  v_{a}   > \sum_{a\in I} |H_a| v_a  + \left(\sum_{a\in I} p'_a - \sum_{a\in I} p_a\right),
\end{equation*}
and hence,
\begin{equation}
\sum_{a\in I} (|H'_a| - |H_a|) v_a > 0.
\label{eq:sumdiffm}
\end{equation}
  Now, whenever $a = x_j$ for a $j\in \{ 1,\dots, z\}$ we decrease $|H'_a| - |H_a|$ by one, whenever $a= y_j$ for a $j\in \{ 1,\dots, z\}$ we increase $|H'_a| - |H_a|$ by one.
  Thus, rewriting Equation~\eqref{eq:sumdiffm} we get that
  \begin{equation*}
  \sum_{a\in I}  (|\{j\in \{ 1,\dots,z\} | a=y_j\}| - |\{j\in \{ 1,\dots, z\} |a=x_j\}|) v_a > 0,
  \end{equation*}
respectively  
\begin{equation*}
  \sum_{j=1}^z v_{y_j} - \sum_{j=1}^z v_{x_j} > 0.
\end{equation*}
Hence,
\begin{equation}
  \sum_{j=1}^z v_{y_j} > \sum_{j=1}^z v_{x_j},
  \label{eq:sumj1z}
\end{equation}
but Equation~\eqref{eq:sumj1z} is inconsistent with Equation~\eqref{eq:equal} as $\nu=\emptyset$ implies that $\mu=\{1, \ldots, z\}$.
\end{proof}

\section{Proof of Theorem~6}

We present the proof of Pareto optimality for the auction described in Algorithm \ref{alg:combclinch}.

\begin{algorithm}[p]
\begin{algorithmic}[1]
\Procedure{Combinatorial Auction with Budgets}{$v,b,(S_i)_{i\in I}$}
\State{$\pi\gets 0$}
\While{$A\neq \emptyset$} \label{aucline:hitrue}
\State{\textsc{Sell}($E$)} \label{line:SellV}
\State{$A \gets A-E$} \label{line:VHi}
\Repeat \label{aucline:repstart}
\If{$\exists i | B(\neg\{i\})<\bar t$} {\textsc{Sell}($\{ i\}$)} \label{line:Selli}
\Else
\State{For an arbitrary bidder $i$ with $d_i>D^+_i(\pi)$: \\ \qquad \qquad \qquad \qquad  $d_i\gets D^+_i(\pi)$} \label{line:Hifalse}
\EndIf
\Until{$\forall i$: $(d_i=D^+_i(\pi)) \wedge (B(\neg\{i\})\geq \bar t)$} \label{aucline:repend}
\State{Increase $\pi$ until for some $i$, $D_i(\pi) \neq D^+_i(\pi)$ } \label{aucline:incprice}
\EndWhile
\EndProcedure
\end{algorithmic} \caption{Combinatorial Auction with Budgets} \label{alg:combclinch}
\end{algorithm}

\begin{algorithm}[p]
\begin{algorithmic}[1]
\Procedure{$S$-Avoid Matching}{}

\State{Construct interest graph $G$:}

\begin{itemize} 
\item Each active bidder $a\in A$ on the left with capacity constraint $d_a$.
\item Each unsold item $r \in R$ on the right with capacity constraint $c_r$.
\item Edge $(a,r)$ from bidder $a\in A$ to unsold item $r\in R$ iff $r\in S_a$.
\end{itemize}

\State{Return maximal $B$-matching with minimal number of items assigned to bidders in $S$,\newline
\mbox{\hspace{1.5em}}amongst all maximal $B$-matchings.}

\EndProcedure
\end{algorithmic} \caption{Compute an Avoid Matching via Min Cost Max Flow} \label{alg:avoidmatch}
\end{algorithm}

\begin{algorithm}[p]
\begin{algorithmic}[1]
\Procedure{Sell}{$S$}

\Repeat

\State Compute $Y = $ {\sc $S$-Avoid Matching}

\State For arbitrary $(a,r)$ in $Y$ with $a \in S$, sell item $r$ to bidder $a$ and set $S_a\gets S_a\setminus\{ r \}$.

\Until $B(\neg S) \geq \bar t$
 \EndProcedure
\end{algorithmic} \caption{Selling to the Set $S$ of Bidders} \label{alg:sellcomb}
\end{algorithm}

We first state the fact  that the auction will sell all slots of all rounds.  
As stated in Theorem \ref{thm:paropt}, this is a necessary condition for Pareto optimality.

\begin{lem}
\label{lem:sells-all}
If the multiset-union of all preference sets $S=\biguplus_{i\in I} S_i$ fulfills $U \subseteq S$, the auction will sell all items.
\end{lem}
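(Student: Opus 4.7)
My plan is to establish a loop invariant and close the argument by contradiction. Let $\mu(A)$ denote the size of a maximum $B$-matching in the interest graph whose left vertices are the active bidders $a\in A$ with capacities $d_a$, whose right vertices are the unsold items $r\in R$ with capacities $c_r$, and whose edges are $\{(a,r):r\in S_a\}$. The target invariant is $\mu(A) = \bar t$ throughout the execution of Algorithm~\ref{alg:combclinch}. Since the outer \textbf{while} loop exits precisely when $A=\emptyset$, evaluating the invariant at termination gives $0 = \mu(\emptyset) = \bar t$, so every item must have been sold.

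First I would verify the invariant at initialization. At $\pi=0$ all bidders lie in $A$ and no demand $d_a$ is restrictive. The hypothesis $U\subseteq S$, together with $\kappa_i=1$ and the standing assumption that at least $m$ bidders are interested in each round, supplies Hall's condition on the bipartite interest graph; this yields an assignment of each unsold slot to a distinct interested bidder, so $\mu(A) = \bar t$.

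Next I would check preservation by walking through each operation of the algorithm. The termination condition of \textsc{Sell}$(S)$ is exactly $B(\neg S)\geq\bar t$, so after \textsc{Sell}$(E)$ we have $\mu(A\setminus E)\geq\bar t$ and the subsequent update $A\gets A-E$ retains the invariant. Inside the repeat-until loop, an item is clinched for bidder $i$ only when $i$ is essential, i.e.\ $B(\neg\{i\})<\bar t$; the sold item and the consumed slot leave the bipartite graph simultaneously, so $\mu(A)$ and $\bar t$ drop by one together. The adjustment $d_i\gets D_i^+(\pi)$ fires only when $B(\neg\{i\})\geq\bar t$, so $i$'s slack capacity is not needed by any maximum matching, and the decrease of $d_i$ cannot destroy a full matching. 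The price increase only changes demands until some $D_i(\pi)\neq D_i^+(\pi)$, and the bidder whose demand dropped is picked up by $E$ in the next iteration, where \textsc{Sell}$(E)$ restores $\mu(A\setminus E)=\bar t$ before the invariant is used again.

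The main obstacle I expect is showing that \textsc{Sell}$(S)$ always makes progress and terminates whenever invoked with $\mu(A)=\bar t$ but $B(\neg S)<\bar t$. The argument I have in mind is: in that regime every maximum $B$-matching must use at least $\bar t - B(\neg S)\geq 1$ edges into $S$, so the Avoid Matching $Y$ computed by Algorithm~\ref{alg:avoidmatch} contains at least one edge $(a,r)$ with $a\in S$, and the inner body sells an item and strictly decreases $\bar t$. Since $\bar t$ is a non-negative integer, \textsc{Sell}$(S)$ halts in at most $\bar t$ steps. A delicate bookkeeping point will be that the invariant is momentarily broken between the price increase and the ensuing \textsc{Sell}$(E)$; I would handle this by formally stating the invariant at the top of the outer while-loop only and verifying that the intervening steps cannot destroy enough matching capacity for \textsc{Sell}$(E)$ to fail to restore it.
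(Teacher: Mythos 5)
Your proposal is correct and follows essentially the same route as the paper's (much terser) proof: the paper also argues that all slots are saleable at price zero and that demand only decreases either by a clinch (which removes the sold slot together with the matched capacity) or by an update to $D_i^+$ that is explicitly guarded by the $B(\neg\{i\})\geq\bar t$ check, so the ``all unsold slots can still be matched'' invariant is preserved until $A=\emptyset$ forces $\bar t=0$. Your version just makes this invariant explicit and verifies it operation by operation; the two minor imprecisions (invoking Hall's condition where the multiset condition $U\subseteq S$ with $\kappa_i=1$ already decouples the items, and asserting that a bidder whose demand drops at a price increase must enter $E$, which fails when the drop is due to $\lfloor b_i/\pi\rfloor$) do not affect the validity of the argument.
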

\begin{proof}
At the beginning of the auction the price is zero, and thus, every bidder demands his hole preference set $S_i$, and all slots can be sold. During the auction the demand of bidder~$i$ decreases either when he buys a slot, or when his demand gets updated to $D_i^+$. The first case does not affect the demand for the unsold slots. In the second case, the usage of the $B$-matching guarantees that the other bidders demand all the unsold slots at the current price when we decrease $i$'s demand.
\end{proof}

Now we need to show that there are no trading paths in the final allocation $(H^*,p^*)$ produced by Algorithm~\ref{alg:combclinch}. Consider the set of all trading paths $\Sigma$ in the final allocation $(H^*,p^*)$.
\begin{Def} \label{def:suppi}
We define for every $\sigma \in \Sigma$: 
\begin{itemize}

\item Let $Y^{\sigma}$ be the $S$-avoid matching used the first time some item $r$ is sold to some bidder $a$ where $(a,r)$ is an edge along $\sigma$.
$Y^{\sigma}$ is either an $E$-avoid matching (line~\ref{line:SellV} of Algorithm~\ref{alg:combclinch}) or an $a$-avoid matching for some bidder-item
edge $(a,r)$ along $\sigma$ (line~\ref{line:Selli} of Algorithm~\ref{alg:combclinch}).

\item If $Y^{\sigma}$ is an $E$-avoid matching, let $E^{\sigma}$ be this set of exiting bidders.

\item If $Y^{\sigma}$ is an $a$-avoid matching, let $a^{\sigma}$ be this bidder.

\item Let $F^{\sigma}\subseteq H^*$ be the set of edges $(a,r)$ such that item $r$ was sold to bidder $a$ at or subsequent to the first time
that some item $r'$ was sold to some bidder $a'$ for some edge $(a',r')\in \sigma$. The edge $(a',r')$ is itself in $F^{\sigma}$.

\item Let $t^{\sigma}$ be the number of unsold instances just before the first time some edge along $\sigma$ was sold. That is,
$t^{\sigma}$ is equal to the number of instances matched in $F^{\sigma}$.

\item Let $\pi^{\sigma}$ be the price at which $Y^{\sigma}$ is computed.

\item Let $b^{\sigma}_a$ be the remaining budget for bidder $a$ before any item is sold in \textsc{Sell}($E^{\sigma}$) or \textsc{Sell}($a^{\sigma}$).

\end{itemize}
\end{Def}

We partition $\Sigma$ into two classes of trading paths:

\begin{itemize}
\item $\Sigma_E$ is the set of trading paths such that $\sigma \in \Sigma_E$ iff $Y^{\sigma}$ is some $E^{\sigma}$-avoid matching
used in \textsc{Sell}($E^{\sigma}$) (line~\ref{line:SellV} of Algorithm~\ref{alg:combclinch}).
\item $\Sigma_{\neg{E}}$ is the set of trading paths such that $\sigma \in \Sigma_{\neg{E}}$ iff
$Y^{\sigma}$ is some $a^{\sigma}$-avoid matching used in \textsc{Sell}($\{ a^{\sigma}\}$) (line~\ref{line:Selli} of Algorithm~\ref{alg:combclinch}).
\end{itemize}

\begin{lem}
\label{lem:SellV}
$\Sigma_E=\emptyset$.
\end{lem}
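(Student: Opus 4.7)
The strategy is to argue by contradiction: suppose $\sigma = (a_1, t_1, \ldots, t_{j-1}, a_j) \in \Sigma_E$; by Lemma~\ref{lem:simple-complex} we may take $\sigma$ to be simple. Let $(a_i, t_i)$ be the first edge of $\sigma$ that is sold, occurring in some iteration of \textsc{Sell}$(E^\sigma)$ using matching $Y^\sigma$ at price $\pi^\sigma$; by hypothesis $a_i \in E^\sigma$, so $v_{a_i} = \pi^\sigma$. The goal is to exhibit a maximum $B$-matching of the current state with strictly fewer edges incident to $E^\sigma$ than $Y^\sigma$ has, contradicting the $E^\sigma$-avoid property of $Y^\sigma$.

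First I would pin down the status of every $a_k$ when $Y^\sigma$ is computed. For $1 \le k \le j-1$, the edge $(a_k, t_k)$ is sold at some price $\pi_k \ge \pi^\sigma$ (no earlier than $(a_i, t_i)$), and since a bidder never pays above its valuation, $v_{a_k} \ge \pi^\sigma$. Combined with the trading-path condition $v_{a_j} > v_{a_1}$, this gives $v_{a_j} > \pi^\sigma$, so $a_j$ is still active and $a_j \notin E^\sigma$. Because remaining budgets are non-increasing in time and $a_j$ buys nothing during \textsc{Sell}$(E^\sigma)$, $a_j$'s remaining budget $b^\sigma_{a_j}$ (in the sense of Definition~\ref{def:suppi}) satisfies $b^\sigma_{a_j} \ge b^*_{a_j} \ge v_{a_1} \ge \pi^\sigma$. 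A sharper accounting is needed: the money $a_j$ spends after time $Y^\sigma$ is exactly $b^\sigma_{a_j} - b^*_{a_j}$ at a minimum of $\pi^\sigma$ per item, so the number $n$ of items $a_j$ still acquires satisfies $n \le (b^\sigma_{a_j} - b^*_{a_j})/\pi^\sigma$. Using $b^*_{a_j} \ge \pi^\sigma$ yields $b^\sigma_{a_j}/\pi^\sigma \ge n + 1$, and since $t_{j-1}\in S_{a_j}$ but $t_{j-1}\notin H^*_{a_j}$ supplies an extra element of both $S_{a_j}$ and the unsold rounds, $d_{a_j} \ge n+1$. Thus $a_j$ has at least one unit of demand beyond what it needs for its future $H^*$-allocations.

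Now I construct the alternative matching by pushing along the suffix of $\sigma$ from $a_i$ to $a_j$. Let $M$ be the restriction of $H^*$ to edges whose sale has not occurred before time $Y^\sigma$. Since $H^*$ saturates every item-capacity (Lemma~\ref{lem:sells-all}), $M$ is a maximum $B$-matching of the current state, and because $(a_i, t_i)$ is the first $\sigma$-edge sold, $M$ contains $(a_k, t_k)$ for every $1 \le k \le j-1$. Define
\[
M' \;=\; \bigl(M \setminus \{(a_k, t_k) : i \le k \le j-1\}\bigr) \;\cup\; \{(a_{k+1}, t_k) : i \le k \le j-1\}.
\]
The alternating-path conditions $t_k \in S_{a_{k+1}}$ and $t_k \notin H^*_{a_{k+1}}$ make every new edge legal, each $t_k$'s capacity is preserved by the one-in, one-out swap, $a_i$ only loses an edge, intermediates trade one for one, and the slack from the previous step lets $a_j$ absorb its extra edge. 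Hence $M'$ is again a maximum $B$-matching, and a telescoping count over the path gives
\[
|M' \cap E^\sigma| - |M \cap E^\sigma| \;=\; [a_j \in E^\sigma] - [a_i \in E^\sigma] \;=\; -1.
\]

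The final step, and the main obstacle, is turning this local improvement into a strict contradiction with the minimality of $|Y^\sigma \cap E^\sigma|$: the naive comparison $|Y^\sigma \cap E^\sigma| \le |M \cap E^\sigma|$ is too weak on its own, since $M$ may carry more $E^\sigma$-edges than $Y^\sigma$. I would close the gap by transplanting the swap onto $Y^\sigma$ itself, using the symmetric difference $Y^\sigma \triangle M$ in the bipartite interest graph to locate an alternating structure that passes through $(a_i, t_i)$ and terminates at $a_j$ (whose spare demand supplies the free endpoint identified above). Rerouting $Y^\sigma$ along this structure, as dictated by the trading path, produces a maximum $B$-matching $Y'$ with $|Y' \cap E^\sigma| = |Y^\sigma \cap E^\sigma| - 1$; this contradicts $Y^\sigma$ being an $E^\sigma$-avoid matching. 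The multi-slot character of the $B$-matching (items can have capacity greater than one, so the same item may appear in several matching edges) is what makes this augmenting-structure argument more delicate than its single-slot analogue in \cite{FLSS11}, but both the telescoping $E^\sigma$-count and the demand-slack estimate established above transfer to the $Y^\sigma$-side of the argument unchanged, delivering $\Sigma_E = \emptyset$.
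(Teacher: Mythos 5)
Your setup and the construction of the improved matching are essentially the paper's: you identify the first $\sigma$-edge sold, note that its bidder lies in $E^\sigma$ while $a_j$ does not (since $v_{a_j}>v_{a_1}\geq\pi^\sigma$), establish via the budget accounting $F^\sigma(a_j)\,\pi^\sigma\leq b^\sigma_{a_j}-b^*_{a_j}$ and $b^*_{a_j}\geq v_{a_1}\geq\pi^\sigma$ that $a_j$ has one unit of demand slack, and then shift items along the suffix of $\sigma$ to obtain a maximal $B$-matching $M'$ with one fewer edge incident to $E^\sigma$ than $M=F^\sigma$. All of that matches the paper and is sound.

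The gap is in your last step, and it rests on a misdiagnosis. You assert that $M$ ``may carry more $E^\sigma$-edges than $Y^\sigma$'' and therefore propose to transplant the swap onto $Y^\sigma$ via the symmetric difference $Y^\sigma\triangle M$ --- but you never actually construct the rerouted matching $Y'$, verify that the alternating structure you need exists and passes through $(a_i,t_i)$ without interference from the other components of the symmetric difference, or check feasibility of $Y'$; the entire contradiction is deferred to one unproven sentence. Moreover the premise is false: because each iteration of \textsc{Sell}$(E^\sigma)$ sells to $E^\sigma$-bidders exactly one item per unit of the deficit $\bar t - B(\neg E^\sigma)$ until that deficit reaches zero, and $E^\sigma$-bidders are removed from $A$ immediately afterwards and never buy again, $F^\sigma$ assigns to $E^\sigma$-bidders \emph{exactly} as many items as $Y^\sigma$ does. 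Hence $F^\sigma$ is itself an $E^\sigma$-avoid matching, and your matching $M'$, built directly from $F^\sigma$, already contradicts the minimality of $Y^\sigma$ with no detour through $Y^\sigma\triangle M$ at all. This equality $|F^\sigma\cap E^\sigma|=|Y^\sigma\cap E^\sigma|$ is the one observation your proof is missing; without it (or a fully worked-out augmenting-structure argument in its place) the proof does not close.
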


\begin{proof}
We need the following claim:

\begin{clm} Let $\sigma = (a_1,r_2, \ldots, a_{j-1},r_{j-1},a_j)\in \Sigma_E$ be a trading path, and let $(a_i,r_i)$
be the last edge belonging to $Y^{\sigma}$ along $\sigma$. Then the suffix of $\sigma$ starting at $a_i$, $(a_i, r_i, \ldots, a_j)$, is itself a
trading path.
\end{clm}
\begin{proof}
This follows as the valuation of $a_i$ is equal to the current price $\pi^\sigma$ when \textsc{Sell}($E^{\sigma}$) was executed, and the valuation of $a_1$
is greater than or equal to $\pi^{\sigma}$ as edge $(a_1,r_1)$ was unsold prior to this \textsc{Sell}($E^{\sigma}$) and does belong to the final $F^{\sigma}$.
\end{proof}

From the claim above we may assume, without loss of generality, that if $\Sigma_E \neq \emptyset$ then $\exists \sigma \in \Sigma_E$ such that the first edge along $\sigma$ was also the first edge sold amongst all edges of $\sigma$, furthermore, all subsequent edges do not belong to $Y^{\sigma}$.

As no further items will be sold to a bidder $a\in E^{\sigma}$ after this \textsc{Sell}($E^{\sigma}$), the number of items assigned to $E$-type bidders is equal for $Y^{\sigma}$ and $F^{\sigma}$. We seek a contradiction to the assumption that $Y^{\sigma}$ was an $E^{\sigma}$-avoid matching. Note that the matching $F^{\sigma}$ is an $E^{\sigma}$-avoid matching by itself because exactly the number of items assigned to $E$-type bidders in $Y^{\sigma}$ are being sold to them. We now show how to construct from $F^{\sigma}$ another matching that assigns less items to $E$-type bidders.

We show that the number of items assigned to bidder $a_1$ in $F^{\sigma}$ can be reduced by one by giving bidder $a_{k+1}$ item $r_k$ for $k=1, \ldots, j-1$. This is also a full matching but it remains to show that this does not exceed the capacity constraints $d_{a_j}$ of bidder $a_j$.

As $d_{a_j}=D_{a_j}$  for all $a\in A$ when \textsc{Sell}($E^{\sigma}$) is executed,  bidder $a_j$ has a remaining budget greater than or equal to $v_1$ at the conclusion of the auction, and each item assigned to bidder $a_j$ in $F^{\sigma}$ is sold to him at a price greater than or equal to $\pi^{\sigma} = v_1$, it follows that at the time of \textsc{Sell}($E^{\sigma}$) we have that $D_{a_j}$ is greater than the number of items assigned to $a_j$ in $F^{\sigma}$. Thus, we can increase the number of items allocated to $a_j$ by one without exceeding the demand constraint~$d_{a_j}$.

Now, note that $a_j$ is not an $E$-type bidder, and the new matching constructed assigns less items to $E$-type bidders than the matching $F^{\sigma}$. Hence, $F^{\sigma}$ is not an $E^{\sigma}$-avoid matching, and in turn neither $Y^{\sigma}$ is $E^{\sigma}$-avoid matching.
\end{proof}

We have shown that $\Sigma_E=\emptyset$. It remains to show that $\Sigma_{\neg{E}}=\emptyset$.

Assume $\Sigma_{\neg{E}} \neq \emptyset$. Order $\sigma \in \Sigma_{\neg{E}}$ by the first time at which some edge along $\sigma$ was sold. We know that
this occurs within some \textsc{Sell}($\{ a^{\sigma}\}$) for some $a^{\sigma}$ and that $a^{\sigma} \notin E$. Let us define $\sigma = (a_1, r_1, a_2, r_2, \ldots, a_{j-1},
r_{j-1}, a_j)$ be the last path in this order, and let $e=(a^{\sigma},r^{\sigma})=(a_i,r_i)$.

Recall that $Y^{\sigma}$ is the $a^{\sigma}$-avoid matching used when item $r^{\sigma}$ was sold to bidder $a^{\sigma}$. Also, $F^{\sigma}\subseteq H^*$ is the
set of edges added to $H^*$ in the course of the auction from this point on (including the current \textsc{Sell}($\{ a_i\}$)).

\begin{lem}
\label{lem:x}
Let $\sigma$, $a^{\sigma} = a_i$, and $r^{\sigma}=r_i$ be as above, then there was another full matching $X$ when $Y^{\sigma}$ was computed as an $a^{\sigma}$-avoid matching and $X$ has the following properties: 
\begin{enumerate}
\item[(a)] The suffix of $\sigma$ from $a_i$ to $a_j$:
\begin{equation*}
\sigma[a_i,\ldots,a_j] = (a_i,r_i,a_{i+1},r_{i+1},\ldots, a_{j-1},r_{j-1}, a_j),
\end{equation*}
is an alternating path
with respect to $X$ (i.e., edges $(a_k,r_k)$ where $i \leq k \leq j-1$ belong to~$X$).
\item[(b)] The number of items assigned to $a_i$ is equal in $X$ and in $Y^{\sigma}$.
\item[(c)] The number of items assigned to $a_j$ is equal in $X$ and in $F^{\sigma}$.
\end{enumerate}
\end{lem}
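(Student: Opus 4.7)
The plan is to construct $X$ by modifying the matching $F^{\sigma}$, using its symmetric difference with the avoid matching $Y^{\sigma}$. First I would verify that $F^{\sigma}$ is itself a full $B$-matching at the moment $Y^{\sigma}$ was computed: by definition it consists of the edges of $H^*$ sold at or after that moment, so it saturates every item that was unsold then, and since each demand $d_a$ only decreases throughout the auction, no bidder exceeds the demand constraint in force at that moment. Moreover, every suffix edge $(a_k,r_k)$ for $i\le k\le j-1$ already lies in $F^{\sigma}$, because $(a_i,r_i)$ is by definition of $a^{\sigma}$ the first $\sigma$-edge ever sold, and every later $\sigma$-edge is therefore also in $F^{\sigma}$. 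Consequently $F^{\sigma}$ trivially satisfies properties (a) and (c); only (b) may fail.

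Second, to enforce $|X_{a_i}|=|Y^{\sigma}_{a_i}|$, I would decompose the symmetric difference $D=F^{\sigma}\oplus Y^{\sigma}$ in the bipartite bidder--item graph into alternating paths and cycles. Because both matchings saturate every unsold item to the same multiplicity, each item has equal $F$-only and $Y$-only degree in $D$, so alternating paths in $D$ must begin and end at bidders. Set $\Delta=|F^{\sigma}_{a_i}|-|Y^{\sigma}_{a_i}|\ge 0$, where the inequality follows from the $a_i$-avoid minimality of $Y^{\sigma}$; if $\Delta=0$ then $X:=F^{\sigma}$ already works, otherwise I would pick $\Delta$ alternating paths in $D$, each starting at $a_i$ with an $F$-only edge, and XOR them into $F^{\sigma}$ to obtain $X$. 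Each such shift removes one edge at $a_i$ and adds one at the opposite endpoint, so after $\Delta$ shifts we reach $|X_{a_i}|=|Y^{\sigma}_{a_i}|$, while preserving the full-matching property.

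The main obstacle will be to select these $\Delta$ paths so that they simultaneously respect the suffix and the count at $a_j$: no chosen path may start with $(a_i,r_i)$, traverse any edge $(a_k,r_k)$ for $i\le k\le j-1$, or terminate at $a_j$. To handle this I would use the min-cost optimality of the $a_i$-avoid matching $Y^{\sigma}$, which forbids any alternating cycle in $D$ that could further reduce $a_i$'s count. This optimality, combined with the balance identity on $D$, should force the path system to admit endpoints other than $a_j$: any candidate path ending at $a_j$ can be concatenated, via a further segment or an auxiliary cycle, with one that does not, yielding a modified system of $\Delta$ admissible paths. A symmetric rerouting argument handles the suffix: if a path would begin with $(a_i,r_i)$, the existence of a $Y^{\sigma}$-edge at $r_i$ to some bidder other than $a_{i+1}$ (or, in the degenerate case, a $Y^{\sigma}$-edge continuing the suffix) lets the shift be rerouted through a different $F$-only edge at $a_i$. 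I expect this exchange/rerouting step, which is essentially a matroid-intersection style argument in the residual graph of the flow defining $Y^{\sigma}$, to be the technical crux of the proof.
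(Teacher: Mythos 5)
You correctly reduce the problem to its crux: starting from $F^{\sigma}$ (which indeed satisfies (a) and (c) and is a full matching at the time $Y^{\sigma}$ was computed), one must shift $\Delta=F^{\sigma}(a_i)-Y^{\sigma}(a_i)$ units away from $a_i$ along alternating paths of the symmetric difference without touching the suffix $\sigma[a_i,\ldots,a_j]$ and without increasing the count at $a_j$. But the mechanism you propose for resolving this crux does not work. The min-cost optimality of the $a_i$-avoid matching only tells you that $Y^{\sigma}$ minimizes the number of items assigned to $a_i$; it says nothing about which bidders the alternating paths of $F^{\sigma}\oplus Y^{\sigma}$ pass through, and there is no reason a ``rerouting'' or ``auxiliary cycle'' argument in the residual graph can always steer all $\Delta$ paths clear of $a_{i+1},\ldots,a_j$. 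One can easily have decompositions in which every alternating path leaving $a_i$ runs into the suffix, and nothing in the optimality of $Y^{\sigma}$ forbids this.

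The ingredient you are missing is the extremal choice of $\sigma$ itself, which is fixed just before the lemma is invoked: among all trading paths in $\Sigma_{\neg E}$, $\sigma$ is the \emph{last} one in the order given by the time its first edge is sold. The paper's argument is a splicing argument: if some alternating path $\tau=(a_i=g_1,s_1,g_2,\ldots,g_l)$ of the decomposition shared a bidder $u=g_k=a_{k'}$ with the suffix (other than $a_i$), then the concatenation of the prefix of $\sigma$ up to $a_i$, the prefix of $\tau$ up to $g_k$, and the suffix of $\sigma$ from $a_{k'}$ onward would again be a trading path (same endpoints $a_1$ and $a_j$, all edges in $F^{\sigma}$), none of whose edges was sold before $(a_i,r_i)$ --- contradicting the maximality of $\sigma$. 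This is what forces the $\Delta$ paths to be bidder-disjoint from the suffix except at $a_i$, which simultaneously yields (a) and (c); one then only has to check, as the paper does, that shifting items along each $\tau$ respects the demand of its terminal bidder $g_l$ (because $s_{l-1}$ is assigned to $g_l$ in $Y^{\sigma}$), a point your sketch also glosses over. Without invoking the maximality of $\sigma$, the construction of $X$ cannot be completed.
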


\begin{proof}
We use the notation $M(a)$ for the number of items assigned to bidder $a$ in a matching~$M$.
We know that $F^{\sigma}(a_i) \geq Y^{\sigma}(a_i)$ since there is otherwise a contradiction because $Y^{\sigma}$ is an $a_i$-avoid matching.

Notice that if $F^{\sigma}(a_i) = Y^{\sigma}(a_i)$, it is possible to choose $X = F^{\sigma}$ and the conditions above follow trivially.

Now, consider the case where $F^{\sigma}(a_i) > Y^{\sigma}(a_i)$. $Y^{\sigma}$ and $F^{\sigma}$ are both matchings that assign all~$t^{\sigma}$ instances, thus by Lemma \ref{lem:paths-and-cycles} we know that the symmetric difference between the two matchings can by expressed by sets of alternating paths. We consider the smallest such set, i.e., no two alternating paths can be concatenated. By Lemma~\ref{lem:simple-complex} we know that we can obtain a Pareto equivalent set of simple alternating paths with respect to~$F^\sigma$.
From the fact that $F^{\sigma}(a_i) > Y^{\sigma}(a_i)$, we can obtain $\delta = F^{\sigma}(a_i) - Y^{\sigma}(a_i)$ alternating paths that start from~$a_i$.
Consider one of this paths, $\tau = (a_i = g_1, s_1, g_2, s_2, \ldots, g_l)$, where $g_k$ are bidders, $s_k$ are items, $(g_k, s_k) \in F^{\sigma}$, and $(s_k, g_{k+1}) \in Y^{\sigma}$.

We argue that  $\sigma[a_i,\ldots,a_j]$ and $\tau$ are bidder disjoint, besides the first bidder $a_i$.
By contradiction, choose $u$ to be the first bidder other to $a_i$ in common between $\tau$ and $\sigma[a_i,\ldots,a_j]$. For some $i < k' \leq j$ and $1 < k \leq l$ we have $u = g_k = a_{k^{'}}$.
Let $\sigma^{'}$ be the concatenation of the prefix of $\sigma$ up to $a_i$, followed by the prefix of $\tau$ up to $g_k$ and followed by the suffix of $\sigma$ from $g_k = a_{k^{'}}$ to the end.
\begin{equation*}
\sigma^{'} = (a_1,r_1, \ldots, a_{i} = g_1, s_1, g_2, \ldots, g_k = a_{k^{'}}, r_{k^{'}}, a_{k^{'}+1}, \ldots, a_j)
\end{equation*}
This is a trading path in $F^{\sigma}$ and no edge is sold before $(a_i, r_i)$ in contradiction with the assumption that $\sigma$ is the last trading path in the defined order amongst all trading paths.
Thus, $\sigma[a_i,\ldots,a_j]$ and $\tau$ have the bidder~$a_i$ in common  and the other bidders along the paths are different. It could be possible that they have some items in common but this is no problem.\footnote{This is the case because if there are items in common but no bidders unless $a_i$, the edges that belong to $\sigma[a_i,\ldots,a_j]$ are not modified by any~$\tau$, so $\sigma[a_i,\ldots,a_j]$ will be an alternating path with respect to the $X$ we will define and the number of items assigned to bidder $a_j$ does not change for the same reason.}
For any such $\tau = (a_i = g_1, s_1, g_2, s_2, \ldots, g_l)$, we can move item $s_k$ from bidder $g_k$ to bidder $g_{k+1}$ where $1 \leq k \leq l-1$ without violating the demand of bidder $g_l$ because $s_{l-1}$ was assigned to $g_l$ in~$Y^{\sigma}$, and $g_l$ is not the first bidder in another alternating path.

As we can do so for all paths $\tau$, we obtain a new full matching $X$ by applying the swaps of all the alternating paths to $F^\sigma$.  $X$~assigns to $a_i$ the same number of items as $Y^{\sigma}$ and from the fact that $a_j$ does not appear in any $\tau$, the number of items assigned to him is again $F^{\sigma}(a_j)$.
\end{proof}

\begin{cor}
$\Sigma_{\neg{E}} = \emptyset$.
\end{cor}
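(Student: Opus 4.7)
The plan is to argue by contradiction. Suppose $\Sigma_{\neg E}\neq\emptyset$, and fix $\sigma=(a_1,r_1,\ldots,a_{j-1},r_{j-1},a_j)\in\Sigma_{\neg E}$ as the trading path whose first edge $(a_i,r_i)=(a^\sigma,r^\sigma)$ is sold latest---exactly the path chosen just before Lemma~\ref{lem:x}. Invoking Lemma~\ref{lem:x}, I obtain a full matching $X$ at the moment $Y^\sigma$ is computed, satisfying properties~(a)--(c).

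I would then modify $X$ into a new full matching $X'$ by sliding items along the alternating suffix $\sigma[a_i,\ldots,a_j]$: for every $k\in\{i,i+1,\ldots,j-1\}$, replace the edge $(a_k,r_k)\in X$ with the edge $(a_{k+1},r_k)$. Property~(a) of Lemma~\ref{lem:x} ensures that each edge being removed actually lies in $X$ and that $r_k$ is not already assigned to $a_{k+1}$ in $X$, so the per-round cap $\kappa_{a_{k+1}}=1$ remains satisfied. The total number of matched slots is still $t^\sigma$, the count per round is preserved, and every bidder besides $a_i$ and $a_j$ keeps the same load; $a_i$ loses exactly one slot and $a_j$ gains exactly one.

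The technical heart of the argument is verifying that $X'$ respects the demand constraint of $a_j$ at the time $Y^\sigma$ was computed, i.e.\ that $d_{a_j}\geq F^\sigma(a_j)+1$ at that moment. Here the trading-path hypothesis is essential: it gives $b^*_{a_j}\geq v_{a_1}$ at the end of the auction, and since $(a_1,r_1)\in F^\sigma$ was sold at some price $\geq\pi^\sigma$ (prices never decrease), we also get $v_{a_1}\geq\pi^\sigma$, whence $b^*_{a_j}\geq\pi^\sigma$. From time $Y^\sigma$ onward, $a_j$ paid at least $F^\sigma(a_j)\cdot\pi^\sigma$ for the items counted in $F^\sigma(a_j)$, so his remaining budget at time $Y^\sigma$ was at least $b^*_{a_j}+F^\sigma(a_j)\cdot\pi^\sigma\geq(F^\sigma(a_j)+1)\pi^\sigma$, which forces $d_{a_j}\geq F^\sigma(a_j)+1$. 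The step I expect to be slightly delicate is confirming this bound uniformly whether $d_{a_j}$ is currently $D_{a_j}(\pi^\sigma,\cdot)$ or has already been lowered to $D^+_{a_j}(\pi^\sigma,\cdot)$ in the inner repeat loop of Algorithm~\ref{alg:combclinch}, but the same budget accounting applies in either case.

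Once feasibility of $X'$ is in hand, the contradiction is immediate: $X'$ is a maximal $B$-matching at the time $Y^\sigma$ is computed, yet by property~(b) it assigns $X'(a_i)=X(a_i)-1=Y^\sigma(a_i)-1$ items to $a_i$, strictly fewer than $Y^\sigma$ does. This contradicts the defining minimality of the $a^\sigma$-avoid matching $Y^\sigma$, so no such $\sigma$ can exist and $\Sigma_{\neg E}=\emptyset$.
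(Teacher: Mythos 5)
Your overall strategy is essentially the paper's, with its two cases merged into one: the paper distinguishes $d_{a_j}>X(a_j)$ (where it performs exactly your shift along $\sigma[a_i,\ldots,a_j]$ to contradict the $a_i$-avoid property of $Y^{\sigma}$) from $d_{a_j}=X(a_j)$ (where it runs your budget computation backwards to conclude $b^*_{a_j}\leq\pi^{\sigma}<v_{a_1}$, contradicting the trading-path condition). Collapsing these into the single claim $d_{a_j}\geq F^{\sigma}(a_j)+1$ is legitimate in principle, but two steps in your derivation of that claim do not go through as written.

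First, $d_{a_j}$ is not $\lfloor b^{\sigma}_{a_j}/\pi^{\sigma}\rfloor$; it is at most $D_{a_j}=\min\{\bar r,|S_{a_j}|,\lfloor b^{\sigma}_{a_j}/\pi^{\sigma}\rfloor\}$. A lower bound on the remaining budget therefore does not by itself force $d_{a_j}\geq F^{\sigma}(a_j)+1$; you must also argue that $\bar r$ and $|S_{a_j}|$ strictly exceed $X(a_j)=F^{\sigma}(a_j)$ at the time $Y^{\sigma}$ is computed. The paper gets this from the structure of $\sigma$: $r_{j-1}\in S_{a_j}$, no instance of $r_{j-1}$ has been sold to $a_j$, and an instance of $r_{j-1}$ is still unsold at that moment. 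Second, the ``delicate'' $D^+$ case you flag is not in fact covered by ``the same budget accounting.'' You only establish $v_{a_1}\geq\pi^{\sigma}$ (from ``prices never decrease''), hence only $b^{\sigma}_{a_j}\geq(F^{\sigma}(a_j)+1)\pi^{\sigma}$ non-strictly; if equality holds and $d_{a_j}$ has already been lowered to $D^+_{a_j}$, then the budget term in the limit drops to $F^{\sigma}(a_j)$ and your feasibility argument for $X'$ fails. What rescues the argument is the \emph{strict} inequality $\pi^{\sigma}<v_{a_1}$, which requires the separate observation that $(a_i,r_i)$ is the first edge of $\sigma$ to be sold, so $(a_1,r_1)$ is sold either at a strictly higher price or, at price $\pi^{\sigma}$, in a \textsc{Sell}($\{a_1\}$) call with $a_1$ not exiting---in either case $v_{a_1}>\pi^{\sigma}$. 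With strictness, $b^{\sigma}_{a_j}>(F^{\sigma}(a_j)+1)\pi^{\sigma}$ and the bound survives the limit defining $D^+_{a_j}$. Both repairs are available and short, but as written the proof has these two holes.
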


\begin{proof}
Assume that $\Sigma_{\neg{E}} \neq \emptyset$, select $\sigma\in \Sigma_{\neg{E}}$ as in  Lemma~\ref{lem:x}, and let $a^{\sigma} = a_i$ and $r^{\sigma} = r_i$.
  We now seek to derive a contradiction as follows:

  \begin{itemize}
  \item when $Y^{\sigma}$ was computed there was also an alternative full matching $Y'$ with fewer
  items assigned to bidder $a_i$, contradicting the assumption that $Y^{\sigma}$ is an $a_i$-avoid matching, or
  
  \item we show that the remaining budget of bidder $a_j$ at the end of the auction, $b^*_{a_j}$, has $b^*_{a_j} < v_1$, contradicting the assumption that $\sigma$ is
  a trading path.
  \end{itemize}

Let $X$ be a matching as in Lemma~\ref{lem:x} and $F^{\sigma}$ be as defined in Definition~\ref{def:suppi}. Also, let $X(a)$, $F^{\sigma}(a)$, be the
number of items assigned to bidder $a$ in the full matchings $X$, $F^{\sigma}$, respectively.

We consider the following cases regarding $d_{a_j}$ when $Y^{\sigma}$, the $a_i$-avoid matching, was computed:

\begin{enumerate}

  \item[(a)] $d_{a_j}>X(a_j)$: Like in Lemma~\ref{lem:SellV}, we can decrease the number of items sold to $a_i$ by assigning item $r_k$ to bidder $a_{k+1}$ for $k=i, \ldots, j-1$, without exceeding the demand constraint~$d_{a_j}$.

  \item[(b)] $d_{a_j} = X(a_j)$: We show that $b^{\sigma}_{a_j} \leq (X(a_j)+1)\pi^{\sigma}$.

  \begin{itemize}

    \item $D_{a_j} = D^+_{a_j}$: Observe that $X(a_j)$ is smaller than the current number of unsold items~$\bar r$, and smaller than the cardinality of the interest set $S_{a_j}$ of bidder $a_j$. This follows because $r_{j-i}\in S_{a_j}$, but no instance of $r_{j-1}$ was sold to bidder~$a_j$ and an instance of $r_{j-1}$ is unsold at that time by the definition of~$\sigma$. Thus,
    \begin{equation*}
    X(a_j) = d_{a_j} =
    \left\lfloor \frac{b^{\sigma}_{a_j}}{\pi^{\sigma}} \right\rfloor > \frac{b^{\sigma}_{a_j}}{\pi^{\sigma}} -1,
    \end{equation*}
    and hence,
    \begin{equation*}
    b^{\sigma}_{a_j} < (X(a_j)+1)\pi^{\sigma}.
    \end{equation*}

    \item $D_{a_j} \neq D^+_{a_j}$: Observe that $a_j \notin E$ as $v_{a_j} > v_{a_i}$ and $a_i \notin E$.
    As $a_j\notin E$, the only reason that $D_{a_j} \neq D_{a_j}^+$ can be that the remaining budget of bidder $a_j$, $b^{\sigma}_{a_j}$, is
    an integer multiple of the current price $\pi^{\sigma}$. Then, $D_{a_j}^+ = D_{a_j}-1$ and by the same reason as above $D_{a_j} = \lfloor \frac{b^{\sigma}_{a_j}}{\pi^{\sigma}} \rfloor$. As $\lfloor \frac{b^{\sigma}_{a_j}}{\pi^{\sigma}} \rfloor =
    \frac{b^{\sigma}_{a_j}}{\pi^{\sigma}}$, it follows that
    \begin{equation*}
    X(a_j) = d_{a_j} \geq D_{a_j}^+ = D_{a_j} -1 = b^{\sigma}_{a_j}/\pi^{\sigma} -1,
    \end{equation*}
    and hence,
    \begin{equation*}
    b^{\sigma}_{a_j} \leq (X(a_j)+1)\pi^{\sigma}.
    \end{equation*}

  \end{itemize}

  Note that the current price $\pi^{\sigma}<v_{a_i}$ because we assume that $a_i$ was sold $r_i$ as a result of \textsc{Sell}($\{ a_i\}$) where $a_i$ is not an exiting bidder and not of \textsc{Sell}($E$). As $(a_i,r_i)$ was the first edge that was sold along $\sigma$, either $r_1$ was sold to $a_1$ for a price larger than $\pi^\sigma$, or $r_1$ was sold to $a_1$ at price $\pi^\sigma$ as a result of \textsc{Sell}($\{ a_1\}$) where $a_1$ is not an exiting bidder. Thus, $\pi^\sigma < v_{a_1}$.
  
By Condition~(c) of Lemma~\ref{lem:x} we can deduce that
\begin{equation*}
b^{\sigma}_{a_j} \leq (X(a_j) + 1) \pi^{\sigma} = (F^{\sigma}(a_j)+1)\pi^{\sigma}.
\end{equation*}
Bidder $a_j$ are sold exactly $F^{\sigma}(a_j)$ items at a price not lower than $\pi^{\sigma}$. Hence, at the end of the auction the remaining budget $b^*_{a_j}$ of
bidder $a_j$ is lesser than or equal to $\pi^{\sigma}$.  This contradicts the assumption that $\sigma$ is a trading path since
\begin{equation*}
b^*_{a_j} \leq \pi^{\sigma} <  v_{a_1}.\qedhere
\end{equation*}
\end{enumerate}
\end{proof}

\section{Proof of Theorem~7}

We show the impossibility result for the case of two bidders and two items. In order to demonstrate this theorem we use the following two lemmas. The first one follows closely the one of \cite{FLSS11}.

\begin{lem}\label{lem:imphelp1}

Consider any incentive compatible, Pareto optimal, bidder rational, and auctioneer rational multi-unit auction with marginal valuation functions that produces an allocation $(H, p)$. If bidder $1$ has positive valuations $v_1(1)$ and $v_1(2)$ and wins both items then the payment~$p_2$ by bidder~$2$ is zero.

\end{lem}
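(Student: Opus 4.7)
My plan is to pin $p_2$ at zero by sandwiching it between two elementary bounds. Since bidder~1 wins both items, bidder~2 wins none, and under the additive/marginal-valuation convention of Section~\ref{sec:def} his utility collapses to $u_2 = 0\cdot v_2 - p_2 = -p_2$. Bidder rationality ($u_2\geq 0$), which is among the listed hypotheses, then immediately gives $p_2 \leq 0$.

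For the matching lower bound $p_2 \geq 0$, I would invoke the ``no positive transfer'' condition that Section~\ref{sec:def} builds into the definition of every legal allocation. Since any auction output is by definition a legal allocation, this bound is automatic, and combining the two inequalities yields $p_2 = 0$. I do not expect any real obstacle here: the argument is a definition chase, and the role of the lemma is to serve as a bookkeeping step for the impossibility argument that follows, which needs the loser's payment pinned at zero so that it can reason purely about the winner's payment and budget.

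If one wanted to avoid appealing to ``no positive transfer'' and use only the four explicit hypotheses (incentive compatibility, Pareto optimality, bidder rationality, auctioneer rationality), the natural alternative would be an IC comparison between the actual run and a hypothetical run in which bidder~2 misreports $\tilde v_2\equiv 0$. Incentive compatibility in both directions forces the two payments of bidder~2 to coincide, and Pareto optimality in the hypothetical run together with the fact that bidder~1's valuations are strictly positive forces bidder~1 to still receive both items. The delicate point on this alternative route is that Pareto optimality alone does not force the hypothetical payment of bidder~2 to be nonnegative, because moving a negative payment back to the auctioneer strictly decreases bidder~2's (reported) utility; this is where the convention baked into Section~\ref{sec:def} is really doing the work, and why the direct two-inequality argument is the cleanest path.
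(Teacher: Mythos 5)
Your primary route leans on the ``no positive transfer'' condition ($p_i\geq 0$ for each bidder), but that condition is deliberately absent from the lemma's hypothesis list, and the paper's own proof never invokes it: the whole point of the lemma is to derive $p_2\geq 0$ from the four stated properties. (Note also that when the paper later imports Theorem~4.1 of \cite{DLNcorrected} it explicitly identifies DLN's ``no-positive-transfers'' with \emph{auctioneer rationality}, i.e.\ $\sum_i p_i\geq 0$, not with the per-bidder sign constraint.) So the two-inequality shortcut proves the statement only for a restricted class of mechanisms, which weakens the impossibility theorem it feeds into.

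Your alternative route is in fact the paper's argument, but you stop one step short and wrongly conclude that the gap cannot be closed without the per-bidder sign convention. The missing idea is this: in the hypothetical world where bidder~2 reports $\tilde v_2\equiv 0$, Pareto optimality gives both items to bidder~1 \emph{for any positive report of bidder~1}, so incentive compatibility forces $p_1=0$ there (otherwise bidder~1 could underreport and obtain the same two items more cheaply, using bidder rationality to bound his payment under the low report). Once $p_1=0$, bidder rationality gives $p_2\leq 0$ and \emph{auctioneer rationality} gives $p_2\geq -p_1=0$, so $p_2=0$ in the zero-report world. Your IC-in-both-directions comparison then transports $p_2=0$ back to the truthful run, exactly as the paper does. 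So the convention you thought was ``really doing the work'' is not needed; what is needed is the extra IC argument pinning down bidder~1's payment in the counterfactual run.
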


\begin{proof}
First, consider the case when $v_2(1)=0,v_2(2) = 0$. Then any Pareto optimal auction has to assign both items to bidder $1$. If any of the items were to be left 
unassigned,
or would be assigned to bidder $2$, we could assign it to bidder $1$, without changing any payment. This does not change the utility of bidder $2$, nor the utility of the auctioneer, but would 
strictly increase the utility of bidder $1$.
Because of incentive compatibility, bidder $1$ pays $p_1 = 0$. Otherwise, bidder $1$ could reduce his reported valuation and attain the item at a lower price. If follows from bidder
rationality that $p_2 \leq 0$. However, it follows from auctioneer rationality that bidder~1 must pay zero, as $-p_2 \leq p_1 = 0$.
Now, consider the case when both bidders have nonzero valuations. Then for every instance in which bidder $2$ gets no items it must be that $p_2 = 0$. By incentive compatibility his payment cannot depend on his
valuation, and when bidder $2$ reported a valuation of zero then $p_2$ was zero.
\end{proof}

\begin{lem}\label{lem:imphelp2}
Consider any incentive compatible, Pareto optimal, bidder rational and auctioneer rational multi-unit auction with budgets and marginal valuation functions: if $v_1(|R|) > v_2(1)$ and $\sum_{j=1}^{|R|} v_2(j) \leq b_1$, where $|R|$ is the number of items, then all items will be assigned to bidder $1$.
\end{lem}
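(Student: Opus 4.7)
The plan is to argue by contradiction: suppose the mechanism assigns bidder~$2$ a set of $k\ge 1$ items, so that bidder~$1$ receives $|R|-k$ items with payment $p_1$ and bidder~$2$ pays $p_2$, and I will exhibit a Pareto-superior allocation in which bidder~$1$ receives all $|R|$ items. Write $V_2 := \sum_{j=1}^{k} v_2(j)$ and $V_1^{\Delta} := \sum_{j=|R|-k+1}^{|R|} v_1(j)$. The hypothesis $v_1(|R|)>v_2(1)$ combined with diminishing marginals gives $v_1(j)>v_2(j')$ for every pair $j,j'$, and in particular $V_1^{\Delta}>V_2$.

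The candidate Pareto-superior allocation $(H',p')$ assigns all items to bidder~$1$ with $p'_1 := p_1 + V_2$ and $p'_2 := 0$. Under $(H',p')$, bidder~$1$'s utility changes by $V_1^{\Delta}-V_2>0$ (strict), the auctioneer's revenue changes by $V_2-p_2\ge 0$, and bidder~$2$'s utility changes by $p_2-V_2\le 0$. Two conditions then need to be discharged: (i)~feasibility $p'_1 = p_1+V_2\le b_1$; (ii)~bidder~$2$ is not worse off, equivalently $p_2 = V_2$.

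For (i), I plan to invoke incentive compatibility for bidder~$1$ via a misreport $\tilde v_1$ obtained by shifting every marginal of $v_1$ upward by a large constant $M$. Under $(\tilde v_1,v_2)$ the lemma's hypotheses persist, and a direct Pareto-optimality argument (moving any item from bidder~$2$ to bidder~$1$ becomes overwhelmingly beneficial once $M$ dominates all other quantities) forces bidder~$1$ to receive all items under the misreport at some payment $\tilde p_1\le b_1$. Comparing bidder~$1$'s true utility under the truthful and misreported reports via IC then yields $\tilde p_1\ge p_1+V_1^{\Delta}$, hence $p_1+V_1^{\Delta}\le b_1$; since $V_1^{\Delta}>V_2$ this gives $p_1+V_2<b_1$, establishing~(i).

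The main obstacle is~(ii). The case $p_2=V_2$ produces the Pareto improvement immediately and closes the argument; the difficulty is the residual case $p_2<V_2$, where bidder~$2$ enjoys strictly positive utility $V_2-p_2$, the naive trade makes him strictly worse off, and the no-positive-transfer rule forbids compensating him via a payment. My plan here is to combine the IC-bound on $p_1$ from~(i) with a second IC argument applied to bidder~$2$: consider the misreport $v'_2$ that zeros out bidder~$2$'s $k$-th marginal but leaves the first $k-1$ marginals intact. Pareto optimality on $(v_1,v'_2)$ forces bidder~$2$ to win at most $k-1$ items there (handing him a zero-valued marginal item is strictly dominated by giving it to bidder~$1$), and the two-sided IC inequalities relating $p_2$ to bidder~$2$'s payment in the new instance can be chained with an iterative one-item-at-a-time trade to force $p_2\ge V_2$, closing the argument. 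Carrying out this recursion cleanly---and in particular ruling out a mechanism that leaves bidder~$2$ with strictly positive utility and some items---is where I expect the bulk of the technical work to lie.
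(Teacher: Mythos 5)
There is a genuine gap, and it sits exactly where you flag it: the residual case $p_2<V_2$. The paper's proof avoids this case entirely by not using your candidate improvement. Instead of stripping bidder~2 of all his items and zeroing his payment, the paper moves a \emph{single} item from bidder~2 to bidder~1 and compensates bidder~2 by lowering his payment by his marginal value for that item, i.e.\ it exhibits a trading path from bidder~2 to bidder~1 and invokes the characterization of Theorem~\ref{thm:paropt}: since $v_1(|R|)>v_2(1)$ and (using $\sum_{j}v_2(j)\le b_1$ together with a bound on $p_1$) the leftover budget satisfies $b^*_1\ge v_2(1)$, such a trading path exists whenever bidder~2 holds an item, contradicting Pareto optimality. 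The point you miss is that the no-positive-transfer axiom constrains the \emph{mechanism's} output, not the hypothetical dominating allocation in the Pareto comparison: the paper's own proof of Theorem~\ref{thm:paropt} freely lowers the losing bidder's payment to keep his utility constant while raising the gaining bidder's payment by the same amount so the auctioneer is whole. Once you allow this, your ``main obstacle'' dissolves (set $p_2'=p_2-V_2$, $p_1'=p_1+V_2$), and the recursive marginal-zeroing argument you sketch --- which you yourself do not carry out --- is unnecessary. As sketched, that recursion is not a proof.

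Your step (i) also has a circularity. To force bidder~1 to win everything under the inflated report $\tilde v_1=v_1+M$ you appeal to ``a direct Pareto-optimality argument'' for large $M$, but that claim is essentially the lemma itself: with budgets, Pareto optimality alone never forces the high-value bidder to win, no matter how large his valuation, because the binding constraint is his remaining budget, not his value (e.g.\ $v_1=M$, $b_1=1$, $v_2=2$ on one item: selling to bidder~2 is Pareto optimal). So the misreport instance needs the same budget-based argument you are trying to establish. The paper instead asserts directly that $p_1\le\sum_{j=2}^{|R|}v_2(j)$ (from incentive compatibility and bidder rationality), which with the hypothesis $\sum_{j=1}^{|R|}v_2(j)\le b_1$ yields $b^*_1\ge v_2(1)$ --- the only quantitative fact the trading-path argument needs. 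In short: the correct skeleton is a one-item exchange with a compensating payment adjustment plus the bound $b^*_1\ge v_2(1)$, not a wholesale reallocation, and neither of your two open steps closes as proposed.
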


\begin{proof}
The lemma follows directly from the definition of Pareto optimality. If an item is assigned to bidder $2$, it follows from the assumptions that:
\begin{itemize}

\item $v_1(|R|) > v_2(1)$

\item $\sum_{j=1}^{|R|} v_2(j) \leq b_1 \Rightarrow \sum_{j=1}^{|R|} v_2(j) - \sum_{j=2}^{|R|} v_2(j) \leq b_1 - \sum_{j=2}^{|R|} v_2(j) \Rightarrow b^*_1 \geq v_2(1)$

\end{itemize}

The last implication follows from incentive compatibility and bidder rationality. This implies the existence of a trading path from bidder $2$ to bidder $1$ and from Theorem \ref{thm:paropt} we know that the allocation is not Pareto optimal. It follows that no item can be assigned to bidder~2.  
\end{proof}

Moreover, we will use the following theorem from \cite{DLNcorrected}. Please note that their definition of {\em individual-rationality} corresponds to our definition of {\em bidder rationality} and their definition of {\em no-positive-transfers} corresponds to our definition of {\em auctioneer rationality}.

\begin{thm}[Theorem 4.1 in \cite{DLNcorrected}]\label{thm:uniqueness}
Let $A$ be a deterministic truthful mechanism for $m$ items and $2$ players with known budgets $b_1$ and $b_2$ that are generic. Assume that $A$ satisfies Pareto-optimality, individual-rationality, and no-positive-transfers. Then if $v_1 \neq v_2$ the outcome of $A$ coincides with that of the clinching auction.
\end{thm}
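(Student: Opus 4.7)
The plan is to derive the impossibility by exhibiting a deviation by bidder~2 in a carefully chosen two-bidder, two-item instance where an allegedly incentive compatible, individually rational, Pareto optimal mechanism must give bidder~2 strictly positive utility after the deviation and zero utility under truth-telling. The argument combines Lemma~\ref{lem:imphelp1} and Lemma~\ref{lem:imphelp2} (to pin down the outcome when bidder~2 reports truthfully) with the DLN uniqueness result (Theorem~\ref{thm:uniqueness}) (to pin down the outcome when bidder~2's reported profile is made additive).

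First I would choose concrete public budgets and a diminishing marginal profile for bidder~2 together with a strictly larger, constant marginal profile for bidder~1 so that the hypotheses of Lemma~\ref{lem:imphelp2} are satisfied under truth-telling; a convenient choice is $v_1(1)=v_1(2)=5$, $b_1=3$, $v_2(1)=3$, $v_2(2)=0$, and any sufficiently large, generic $b_2$ (say $b_2=10$). Here $v_1(2)=5>3=v_2(1)$ and $v_2(1)+v_2(2)=3\le b_1$, so Lemma~\ref{lem:imphelp2} forces both items to be allocated to bidder~1 under truthful reporting; Lemma~\ref{lem:imphelp1} then fixes bidder~2's payment at zero, so bidder~2's truthful utility is $0$.

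Next I would consider the deviation in which bidder~2 reports $\tilde v_2(2):=v_2(1)=3$, overbidding on the second marginal value so that his reported profile is additive with per-unit value~$3$. The reported instance is now a multi-unit auction with additive per-unit valuations $v_1\neq\tilde v_2$ and public, generic budgets, so Theorem~\ref{thm:uniqueness} forces the mechanism's outcome on this reported instance to equal that of the DLN clinching auction. A direct trace of the clinching auction on the chosen parameters shows: up to price $1.5$ both bidders demand two items; at price $\pi=b_1/2=1.5$ bidder~1's budget binds, his demand drops to one, and bidder~2 clinches one item at price $1.5$; the price then rises until bidder~2 exits at price $3$, at which moment bidder~1 clinches the remaining item at price $3$ (exhausting his budget). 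Thus under the deviating report bidder~2 receives exactly one item at price $1.5$.

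Consequently bidder~2's true utility under the deviation is $v_2(1)\cdot 1+v_2(2)\cdot 0-1.5=1.5>0$, strictly larger than his truthful utility of $0$. This contradicts incentive compatibility and completes the argument. The main obstacle in writing the proof in full is (i) verifying the exact clinching trajectory on the additive reported instance, in particular confirming that every item is actually sold and that the allocation is the unique clinching outcome, and (ii) confirming that the chosen budgets satisfy the genericity hypothesis of Theorem~\ref{thm:uniqueness}; both are routine once the parameters are fixed, and a small perturbation of $b_1$ or $b_2$ can be used if needed to avoid any coincidence of critical prices.
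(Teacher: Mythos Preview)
Your proposal does not address the stated theorem. Theorem~\ref{thm:uniqueness} is the uniqueness result of Dobzinski--Lavi--Nisan (Theorem~4.1 in~\cite{DLNcorrected}), which the present paper merely \emph{cites} and does not prove; there is no ``paper's own proof'' of it to compare against. What you have written is instead a proof of Theorem~\ref{thm:impossibility2}, the impossibility result for private diminishing marginal valuations, which \emph{invokes} Theorem~\ref{thm:uniqueness} as a black box.

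Viewed as a proof of Theorem~\ref{thm:impossibility2}, your argument follows exactly the paper's template: pin down the truthful outcome via Lemmas~\ref{lem:imphelp1} and~\ref{lem:imphelp2} (bidder~2 gets nothing and pays nothing), pin down the outcome under an additive overreport via Theorem~\ref{thm:uniqueness} and a clinching-auction trace (bidder~2 gets one item at price $1.5$), and compare utilities. The only difference is the numerical instance: you take $v_2=(3,0)$, $b_2=10$ where the paper takes $v_2=(2,1)$, $b_2=11$. Two small caveats on your particular choice: (i) the paper defines marginal valuations as maps into $\mathbb{R}_+$, so depending on convention $v_2(2)=0$ may be inadmissible; (ii) your parameters create the coincidence $b_1=\tilde v_2=3$ in the reported instance and the tight equality $v_2(1)+v_2(2)=b_1$ in the truthful one, so the perturbation you allude to must move $b_1$ \emph{upward} (not downward) to preserve the hypothesis of Lemma~\ref{lem:imphelp2}. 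Both points are easily repaired, and the paper's specific numbers already avoid them.
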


We are now ready to prove the main theorem:

\begin{proof}[\bf Proof of Theorem \ref{thm:impossibility2}]
Consider the case of  two bidders and two items such that: $v_1(1) = 5$, $v_1(2) = 5$, $b_1 = 3$, $v_2(1) = 2$, $v_2(2) = 2$, and $b_2 = 11$.
From the Theorem \ref{thm:uniqueness} we know that the outcome has to be equal to the outcome of the clinching auction of \cite{DBLP:conf/focs/DobzinskiLN08}. Thus, both  bidders receive one item at prices $p_1 = 2$ and $p_2 = 1.5$ and the utility for the two bidders are $u_1 = 5 - 2 = 3$ and $u_2 = 2 - 1.5 = 0.5$.

Now, assume that the true marginal valuations for bidder $2$ are $v_2(1) = 2$ and $v_2(2) = 1$. It follows from Lemma \ref{lem:imphelp2} that all items are assigned to bidder $1$.  Thus, from Lemma \ref{lem:imphelp1}, the payment of bidder $2$ will be zero and his utility will be zero too.  We conclude that bidder $2$ has an incentive to lie in his marginal valuation function.
\end{proof}

\end{document}